\newtheorem{definition}{Definition}
\newtheorem{theorem}{Theorem}
\newtheorem{lemma}{Lemma}
\newtheorem{example}{Example}
\newtheorem{remark}{Remark}
\tikzstyle{State}=[circle, thick, minimum size=0.6cm, inner sep=0cm,draw=black]
\tikzstyle{BState}=[circle, thick, minimum size=0.8cm, inner sep=0cm,draw=black]
\tikzstyle{RState}=[circle, very thick, minimum size=0.8cm, inner sep=0cm,draw=red]
\newcommand{\Inf}{\mathrm{Inf}}
\newcommand{\Supp}{\mathrm{Supp}}
\newcommand{\game}{G}
\newcommand{\Act}{A}
\newcommand{\trans}{\delta}
\newcommand{\Obs}{\mathcal{O}}
\newcommand{\obsmap}{\gamma}
\newcommand{\distr}{\mathcal{D}}
\newcommand{\priority}{D}
\newcommand{\straa}{\sigma}
\newcommand{\powset}{\mathcal{P}}
\newcommand{\restr}{\upharpoonright}
\newcommand{\Rec}{\mathsf{Rec}}
\newcommand{\BoolRec}{\mathsf{BoolRec}}
\newcommand{\SetRec}{\mathsf{SetRec}}
\newcommand{\proj}{\mathsf{Proj}}
\newcommand{\ov}{\overline}
\newcommand{\set}[1]{\{#1\}}
\newcommand{\PG}{\mathsf{PrGr}}
\newcommand{\projp}{\mathit{proj}}
\newcommand{\Succ}{\mathsf{Succ}}
\newcommand{\calf}{\mathcal{F}}
\newcommand{\Mem}{\mathsf{Mem}}
\newcommand{\wh}{\widehat}
\title{What is Decidable about Partially Observable Markov Decision Processes with $\omega$-Regular Objectives \\(Full Version)}
\author{Krishnendu Chatterjee \and Martin Chmelik \and Mathieu Tracol}
\date{(IST Austria)} 
\begin{document}

\maketitle



\begin{abstract}
We consider partially observable Markov decision processes (POMDPs) with 
$\omega$-regular conditions specified as parity objectives.
The class of $\omega$-regular languages extends regular languages to 
infinite strings and provides a robust specification language to express
all properties used in verification, and parity objectives are canonical
forms to express $\omega$-regular conditions.
The qualitative analysis problem given a POMDP and a parity 
objective asks whether  there is a strategy to ensure that the objective is 
satisfied with probability~1 (resp. positive probability).
While the qualitative analysis problems are known to be undecidable even for 
very special cases of parity objectives, we establish decidability (with 
optimal complexity) of the qualitative analysis problems for POMDPs with all 
parity objectives under finite-memory strategies.
We establish optimal (exponential) memory bounds 
and EXPTIME-completeness of the qualitative analysis problems under 
finite-memory strategies for POMDPs with parity objectives.
\end{abstract}

\noindent{\em Keywords: Markov decision processes; partially observable 
Markov decision processes (POMDPs); $\omega$-regular conditions; parity objectives;
finite-memory strategies.}

\section{Introduction}

\noindent{\bf Partially observable Markov decision processes (POMDPs).}
\emph{Markov decision processes (MDPs)} are standard models for 
probabilistic systems that exhibit both probabilistic 
and nondeterministic behavior~\cite{Howard}.
MDPs have been used to model and solve control problems for stochastic 
systems~\cite{FV97}: 
nondeterminism represents the freedom of the controller to choose a 
control action, while the probabilistic component of the behavior describes the 
system response to control actions. 
In \emph{perfect-observation (or perfect-information) MDPs (PIMDPs)} the 
controller can observe the current state of the system to choose the next 
control actions, whereas in \emph{partially observable MDPs (POMDPs)} the state space is 
partitioned according to observations that the controller can observe 
i.e., given the current state, the controller can only view the observation 
of the state (the partition the state belongs to), but not the precise state~\cite{PT87}.
POMDPs provide the appropriate model to study a wide variety of applications 
such as in computational biology~\cite{Bio-Book}, 
speech processing~\cite{Mohri97}, image 
processing~\cite{IM-Book}, software verification~\cite{CCHRS11}, 
robot planning~\cite{KGFP09}, reinforcement learning~\cite{LearningSurvey}, 
to name a few.
In verification of probabilistic systems, MDPs have been adopted as models 
for concurrent probabilistic 
systems~\cite{CY95},  probabilistic systems operating in open 
environments~\cite{SegalaT}, under-specified probabilistic 
systems~\cite{BdA95}, and applied in diverse domains~\cite{BaierBook,PRISM}.
POMDPs also subsume many other powerful computational models such as 
probabilistic automata~\cite{Rabin63,PazBook} (since probabilistic automata 
(aka blind POMDPS) are a special case of POMDPs where there is only a 
single observation).

\smallskip\noindent{\bf The class of $\omega$-regular objectives.}
An objective specifies the desired set of behaviors (or paths) for the 
controller.
In verification and control of stochastic systems an objective is 
typically an $\omega$-regular set of paths. 
The class of $\omega$-regular languages extends classical regular languages to 
infinite strings, and provides a robust specification language to express
all commonly used specifications, such as safety, reachability, 
liveness, fairness, etc~\cite{Thomas97}.
In a parity objective, every state of the MDP is mapped to a non-negative 
integer priority (or color) and the goal is to ensure that the minimum 
priority visited infinitely often is even.
Parity objectives are a canonical way to define such $\omega$-regular 
specifications (e.g., all specifications in verification expressed as a 
linear-time temporal logic (LTL) formula can be translated to a parity 
objective).
Thus POMDPs with parity objectives provide the theoretical framework to 
study problems such as the verification and control of stochastic systems.

\smallskip\noindent{\bf Qualitative and quantitative analysis.} 
The analysis of POMDPs with parity objectives can be classified into  
qualitative and quantitative analysis. 
Given a POMDP with a parity objective and a start state, the 
\emph{qualitative analysis} asks whether the objective can be ensured with 
probability~1 (\emph{almost-sure winning}) or 
positive probability (\emph{positive winning});
whereas the \emph{quantitative analysis} asks whether the objective can 
be satisfied with probability at least $\lambda$ for a given threshold 
$\lambda \in (0,1)$.

\smallskip\noindent{\bf Importance of qualitative analysis.} 
The qualitative analysis of MDPs is an important problem in verification 
that is of interest independent of the quantitative analysis problem.
There are many applications where we need to know whether the correct 
behavior arises with probability~1.
For instance, when analyzing a randomized embedded scheduler, we are
interested in whether every thread progresses with probability~1~\cite{EMSOFT05}.
Even in settings where it suffices to satisfy certain specifications with 
probability $\lambda<1$, the correct choice of $\lambda$ is a challenging 
problem, due to the simplifications introduced during modeling.
For example, in the analysis of randomized distributed algorithms it is 
quite common to require correctness with probability~1 
(see, e.g.,~\cite{PSL00,KNP_PRISM00,Sto02b}). 
Furthermore, in contrast to quantitative analysis, 
qualitative analysis is robust to numerical perturbations and modeling errors in the 
transition probabilities.
Thus qualitative analysis of POMDPs with parity objectives is a 
fundamental theoretical problem in verification and analysis of 
probabilistic systems.

\smallskip\noindent{\bf Previous results.}
On one hand POMDPs with parity objectives provide a rich framework to model 
a wide variety of practical problems, on the other hand, most theoretical 
results established for POMDPs are \emph{negative} (undecidability) results.
There are several deep undecidability results established for the 
special case of probabilistic automata (that immediately imply undecidability
for the more general case of POMDPs).
The basic undecidability results are for probabilistic automata over
finite words (that can be considered as a special case of parity objectives).
The quantitative analysis problem is undecidable for probabilistic automata 
over finite words~\cite{Rabin63,PazBook,CL89}; and it was shown in~\cite{MHC03} 
that even the following approximation version is undecidable: for 
any fixed $0<\epsilon <\frac{1}{2}$, given a probabilistic automaton and 
the guarantee that either (a)~there is a word accepted with probability at 
least $1-\epsilon$; or 
(ii)~all words are accepted with probability at most $\epsilon$;
decide whether it is case~(i) or case~(ii).
The almost-sure (resp. positive) problem for probabilistic automata over 
finite words reduces to the non-emptiness question of universal 
(resp. non-deterministic) automata over finite words and 
is PSPACE-complete (resp. solvable in polynomial time).
However, another related decision question whether for every $\epsilon>0$
there is a word that is accepted with probability at least 
$1-\epsilon$ (called the value~1 problem) is undecidable for probabilistic 
automata over finite words~\cite{GO10}.
Also observe that all undecidability results for probabilistic automata over
finite words carry over to POMDPs where the controller is restricted to 
finite-memory strategies.
In~\cite{Meu99}, the authors consider POMDPs with finite-memory strategies 
under expected rewards, but the general problem remains undecidable.
For qualitative analysis of POMDPs with parity objectives, deep undecidability 
results were established even for very special cases of parity objectives 
(even in the special case of probabilistic automata).
It was shown in~\cite{BBG08,BGB12} that the almost-sure (resp. positive) problem is 
undecidable for probabilistic automata with coB\"uchi (resp. B\"uchi) objectives
which are special cases of parity objectives that use only two priorities.
In summary the most important theoretical results are negative in the 
sense that they establish undecidability results.

\smallskip\noindent{\bf Our contributions.}
For POMDPs with parity objectives, 
all questions related to quantitative analysis are undecidable, and the 
qualitative analysis problems are also undecidable in general. 
However, the undecidability proofs for the qualitative analysis of 
POMDPs with parity objectives crucially require the use of 
\emph{infinite-memory} strategies for the controller. 
In all practical applications, 
the controller must be a \emph{finite-state} controller to be implementable.
Thus for all practical purposes the relevant question is the existence of 
finite-memory controllers.
The quantitative analysis problem remains undecidable even under finite-memory
controllers as the undecidability results are established for
probabilistic automata over finite words.
In this work we study the most prominent remaining theoretical open question
(that is also of practical relevance)  for POMDPs with parity objectives that 
whether the qualitative analysis of POMDPs 
with parity objectives is decidable or undecidable for finite-memory 
strategies (i.e., finite-memory controllers).
Our main result is the \emph{positive} result that qualitative 
analysis of POMDPs with parity objectives is \emph{decidable} under
finite-memory strategies.
Moreover, for qualitative analysis of POMDPs with parity objectives under
finite-memory strategies  
we establish optimal complexity bounds both for strategy 
complexity as well as computational complexity.
The details of our contributions are as follows:
\begin{enumerate}
\item \emph{(Strategy complexity).}
Our first result shows that \emph{belief-based} strategies 
are not sufficient (where a belief-based strategy is based
on the subset construction that remembers the possible set 
of current states): we show that there exist POMDPs with coB\"uchi 
objectives where finite-memory almost-sure winning strategy 
exists but there exists no randomized belief-based almost-sure winning 
strategy. 
All previous results about decidability for almost-sure winning in sub-classes 
of POMDPs crucially relied on the sufficiency of randomized belief-based 
strategies that allowed standard 
techniques like subset construction to establish decidability.
However, our counter-example shows that previous techniques based on 
simple subset construction (to construct an exponential size PIMDP) are not adequate to solve 
the problem.   
Before the result for parity objectives, we consider a slightly more general
form of objectives, called Muller objectives.
For a Muller objective a set $\calf$ of subsets of colors is given and 
the set of colors visited infinitely often must belong to $\calf$.
We show our main result that given a POMDP with $|S|$ 
states and a Muller objective with $d$ colors (priorities), if there 
is a finite-memory almost-sure (resp. positive) winning strategy, 
then there is an almost-sure (resp. positive) winning 
strategy that uses at most $\Mem^*=2^{2\cdot |S|} \cdot (2^{2^d})^{|S|}$ memory.
Developing on our result for Muller objectives, 
for POMDPs with parity objectives we show that if there is a 
finite-memory almost-sure (resp. positive) winning strategy, 
then there is an almost-sure (resp. positive) winning 
strategy that uses at most $2^{3\cdot d\cdot |S|}$ memory.
Our exponential memory upper bound for parity objectives is 
optimal as it has been already established in~\cite{CDH10a} that 
almost-sure winning strategies require at least exponential
memory even for the very special case of reachability objectives 
in POMDPs.

\item \emph{(Computational complexity).} 
We present an exponential time algorithm for the qualitative analysis of
POMDPs with parity objectives under finite-memory strategies, and thus 
obtain an EXPTIME upper bound.
The EXPTIME-hardness follows from~\cite{CDH10a} for the special case of
reachability and safety objectives, and thus we obtain the 
optimal EXPTIME-complete computational complexity result.
\footnote{ Recently, Nain and Vardi (personal communication, to appear 
LICS 2013) considered the finite-memory strategies problem for one-sided
partial-observation games and established 2EXPTIME upper bound.
Our work is independent and establishes optimal (EXPTIME-complete) 
complexity bounds for POMDPs.}
\end{enumerate}
In Table~\ref{tab:almost_str} and Table~\ref{tab:almost_comp} we summarize the 
results for strategy complexity and computational complexity, respectively.

\smallskip\noindent{\em Technical contributions.}
The key technical contribution for the decidability result is as follows.
Since belief-based strategies are not sufficient, standard subset 
construction techniques do not work.
For an arbitrary finite-memory strategy we construct a projected 
strategy that collapses memory states based on a projection graph 
construction given the strategy. 
The projected strategy at a collapsed memory state plays uniformly 
over actions that were played at all the corresponding memory states 
of the original strategy.
The projected strategy thus plays more actions with positive
probability.
The key challenge is to show the bound on the size of
the projection graph, and to show that the projected strategy,
even though plays more actions, does not destroy the structure 
of the recurrent classes of the original strategy.
For parity objectives, we show a reduction from general parity objectives
to parity objectives with two priorities on a polynomially larger
POMDP and from our general result for Muller objectives obtain the 
optimal memory complexity bounds for parity objectives.
For the computational complexity result, we show how to construct an
exponential size special class of POMDPs (which we call belief-observation
POMDPs where the belief is always the current observation) and
present polynomial time algorithms for the qualitative analysis of
the special belief-observation POMDPs of our construction.


\begin{table}[h]
\begin{center}
\resizebox{\linewidth}{!}{
\begin{tabular}{|c|>{\centering\arraybackslash}p{2.0cm}|>{\centering\arraybackslash}p{2.9cm}|>{\centering\arraybackslash}p{2.0cm}|>{\centering\arraybackslash}p{3.3cm}|>{\centering\arraybackslash}p{1.6cm}|>{\centering\arraybackslash}p{1.7cm}|}
\hline
\multirow{2}{*}{Objectives} & \multicolumn{2}{c|}{Almost-sure } &
\multicolumn{2}{c|}{Positive}
&\multicolumn{2}{c|}{Quantitative}  \\
\cline{2-7}
 & Inf. Mem. &  Fin. Mem. & Inf. Mem. &  Fin. Mem. & Inf. Mem. &  Fin. Mem.  \\
\hline
\hline
\multirow{2}{*}{B\"uchi}  & &  & & UB: {\bf Exp. ${\mathbf{2^{6\cdot |S|}}}$} & & \\
& Exp. (belief-based) & Exp. (belief-based) & Inf. mem.  req. & LB: Exp. \hspace{1cm} {\bf
(belief not sufficient)}  & Inf. mem. req. & No bnd.\\

\hline
\multirow{2}{*}{coB\"uchi} & & UB: {\bf Exp.  ${\mathbf{2^{6\cdot |S|}}}$ } & UB: Exp. & UB: Exp. & &\\
&  Inf. mem. req.& LB: Exp. \hspace{1cm} {\bf (belief not sufficient)} & LB: Exp. {\bf (belief not
sufficient)} & LB: Exp.\hspace{1cm} {\bf (belief not sufficient)} & Inf. mem. req. & No bnd.\\
\hline
\multirow{2}{*}{Parity}    & & UB: {\bf Exp.
${\mathbf{2^{3 \cdot d \cdot |S|}}}$} &  & UB: {\bf Exp.
${\mathbf{2^{3 \cdot d \cdot |S|}}}$} & & \\
& Inf. mem. req. & LB: Exp. \hspace{1cm} {\bf (belief not sufficient)} & Inf.
mem. req.& LB: Exp. \hspace{1cm} {\bf
(belief not sufficient)}  & Inf. mem. req.& No  bnd.\\
\hline
\hline
\end{tabular}
}
\end{center}

\caption{Strategy complexity for POMDPs with parity objectives, where $|S|$ is the size of state space, and $d$ the number of 
priorities, (UB denotes upper bound and LB denotes lower bound). 
The results in boldface are new results included in the present paper.}\label{tab:almost_str}
\end{table}

\begin{table}[h]
\begin{center}
\resizebox{\linewidth}{!}{
\begin{tabular}{|c|>{\centering\arraybackslash}p{2.4cm}|>{\centering\arraybackslash}p{2.4cm}|>{\centering\arraybackslash}p{2.4cm}|>{\centering\arraybackslash}p{2.4cm}|>{\centering\arraybackslash}p{1.6cm}|>{\centering\arraybackslash}p{1.6cm}|}
\hline
\multirow{2}{*}{Objectives} & \multicolumn{2}{c|}{Almost-sure } &
\multicolumn{2}{c|}{Positive}
&\multicolumn{2}{c|}{Quantitative}  \\
\cline{2-7}
 & Inf. Mem. &  Finite Mem. & Inf. Mem. &  Finite Mem. & Inf. Mem. &  Finite Mem.  \\
\hline
\hline
B\"uchi  & EXP-complete & EXP-complete & Undec. & {\bf EXP-complete} & Undec. & Undec.\\
\hline
coB\"uchi & Undec. & {\bf EXP-complete} & EXP-complete & EXP-complete &
Undec. & Undec.\\
\hline
Parity    & Undec. & {\bf EXP-complete} & Undec. & {\bf EXP-complete}
& Undec. & Undec.\\

\hline
\end{tabular}
}
\end{center}
\caption{Computational complexity for POMDPs with parity objectives.
The results in boldface are new results included in the present paper.}\label{tab:almost_comp}

\end{table}

\newcommand{\Prb}{\mathbb{P}}
\newcommand{\Cone}{\mathsf{Cone}}
\newcommand{\Last}{\mathsf{Last}}
\newcommand{\Safe}{\mathsf{Safe}}
\newcommand{\Reach}{\mathsf{Reach}}
\newcommand{\Buchi}{\mathsf{Buchi}}
\newcommand{\coBuchi}{\mathsf{coBuchi}}
\newcommand{\Parity}{\mathsf{Parity}}
\newcommand{\Muller}{\mathsf{Muller}}
\newcommand{\col}{\mathsf{col}}

\newcommand{\Outcome}{{\mathsf{Outcome}}}
\newcommand{\target}{{\cal T}}
\newcommand{\nats}{\mathbb{N}} \newcommand{\Nats}{\mathbb{N}}
\newcommand{\reals}{\mathbb{R}} \newcommand{\Reals}{\mathbb{R}}
\newcommand{\nat}{\mathbb N} 
\newcommand{\rat}{{\mathbb Q}}
\newcommand{\D}{\mathbf{\mathfrak{D}}}
\newcommand{\C}{\mathcal{C}}
\newcommand{\B}{\mathcal{B}}

\newcommand{\slopefrac}[2]{\leavevmode\kern.1em
  \raise .5ex\hbox{\the\scriptfont0 #1}\kern-.1em
  /\kern-.15em\lower .25ex\hbox{\the\scriptfont0 #2}}
\newcommand{\half}{\slopefrac{1}{2}}

\section{Definitions}
In this section we present the basic definitions of POMDPs, strategies 
(policies), $\omega$-regular objectives, and the winning modes.

\smallskip\noindent{\bf Notations.}
Given a finite set $X$, we denote by $\powset(X)$ the set of subsets of $X$,
i.e., $\powset(X)$ is the power set of $X$.
A probability distribution $f$ on $X$ is a function $f:X \to [0,1]$ such 
that $\sum_{x\in X} f(x)=1$, and we denote by  $\distr(X)$ the set of 
all probability distributions on $X$. For $f \in \distr(X)$ we denote by $\Supp(f)=\set{x\in X \mid f(x)>0}$
the support of $f$.

\begin{definition}[POMDP]
A \emph{Partially Observable Markov Decision Process (POMDP)} is a 
tuple $\game=(S,\Act,\trans,\Obs,\obsmap,s_0)$ where:
\begin{itemize}
 \item $S$ is a finite set of states;
 \item $\Act$ is a finite alphabet of \emph{actions};
 \item $\trans:S\times\Act \rightarrow \distr(S)$ is a 
 \emph{probabilistic transition function} that given a state $s$ and an
 action $a \in \Act$ gives the probability distribution over the successor 
 states, i.e., $\trans(s,a)(s')$ denotes the transition probability from state
 $s$ to state $s'$ given action $a$; 
 \item $\Obs$ is a finite set of \emph{observations}; 
 \item $\obsmap:S\rightarrow \Obs$ is an \emph{observation function} that 
  maps every state to an observation; and 
 \item $s_0$ is the initial state. 
\end{itemize}
\end{definition}
\noindent
Given $s,s'\in S$ and $a\in\Act$, we also write $\trans(s'|s,a)$ for 
$\trans(s,a)(s')$.
For an observation $o$, 
we denote by $\obsmap^{-1}(o)=\set{s \in S \mid \obsmap(s)=o}$
the set of states with observation $o$.
For a set $U \subseteq S$ of states and $O \subseteq \Obs$ of observations 
we denote 
$\obsmap(U)=\set{o \in \Obs \mid \exists s \in U. \ \obsmap(s)=o}$ 
and $\obsmap^{-1}(O)= \bigcup_{o \in O} \obsmap^{-1}(o)$.

\begin{remark}
For technical convenience we have assumed that there is an unique initial 
state and we will also assume that the initial state $s_0$ has a unique 
observation, i.e., $|\obsmap^{-1}(\obsmap(s_0))|=1$.
In general there is an initial distribution $\alpha$ over initial states
that all have the same observation, i.e., $\Supp(\alpha) \subseteq 
\obsmap^{-1}(o)$, for some $o \in \Obs$.
However, this can be modeled easily by adding a new initial state 
$s_{\mathit{new}}$ with a unique observation such that in the first step gives 
the desired initial probability distribution $\alpha$, 
i.e., $\trans(s_{\mathit{new}},a)=\alpha$ for all actions $a \in \Act$.
Hence for simplicity we assume there is a unique initial state $s_0$ with a
unique observation.
\end{remark}

\smallskip\noindent{\bf Plays, cones and belief-updates.}
A \emph{play} (or a path) in a POMDP is an 
infinite sequence $(s_0,a_0,s_1,a_1,s_2,a_2,\ldots)$ of states and 
actions such that for all $i \geq 0$ we have $\trans(s_i,a_i)(s_{i+1})>0$.  
We write $\Omega$ for the set of all plays.
For a finite prefix $w \in (S\cdot A)^* \cdot S$ of a play, we denote by $\Cone(w)$ the 
set of plays with $w$ as the prefix (i.e., the cone or cylinder of the prefix $w$), 
and denote by $\Last(w)$ the last state of $w$.
For a finite prefix $w=(s_0,a_0,s_1,a_1,\ldots,s_n)$ 
we denote by 
$\obsmap(w)=(\obsmap(s_0),a_0,\obsmap(s_1),a_1,\ldots,\obsmap(s_n))$ 
the observation and action sequence associated with $w$.
For a finite sequence $\rho=(o_0,a_0,o_1,a_1,\ldots,o_n)$ of observations and actions, the \emph{belief} $\B(\rho)$ 
after the prefix $\rho$ is the set of states in which a finite prefix 
of a play can be after the sequence $\rho$ of observations and actions, 
i.e., $\B(\rho)=\set{s_n=\Last(w) \mid w=(s_0,a_0,s_1,a_1,\ldots,s_n), w \mbox{
is a prefix of a play, and for all } 0\leq i \leq n. \; \obsmap(s_i)=o_i}$. 
The belief-updates associated with finite-prefixes are as follows:
for prefixes $w$ and $w'=w \cdot a \cdot s$ the belief update is 
defined inductively as 
$\B(\obsmap(w')) = 
\left(\bigcup_{s_1 \in \B(\obsmap(w))} \Supp(\trans(s_1,a)) \right)
\cap \obsmap^{-1}(s)$, 
i.e., the set
$\left(\bigcup_{s_1 \in \B(\obsmap(w))} \Supp(\trans(s_1,a)) \right)$
denotes the possible successors given the belief $\B(\obsmap(w))$ and 
action $a$, and then the intersection with the set of states with the 
current observation $\gamma(s)$ gives the new belief set.

\smallskip\noindent{\bf Strategies.}
A \emph{strategy (or a policy)} is a recipe to extend prefixes of plays and 
is a function $\sigma: (S\cdot A)^* \cdot S \to \distr(A)$ that given a finite 
history (i.e., a finite prefix of a play) selects a probability distribution 
over the actions.
Since we consider POMDPs, strategies are \emph{observation-based}, i.e., 
for all histories $w=(s_0,a_0,s_1,a_1,\ldots,a_{n-1},s_n)$ and 
$w'=(s_0',a_0,s_1',a_1,\ldots,a_{n-1},s_n')$ such that for all 
$0\leq i \leq n$ we have $\obsmap(s_i)=\obsmap(s_i')$ (i.e., $\obsmap(w) = \obsmap(w')$), we must have 
$\sigma(w)=\sigma(w')$.
In other words, if the observation sequence is the same, then the strategy 
cannot distinguish between the prefixes and must play the same. 
We now present an equivalent definition of observation-based strategies  
such that the memory of the strategy is explicitly specified, and 
will be required to present finite-memory strategies.

\begin{definition}[Strategies with memory and finite-memory strategies]
A \emph{strategy} with memory is a tuple $\sigma=(\sigma_u,\sigma_n,M,m_0)$ 
where:
\begin{itemize}
 \item \emph{(Memory set).} $M$ is a denumerable set (finite or infinite) of memory elements (or memory states).
 \item \emph{(Action selection function).} The function $\sigma_n:M\rightarrow \distr(\Act)$ is the 
	\emph{action selection function} that given the current memory 
	state gives the probability distribution over actions.
 \item \emph{(Memory update function).} The function $\sigma_u:M\times\Obs\times\Act\rightarrow \distr(M)$ 
 is the \emph{memory update function} that given the current memory state, 
 the current observation and action, updates the memory state probabilistically.
 \item \emph{(Initial memory).} The memory state $m_0\in M$ is the initial memory state.
\end{itemize}
A strategy is a \emph{finite-memory} strategy if the set $M$ of memory elements is finite.
A strategy is \emph{pure (or deterministic)} if the memory update function 
and the action selection function are deterministic, i.e., 
$\sigma_u: M \times \Obs \times \Act \to M$ and $\sigma_n: M \to \Act$.
A strategy is \emph{memoryless (or stationary)} if it is independent of the 
history but depends only on the current observation, and can be represented
as a function $\sigma: \Obs \to \distr(\Act)$.  
\end{definition}
\begin{remark}
It was shown in~\cite{CDGH10} that in POMDPs pure strategies are as powerful 
as randomized strategies, hence in sequel we omit discussions about 
pure strategies.
\end{remark}
\smallskip\noindent{\bf Probability measure.}
Given a strategy $\sigma$, the unique probability measure obtained given $\sigma$ is denoted as $\mathbb{P}^{\sigma}(\cdot)$.
We first define the measure $\mu^\sigma(\cdot)$ on cones.
For $w=s_0$ we have $\mu^\sigma(\Cone(w))=1$, and 
for $w=s$ where $s\neq s_0$ we have  $\mu^\sigma(\Cone(w))=0$; and 
for $w' = w \cdot a\cdot s$ 
we have 
$\mu^\sigma(\Cone(w'))= \mu^\sigma(\Cone(w)) \cdot \sigma(w)(a) \cdot \trans(\Last(w),a)(s)$. 
By Carathe\'odary's extension theorem, the function $\mu^\sigma(\cdot)$
can be uniquely extended to a probability measure $\Prb^{\sigma}(\cdot)$
over Borel sets of infinite plays~\cite{Billingsley}.

\smallskip\noindent{\bf Objectives.}
An \emph{objective} in a POMDP $G$ is a Borel set $\varphi \subseteq \Omega$ 
of plays in the Cantor topology on $\Omega$~\cite{Kechris}. 
All objectives we consider in this paper lie in the first $2\half$-levels of 
the Borel hierarchy. 
We specifically consider the parity objective, which is a canonical form to express all 
$\omega$-regular objectives~\cite{Thomas97}.
Thus parity objectives provide a robust specification language to express all commonly 
used properties in verification and system analysis.  
For a play $\rho = (s_0, a_0, s_1, a_1, s_2 \ldots)$, we denote 
by $\Inf(\rho) = \set{ s \in S \mid \forall i \geq 0 \cdot \exists j \geq i: s_j = s }$ 
the set of states that occur infinitely often in $\rho$.
We consider the following objectives. 

\begin{itemize}
 	\item \emph{Reachability and safety objectives.}
	Given a set $\target \subseteq S$ of target states, the \emph{reachability} objective 
	$\Reach(\target) = \{ (s_0, a_0, s_1, a_1, s_2 \ldots) \in \Omega \mid \exists k \geq 0:  s_k \in \target\}$
	requires that a target state in $\target$ is visited at least once.
	Dually, the \emph{safety} objective $\Safe(\target) = \{ (s_0, a_0, s_1, a_1, s_2 \ldots) \in \Omega \mid \forall k \geq 0:  s_k \in \target\}$ 
	requires that only states in $\target$ are visited.

	\item \emph{B\"uchi and coB\"uchi objectives.}
	Given a set $\target \subseteq S$ of target states,
	the \emph{B\"uchi} objective $\Buchi(\target) = \{ \rho \in \Omega \mid \Inf(\rho) \cap \target \neq \emptyset\}$ 
	requires that a state in $\target$ is visited infinitely often. 
	Dually, the \emph{coB\"uchi} objective $\coBuchi(\target) =\{\rho \in \Omega \mid \Inf(\rho) \subseteq \target\}$ 
	requires that only states in $\target$ are visited infinitely often.
	
	\item\emph{Parity objectives.} 
	For $d \in \Nats$, let $p:S \to \set{0,1,\ldots,d}$ be a 
	\emph{priority function} that maps each state 
	to a non-negative integer priority.
	The \emph{parity} objective $\Parity(p) = \{\rho \in \Omega \mid \min\set{ p(s) \mid s \in \Inf(\rho)} 
	\text{ is even} \}$ requires that the smallest priority that appears
infinitely often is even.

	\item\emph{Muller objectives.} Let $\priority$ be a set of colors, and 
	$\col:S \to \priority$ be a color mapping function that maps every state
to a color.
	A Muller objective $\calf$ consists of a set of subsets of colors and requires 
	that the set of colors visited infinitely often belongs to $\calf$, i.e.,
	$\calf \in \powset(\powset(\priority))$ and 
	$\Muller(\calf)=\set{\rho \in \Omega \mid \set{\col(s) \mid s \in 
      \Inf(\rho)} \in \calf}$

\end{itemize}
Note that a reachability objective $\Reach(\target)$ can be viewed as a special 
case of B\"uchi as well as coB\"uchi objectives, (assuming w.l.o.g. that all target states $s \in \target$ are \emph{absorbing}, 
i.e., $\trans(s,a)(s) = 1$ for all $a \in \Act$)
and analogously safety objectives are also special cases of B\"uchi and coB\"uchi objectives.
The objectives $\Buchi(\target)$ and $\coBuchi(\target)$ 
are special cases of parity objectives defined by respective priority functions
$p_1,p_2$ such that $p_1(s) = 0$ and $p_2(s) = 2$ if $s \in \target$,
and $p_1(s) = p_2(s) = 1$ otherwise. 
Given a set $U\subseteq S$ we will denote by $p(U)$ the set of priorities of the set $U$ 
given by the priority function $p$, i.e., $p(U)= \set{p(s) \mid s \in U}$,
and similarly $\col(U)=\set{\col(s) \mid s \in U}$. 
Also observe that parity objectives are a special case of Muller objectives,
however, given a POMDP with a Muller objective with color set $\priority$, 
an equivalent POMDP with $|S|\cdot |\priority|!$ states and a parity objective 
with $|\priority|^2$ priorities can be constructed using the well-known latest appearance 
record (LAR) construction of~\cite{GH82} for conversion of Muller objectives
to parity objectives.
An objective $\varphi$ is visible if for all plays $\rho$ and $\rho'$ that
have the same observation sequence we have $\rho\in \varphi$ iff $\rho' \in 
\varphi$.

\smallskip\noindent{\bf Winning modes.}
Given a POMDP, an objective $\varphi$, and a class $\C$ of strategies, we say that:

\begin{itemize}
\item a strategy $\sigma \in \C$ is \emph{almost-sure winning} 
if $\Prb^{\sigma}(\varphi) = 1$;

\item a strategy $\sigma \in \C$ is \emph{positive winning} 
if $\Prb^{\sigma}(\varphi) >0$;

\item the POMDP is \emph{limit-sure winning} if for all $\varepsilon>0$
there exists a strategy   $\sigma \in \C$ for player~$1$ such that  
$\Prb^{\sigma}(\varphi) \geq 1 - \epsilon$; and

\item a strategy $\sigma \in \C$ is \emph{quantitative winning}, for a 
threshold $\lambda \in (0,1)$, if $\Prb^{\sigma}(\varphi) \geq \lambda$.

\end{itemize}
We first precisely summarize related works in the following Theorem.

\begin{theorem}[Decidability and complexity under general 
strategies~\cite{Rabin63,PazBook,CL89,GO10,CH10,BBG08,BGB12,Reif79,Reif84,CDH10a}]
The following assertions hold for POMDPs with the class $\C$ of all
infinite-memory (randomized or pure) strategies:
\begin{enumerate}
\item The quantitative winning problem is undecidable for safety, 
reachability, B\"uchi, coB\"uchi, parity, and Muller objectives.

\item The limit-sure winning problem is EXPTIME-complete for safety 
objectives; and undecidable for reachability, B\"uchi, 
coB\"uchi, parity, and Muller objectives.

\item The almost-sure winning problem is EXPTIME-complete for safety, 
reachability, and B\"uchi objectives; and 
undecidable for coB\"uchi, parity, and Muller objectives.

\item The positive winning problem is PTIME-complete for reachability 
objectives, EXPTIME-complete for safety and coB\"uchi objectives; and 
undecidable for B\"uchi, parity, and Muller objectives.

\end{enumerate}
\end{theorem}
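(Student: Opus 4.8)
The statement is a compilation of known results, so the plan is to assemble it from the cited literature together with two structural observations: undecidability and \textrm{EXPTIME}-hardness propagate \emph{upward} along the inclusion ordering of objective classes, while decidability and upper bounds propagate \emph{downward}. Concretely, with target states made absorbing, $\Reach(\target)$ and $\Safe(\target)$ are special cases of both $\Buchi$ and $\coBuchi$, each of the latter is a special case of $\Parity$ (with two priorities), and $\Parity$ is a special case of $\Muller$; and probabilistic automata over finite (resp.\ infinite) words are exactly blind POMDPs (single observation), so all lower bounds for probabilistic automata transfer verbatim. Thus it suffices to state each hardness/undecidability claim at the smallest objective class where it holds, each upper-bound claim at the largest, and propagate.

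For part~(1): the quantitative (threshold) emptiness problem for probabilistic automata over finite words is undecidable~\cite{Rabin63,PazBook,CL89}; viewing such an automaton as a blind POMDP with absorbing accepting states turns acceptance with a given threshold into a quantitative $\Reach(\target)$ (dually $\Safe(\target)$) condition, and the claim propagates to $\Buchi$, $\coBuchi$, $\Parity$, $\Muller$. For part~(2): undecidability of limit-sure $\Reach$ is precisely the undecidability of the value-$1$ problem for probabilistic automata over finite words~\cite{GO10} (limit-sure reachability of an absorbing target is ``value $1$''), again propagating upward; for $\Safe$ I would first argue limit-sure winning coincides with almost-sure winning --- pass to the exponential belief-support PIMDP, where for the safety/reachability pair a standard finite-MDP optimality argument shows the infimum probability of an unsafe play is either $0$ or bounded away from $0$, so limit-sure $=$ almost-sure there (and the belief-support abstraction is sound for safety since a reachable belief not contained in $\target$ witnesses a positive-probability unsafe play) --- and then invoke \textrm{EXPTIME}-completeness of almost-sure $\Safe$: the upper bound is the subset/belief construction yielding an exponential PIMDP solved in polynomial time, the lower bound is~\cite{CDH10a}. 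For parts~(3)--(4): the \textrm{EXPTIME} upper bounds for almost-sure $\Safe$, $\Reach$, $\Buchi$ and for positive $\Safe$, $\coBuchi$ all follow from a belief-based subset construction producing an exponential PIMDP (with end-component analysis for the B\"uchi/coB\"uchi cases), and the matching \textrm{EXPTIME} lower bounds (already for $\Reach$ and $\Safe$) are from~\cite{CDH10a,Reif79,Reif84}; positive $\Reach$ is in \textrm{PTIME} because a state positively reaches $\target$ iff $\target$ is reachable from it along some finite positive-probability path, i.e.\ plain graph reachability, with \textrm{PTIME}-hardness as in~\cite{CDH10a}. The deep negative entries are almost-sure $\coBuchi$ and positive $\Buchi$: these are the undecidability theorems for probabilistic coB\"uchi and B\"uchi automata of~\cite{BBG08,BGB12}, and by upward propagation they also give undecidability of almost-sure/positive $\Parity$ and $\Muller$; the automata-on-infinite-words machinery of~\cite{CH10} supplies the backbone for several of these reductions.

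The only genuinely non-routine steps are (a) making the limit-sure $=$ almost-sure equivalence for $\Safe$ precise at the level of the belief-support PIMDP --- the naive compactness argument over the uncountable space of infinite-memory strategies must be replaced by the finite-MDP value/attractor argument, and soundness of the belief abstraction for safety must be checked --- and (b) correctly importing the undecidability constructions of~\cite{GO10,BBG08,BGB12} into the POMDP qualitative-winning formalism with the class $\C$ of all infinite-memory randomized (equivalently, by~\cite{CDGH10}, pure) strategies, since those papers are phrased for probabilistic automata. Everything else is bookkeeping over which objective class contains which and which winning mode dominates which, so I would present the proof as a short case analysis following exactly the propagation pattern above.
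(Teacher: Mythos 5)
Your proposal is correct and takes essentially the same route as the paper: the paper proves this theorem purely by attribution (quantitative undecidability to~\cite{Rabin63,PazBook,CL89}, limit-sure reachability to~\cite{GO10,CH10}, positive B\"uchi and almost-sure coB\"uchi to~\cite{BBG08,BGB12}, the safety bounds to~\cite{Reif79,Reif84}, and the remaining bounds to~\cite{CDH10a,BGB12}), combined with exactly your propagation observations that probabilistic automata are blind POMDPs and that lower bounds for reachability/limit-sure transfer to B\"uchi, coB\"uchi, parity and Muller objectives. The only point where you go beyond the paper is the sketched limit-sure-equals-almost-sure argument for safety via the belief-support PIMDP, which as sketched is not quite enough (the support abstraction discards the actual belief probabilities, so a value bound in the support MDP does not by itself bound the POMDP value uniformly over strategies; a compactness/upper-semicontinuity or rank-based uniform-bound argument is needed), but since the paper itself handles this item by citation only, this does not change the comparison.
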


\noindent{\em Explanation of the previous results and implications 
under finite-memory policies.}
All the undecidability results follow from the special case of 
probabilistic automata:
the undecidability of the quantitative problem for probabilistic 
automata follows from~\cite{Rabin63,PazBook,CL89}; 
the undecidability of the limit-sure winning for finite words and 
reachability objectives was established in~\cite{GO10,CH10} 
(the undecidability of limit-sure reachability also implies 
undecidability for B\"uchi, coB\"uchi and parity objectives);
the undecidability for positive winning for B\"uchi and 
almost-sure winning for coB\"uchi objectives was established
in~\cite{BBG08,BGB12}.
For the decidable results, the optimal complexity results 
for safety objectives can be obtained from the results 
of~\cite{Reif79,Reif84} and all the other results follow 
from~\cite{CDH10a,BGB12}.
If the classes of strategies are restricted to finite-memory 
strategies, then the undecidability results for quantitative 
winning and limit-sure winning still hold, as they are established
for reachability objectives and for reachability objectives
finite-memory suffices. 
The most prominent and important open question is whether the almost-sure 
and positive winning problems are decidable for parity and Muller
objectives in POMDPs under finite-memory strategies.
All the lower bounds (i.e., hardness and undecidability) results have been
established for the cases when the objectives are restricted
to be visible.

\section{Strategy Complexity for Muller Objectives under Finite-memory Strategies}
\label{sec:finmem}
In this section we will first show that belief-based stationary strategies are 
not sufficient for finite-memory almost-sure winning strategies in POMDPs with 
coB\"uchi objectives;
and then present the upper bound on memory size required for finite-memory 
almost-sure and positive winning strategies in POMDPs with Muller objectives. 
Our proofs will use many basic results on Markov chains 
and we start with them in the following subsection.

\subsection{Basic properties of Markov chains}
Since our proof relies heavily on Markov chains we start with some basic 
definitions and properties related to Markov chains that are essential for our proofs.

\smallskip\noindent{\bf Markov chains, recurrent classes, and reachability.} 
A Markov chain $\ov{\game}=(\ov{S},\ov{\trans})$ consists of a \emph{finite} set $\ov{S}$ of 
states and a probabilistic transition function $\ov{\trans}:\ov{S} \rightarrow \distr(\ov{S})$.
Given the Markov chain, we consider the directed graph $(\ov{S},\ov{E})$ where $\ov{E}=\set{(\ov{s},\ov{s}') 
\mid \trans(\ov{s}' \mid \ov{s}) >0}$.
A \emph{recurrent class} $\ov{C} \subseteq \ov{S}$ of the Markov chain is a bottom 
strongly connected component (scc) in the graph  $(\ov{S},\ov{E})$ 
(a bottom scc is an scc with no edges out of the scc).
We denote by $\Rec(\ov{\game})$ the set of recurrent classes of the Markov chain,
i.e., $\Rec(\ov{\game})=\set{\ov{C} \mid \ov{C} \text{ is a recurrent class}}$.
Given a state $\ov{s}$ and a set $\ov{U}$ of states, we say that $\ov{U}$ is
reachable from $\ov{s}$ if there is a path from $\ov{s}$ to some state in 
$\ov{U}$ in the graph $(\ov{S},\ov{E})$. 
Given a state $\ov{s}$ of the Markov chain we denote by $\Rec(\ov{\game})(\ov{s}) \subseteq \Rec(\ov{\game})$ 
the subset of the recurrent classes reachable from $\ov{s}$ in $\ov{G}$.
A state is recurrent if it belongs to a recurrent class.
The following standard properties of reachability and the recurrent classes 
will be used in our proof:
\begin{enumerate}
\item \emph{Property~1.}
(a)~For a set $\ov{T} \subseteq \ov{S}$, if for all states $\ov{s} \in \ov{S}$ 
there is a path to $\ov{T}$ (i.e., for all states there is a positive
probability to reach $\ov{T}$), then from all states the set $\ov{T}$
is reached with probability~1.
(b)~For all states $\ov{s}$, if the Markov chain starts at $\ov{s}$, then 
the set ${\mathcal C}=\bigcup_{\ov{C} \in \Rec(\ov{\game})(\ov{s})} \ov{C}$ is
reached with probability~1, 
i.e., the set of recurrent classes  are reached with probability~1.

\item \emph{Property~2.}
If $\ov{s}$ is recurrent and it belongs to a recurrent class $\ov{C}$,
then $\Rec(\ov{\game})(\ov{s})=\set{\ov{C}}$.

\item \emph{Property~3.}
For a recurrent class $\ov{C}$, for all states $\ov{s} \in \ov{C}$,
if the Markov chain starts at $\ov{s}$, then all states
$\ov{t}\in \ov{C}$ are visited infinitely often with probability~1.

\item \emph{Property~4.}
If $\ov{s}'$ is reachable from $\ov{s}$, then 
$\Rec(\ov{\game})(\ov{s}') \subseteq \Rec(\ov{\game})(\ov{s})$.

\item \emph{Property~5.} For all $\ov{s}$ we have 
$\Rec(\ov{\game})(\ov{s}) =\bigcup_{(\ov{s},\ov{s}') \in \ov{E}} 
\Rec(\ov{\game})(\ov{s}')$. 
\end{enumerate}
The following lemma is an easy consequence of the above properties.

\begin{lemma}\label{lemm:Markov-basic}
Given a Markov chain $\ov{G}=(\ov{S},\ov{\trans})$ with Muller objective 
$\Muller(\calf)$ (or a parity objective $\Parity(p)$), 
a state $\ov{s}$ is almost-sure winning (resp. positive winning)
if for all  recurrent classes $\ov{C} \in \Rec(\ov{G})(\ov{s})$ 
(resp. for some recurrent class $\ov{C} \in \Rec(\ov{G})(\ov{s})$) reachable from 
$\ov{s}$ we have $\col(\ov{C}) \in \calf$ ($\min(p(\ov{C}))$ is even for the parity objective).  
\end{lemma}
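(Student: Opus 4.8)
The plan is to reduce the statement to the five properties of Markov chains collected just before the lemma, treating the almost-sure and positive cases in parallel. First I would recall that for a Muller objective $\Muller(\calf)$, whether a play $\rho$ is winning depends only on $\set{\col(s) \mid s \in \Inf(\rho)}$, and that for the parity objective $\Parity(p)$ winning depends only on $\min(p(\Inf(\rho)))$. So the whole argument hinges on identifying $\Inf(\rho)$ with a recurrent class for almost every play. I would invoke \emph{Property~1(b)}: starting from $\ov{s}$, with probability~1 the set $\mathcal{C}=\bigcup_{\ov{C}\in\Rec(\ov{G})(\ov{s})}\ov{C}$ of recurrent states is reached. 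By \emph{Property~2}, once a recurrent state of some class $\ov{C}$ is entered, $\Rec(\ov{G})$ restricted to that state is exactly $\set{\ov{C}}$, so the play stays inside $\ov{C}$ forever; and by \emph{Property~3}, from any state of $\ov{C}$ all states of $\ov{C}$ are visited infinitely often with probability~1. Hence for almost every play $\rho$ there is a (random) recurrent class $\ov{C}=\ov{C}(\rho)\in\Rec(\ov{G})(\ov{s})$ with $\Inf(\rho)=\ov{C}$.

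Given this, the almost-sure direction is immediate: if $\col(\ov{C})\in\calf$ for \emph{all} $\ov{C}\in\Rec(\ov{G})(\ov{s})$, then for almost every $\rho$ we have $\set{\col(s)\mid s\in\Inf(\rho)}=\col(\ov{C}(\rho))\in\calf$, so $\Prb(\Muller(\calf))=1$; the parity case is the same with the condition $\min(p(\ov{C}))$ even. For the positive direction, suppose $\col(\ov{C})\in\calf$ for \emph{some} $\ov{C}\in\Rec(\ov{G})(\ov{s})$. It suffices to show this particular $\ov{C}$ is reached from $\ov{s}$ with positive probability: since $\ov{C}\in\Rec(\ov{G})(\ov{s})$ there is a finite path in $(\ov{S},\ov{E})$ from $\ov{s}$ to some state of $\ov{C}$, and each edge of that path carries strictly positive probability, so the cone of that finite prefix has positive measure; then by \emph{Property~3} conditioned on entering $\ov{C}$, almost surely $\Inf(\rho)=\ov{C}$, giving $\Prb(\Muller(\calf))>0$.

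I would organize the write-up as: (1) state the key fact ``$\Inf(\rho)\in\Rec(\ov{G})(\ov{s})$ almost surely, and each recurrent class is reached with the obvious positive probability'' as a consequence of Properties~1–3; (2) deduce the almost-sure case by a union/expectation over the finitely many recurrent classes; (3) deduce the positive case via a single positive-probability path to the good class. The only slightly delicate point is the positive case: one must be careful that ``reachable from $\ov{s}$ in the graph $(\ov{S},\ov{E})$'' genuinely gives positive probability of reaching $\ov{C}$ and then, once there, of having $\Inf(\rho)$ equal to $\ov{C}$ rather than a strict subset — but the latter is exactly \emph{Property~3}, so there is no real obstacle, and the lemma is, as the paper says, ``an easy consequence of the above properties.'' I would also note in one line that the parity statement follows from the Muller statement by taking $\calf=\set{C\subseteq p(S)\mid \min(C)\text{ even}}$ and $\col=p$, so that only one argument needs to be written.
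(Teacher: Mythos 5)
Your proof is correct and follows essentially the same route as the paper: Property~1(b) to reach the recurrent classes with probability~1 (and each reachable class with positive probability), and Property~3 to identify the set of colors (resp.\ minimum priority) visited infinitely often within a class with $\col(\ov{C})$ (resp.\ $\min(p(\ov{C}))$). Your additional use of Property~2 and the explicit positive-probability path argument simply spell out details the paper leaves implicit.
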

\begin{proof}
From $\ov{s}$ the set of recurrent classes reachable from $\ov{s}$ is  
reached with probability~1 (Property~1~(b)), and every recurrent class reachable is 
reached with positive probability. 
In every recurrent class $\ov{C}$ the minimum priority visited 
infinitely often with probability~1 is the minimum priority of 
$\ov{C}$ (Property~3).
Also in every recurrent class $\ov{C}$ the set of colors visited
infinitely often with probability~1 is exactly the set $\col(\ov{C})$ (Property~3).
The desired result follows. 
\end{proof}

\smallskip\noindent{\bf Markov chains $\game \restr \straa$ under finite memory strategies $\straa$.}
We now define Markov chains obtained by fixing  a finite-memory strategy in a POMDP
$\game$. 
A finite-memory strategy $\straa=(\straa_u,\straa_n,M,m_0)$ induces a finite-state Markov chain 
$(S\times M,\trans_{\straa})$, denoted $\game \restr \straa$, 
with the probabilistic transition function $\trans_\straa: S\times M \rightarrow \distr(S\times M)$: 
given $s,s'\in S$ and $m,m'\in M$, the transition $\trans_\straa\big((s',m')\ |\ (s,m)\big)$ is the probability to go from state 
$(s,m)$ to state $(s',m')$ in one step under the strategy $\straa$. 
The probability of transition can be decomposed as follows:
\begin{itemize}
 \item First an action $a\in\Act$ is sampled according to the distribution $\straa_n(m)$; 
 \item then the next state $s'$ is sampled according to the distribution $\trans(s,a)$; and
 \item finally the new memory $m'$ is sampled according to the distribution $\straa_u(m,\obsmap(s'),a)$
 (i.e., the new memory is sampled according $\straa_u$ given the old memory, new observation and the action).
\end{itemize}
More formally, we have:
\[
\trans_\sigma\big((s',m')\ |\
(s,m)\big)=\sum_{a\in\Act}\straa_n(m)(a)\cdot\trans(s,a)(s')\cdot\straa_u(m,\obsmap(s'),a)(m').
\]
Given $s\in S$ and $m\in M$, we write $(G \restr \straa)_{(s,m)}$ for the finite state 
Markov chain induced on $S\times M$ by the transition function 
$\trans_{\straa}$, given the initial state is $(s,m)$. 


\subsection{Belief-based stationary strategies are not sufficient}
For all previous decidability results for almost-sure winning in POMDPs, the 
key was to show that \emph{belief-based stationary} strategies are sufficient. 
A strategy is belief-based stationary if its memory relies only on the subset 
construction where the subset denotes the possible current states, 
i.e., the strategy plays only depending on the set of possible
current states of the POMDP, which is called \emph{belief}.
In POMDPs with B\"uchi objectives, belief-based stationary
 strategies are sufficient for almost-sure winning.
We now show with an example that there exist POMDPs with coB\"uchi objectives,
where finite-memory randomized almost-sure winning strategies exist, but there 
exists no belief-based stationary almost-sure winning strategy.

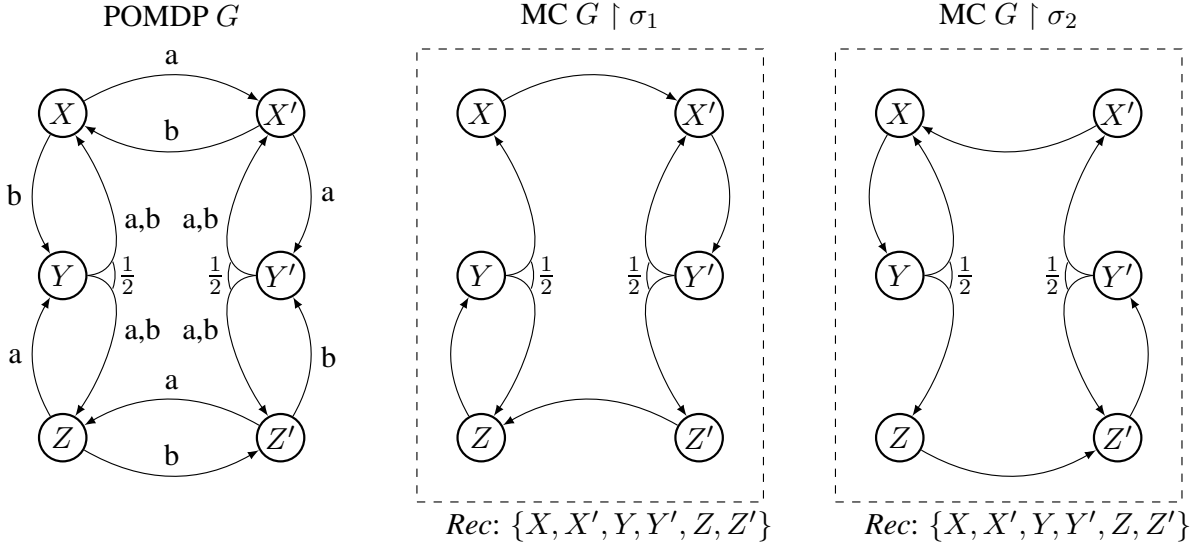
\begin{figure}
\begin{center}
\resizebox{\linewidth}{!}{
\begin{tikzpicture}[>=latex]
	\node[State,text=] (x) {$X$};
	\node[State,right of=x,xshift=50] (x') {$X'$};
	\node[State,below of=x,yshift=-30] (y) {$Y$}; 
	\node[State,below of=x',yshift=-30] (y') {$Y'$}; 
	\node[State,below of=y,yshift=-30] (z) {$Z$}; 
	\node[State,below of=y',yshift=-30] (z') {$Z'$}; 
	\draw[->]
	(x) edge[bend left] node[above] {a} (x')
	(x) edge[bend right] node[left] {b} (y)
	(x') edge[bend left] node[above] {b} (x)
	(x') edge[bend left] node[right] {a} (y')
	(z) edge[bend left] node[left] {a} (y)
	(z) edge[bend right] node[above] {b} (z')
	(z') edge[bend right] node[above] {a} (z)
	(z') edge[bend right] node[right] {b} (y')
	(y) edge[bend right,out=270] node[right] {a,b} (x)
	(y') edge[bend left,out=90] node[left] {a,b} (x')
	(y) edge[bend left,out=90] node[right] {a,b} (z)
	(y') edge[bend right,out=270] node[left] {a,b} (z')
	;
	\node[right of=y,xshift=-5] {$\frac{1}{2}$};
	\draw ($(y) +(18pt, 5pt)$) arc (20:-20:0.5cm);
	\draw ($(y') +(-18pt, -5pt)$) arc (-160:-200:0.5cm);
	\node[left of=y',xshift=5] {$\frac{1}{2}$};
	\node[fit=(x) (x') (y) (y') (z) (z'),inner sep=0.8cm] (mdp) {};
	\node[above of=mdp,yshift=65] {POMDP $G$};
	\node[State,right of=x',xshift=4em] (x1) {$X$};
	\node[State,right of=x1,xshift=50] (x1') {$X'$};
	\node[State,below of=x1,yshift=-30] (y1) {$Y$}; 
	\node[State,below of=x1',yshift=-30] (y1') {$Y'$}; 
	\node[State,below of=y1,yshift=-30] (z1) {$Z$}; 
	\node[State,below of=y1',yshift=-30] (z1') {$Z'$}; 
	\draw[->]
	(x1) edge[bend left] node[above] {} (x1')
	(x1') edge[bend left] node[right] {} (y1')
	(z1) edge[bend left] node[left] {} (y1)
	(z1') edge[bend right] node[above] {} (z1)
	(y1) edge[bend right,out=270] node[right] {} (x1)
	(y1') edge[bend left,out=90] node[left] {} (x1')
	(y1) edge[bend left,out=90] node[right] {} (z1)
	(y1') edge[bend right,out=270] node[left] {} (z1')
	;
	
	\draw ($(y1) +(18pt, 5pt)$) arc (20:-20:0.5cm);
	\draw ($(y1') +(-18pt, -5pt)$) arc (-160:-200:0.5cm);
	\node[right of=y1,xshift=-5] {$\frac{1}{2}$};
	\node[left of=y1',xshift=5] {$\frac{1}{2}$};
	\node[rectangle,fit= (x1) (x1') (y1) (y1') (z1) (z1'),inner sep=0.5cm,draw=black, dashed] (rec1) {};
	\node[above of=rec1,yshift=65] {MC $G\restr{\sigma_1}$};
	\node[below of=rec1,yshift=-62,xshift=7] {\emph{Rec}: $\{X,X',Y,Y',Z,Z'\}$};

	\node[State,right of=x1',xshift=4em] (x2) {$X$};
	\node[State,right of=x2,xshift=50] (x2') {$X'$};
	\node[State,below of=x2,yshift=-30] (y2) {$Y$}; 
	\node[State,below of=x2',yshift=-30] (y2') {$Y'$}; 
	\node[State,below of=y2,yshift=-30] (z2) {$Z$}; 
	\node[State,below of=y2',yshift=-30] (z2') {$Z'$}; 
	\draw[->]
	(x2) edge[bend right] node[left] {} (y2)
	(x2') edge[bend left] node[above] {} (x2)
	(z2) edge[bend right] node[above] {} (z2')
	(z2') edge[bend right] node[right] {} (y2')
	(y2) edge[bend right,out=270] node[right] {} (x2)
	(y2') edge[bend left,out=90] node[left] {} (x2')
	(y2) edge[bend left,out=90] node[right] {} (z2)
	(y2') edge[bend right,out=270] node[left] {} (z2')
	;
	\draw ($(y2) +(18pt, 5pt)$) arc (20:-20:0.5cm);
	\draw ($(y2') +(-18pt, -5pt)$) arc (-160:-200:0.5cm);
	\node[right of=y2,xshift=-5] {$\frac{1}{2}$};
	\node[left of=y2',xshift=5] {$\frac{1}{2}$};
	\node[rectangle,fit= (x2) (x2') (y2) (y2') (z2) (z2') ,inner sep=0.5cm,draw=black, dashed] (rec2) {};
	\node[below of=rec2,yshift=-62,xshift=7] {\emph{Rec}: $\{X,X',Y,Y',Z,Z'\}$};
	\node[above of=rec2,yshift=65] {MC $G\restr{\sigma_2}$};

\end{tikzpicture}
}
\end{center}
\caption{Belief is not sufficient}
\label{fig:simple_example}
\end{figure}

\begin{figure}[h]
\centering
\resizebox{7cm}{!}{
\begin{tikzpicture}[>=latex]
	\node[BState,text=] (xa) {$Xa$};
	\node[BState,right of=xa,xshift=50] (x'b) {$X'b$};
	\node[BState,below of=xa,yshift=-30] (yb) {$Yb$};
	\node[BState,below of=x'b,yshift=-30] (y'a) {$Y'a$};
	\node[BState,below of=yb,yshift=-30] (za) {$Za$};
	\node[BState,below of=y'a,yshift=-30] (z'b) {$Z'b$};
	\draw[->]
	(xa) edge[bend left] node [above] {} (x'b)
	(x'b) edge[bend left] node [above] {} (xa)
	(yb) edge[] node [above] {} (xa)
	(y'a) edge[] node [above] {} (x'b)
	(za) edge[bend left] node [above] {} (yb)
	(yb) edge[bend left] node [above] {} (za)
	(y'a) edge[bend left] node [above] {} (z'b)
	(z'b) edge[bend left] node [above] {} (y'a)
	;
	\node[rectangle,fit= (xa) (x'b) ,inner sep=0.3cm,draw=black, dashed] (rec1) {};
	\node[above of=rec1,xshift=27] {\emph{Rec:} $\{Xa,X'b\}$};
	
	\node[BState,right of=z'b,xshift=50] (zb) {$Zb$};
	\node[BState,right of=zb, xshift=50] (z'a) {$Z'a$};
	\node[BState,above of=z'a,yshift=30] (y'b) {$Y'b$};
	\node[BState,above of=zb,yshift=30] (ya) {$Ya$};
	\node[BState,above of=y'b,yshift=30] (x'a) {$X'a$};
	\node[BState,above of=ya,yshift=30] (xb) {$Xb$};
	\draw[->]
	(z'a) edge[bend left] node [above] {} (zb)
	(zb) edge[bend left] node [above] {} (z'a)
	(y'b) edge[] node [above] {} (z'a)
	(ya) edge[] node [above] {} (zb)
	(y'b) edge[bend left] node [above] {} (x'a)
	(x'a) edge[bend left] node [above] {} (y'b)
	(ya) edge[bend left] node [above] {} (xb)
	(xb) edge[bend left] node [above] {} (ya)
	;
	\node[rectangle,fit= (z'a) (zb) ,inner sep=0.3cm,draw=black, dashed] (rec2) {};
	\node[above of=rec2,yshift=115,xshift=27] {\emph{Rec:} $\{Zb,Z'a\}$};

\end{tikzpicture}
}
\caption{The Markov chain $G \restr \sigma_4$.}
\label{fig:simple_example1}
\end{figure}

\begin{example}\label{ex:belief-suff}
We consider a POMDP with state space $\set{s_0,X,X',Y,Y',Z,Z'}$ and
action set $\set{a,b}$, and let $U= \set{X,X',Y,Y',Z,Z'}$.
From the initial state $s_0$ all the other states are reached with 
uniform probability in one-step, i.e., for all $s' \in U=\set{X,X',Y,Y',Z,Z'}$ 
we have $\trans(s_0,a)(s')=\trans(s_0,b)(s')=\frac{1}{6}$.
The transitions from the other states are as follows 
(shown in Figure~\ref{fig:simple_example}): 
(i)~$\trans(X,a)(X')=1$ and $\trans(X,b)(Y)=1$;
(ii)~$\trans(X',a)(Y')=1$ and $\trans(X',b)(X)=1$;
(iii)~$\trans(Z,a)(Y)=1$ and $\trans(Z,b)(Z')=1$;
(iv)~$\trans(Z',a)(Z)=1$ and $\trans(Z',b)(Y')=1$;
(v)~$\trans(Y,a)(X)=\trans(Y,b)(X)=\trans(Y,a)(Z)=\trans(Y,b)(Z)=\frac{1}{2}$;
and
(vi)~$\trans(Y',a)(X')=\trans(Y',b)(X')=\trans(Y',a)(Z')=\trans(Y',b)(Z')=\frac{1}{2}$.
All states in $U$ have the same observation.
The coB\"uchi objective is given by the target set 
$\set{X,X',Z,Z'}$, i.e., $Y$ and $Y'$ must be visited only finitely often.

The belief initially after one-step is the set $U$ since
from $s_0$ all of them are reached with positive probability.
The belief is always the set $U$ since every state has an input edge for 
every action, i.e., if the current belief is $U$ (i.e., the set of states 
that the POMDP is currently in with positive probability is $U$), then
irrespective of whether $a$ or $b$ is chosen all states of $U$ are reached
with positive probability and hence the belief set is again $U$.
There are three belief-based stationary strategies: (i)~$\sigma_1$ that plays always $a$;
(ii)~$\sigma_2$ that plays always $b$; or (iii)~$\sigma_3$ that plays 
both $a$ and $b$ with positive probability.
The Markov chains $G \restr \sigma_1$ (resp. $\sigma_2$ and $\sigma_3$) 
are obtained by retaining the edges labeled by action $a$ (resp. action $b$, 
and both actions $a$ and $b$).
For all the three strategies, the Markov chains obtained have the whole 
set $U$ as the recurrent class, and hence both $Y$ and $Y'$ 
are visited infinitely often with probability~1 violating the 
coB\"uchi objective.
The Markov chains $G \restr \sigma_1$ and $G \restr \sigma_2$ are also 
shown in Figure~\ref{fig:simple_example}, and the graph of 
$G \restr \sigma_3$ is the same as the POMDP (with edge labels removed).
The strategy $\sigma_4$ that plays action $a$ and $b$ alternately gives 
rise to the Markov chain $G \restr \sigma_4$ (shown in 
Figure~\ref{fig:simple_example1}) 
(i.e., $\sigma_4$ has two memory states $a$ and $b$, in memory state 
$a$ it plays action $a$ and switches to memory state $b$, and in memory state 
$b$ it plays action $b$ and switches to memory state $a$).
The recurrent classes do not intersect with $(Y,m)$ or $(Y',m)$, for 
memory state $m \in \set{a,b}$, and hence is a finite-memory almost-sure winning strategy.
\qed
\end{example}





%

In Example~\ref{ex:belief-suff} the coB\"uchi objective is not a visible objective.
In the following example we modify Example~\ref{ex:belief-suff} to show that 
 belief-based stationary strategies are
not sufficient even if we consider visible coB\"uchi objectives.

\begin{example}\label{ex:complex}
We consider the POMDP shown in Figure~\ref{fig:complex_example}:
the transition edges in the set $U=\set{X,X',Y,Y',Z,Z'}$ are exactly
the same as in Figure~\ref{fig:simple_example}, and the transition 
probabilities are always uniform over the support set. 
We add a new state $B$ and from the state $Y$ and $Y'$ add positive 
transition probabilities (probability~$\frac{1}{3}$) to the state
$B$ for both actions $a$ and $b$.
Recall that $Y$ and $Y'$ were the bad states in Example~\ref{ex:belief-suff}.
From state $B$ all states in $U$ are reached with positive probability for both actions
$a$ and $b$.
All states in $U$ have the same observation (denoted as $o_U$), 
and the state $B$ has a new and different observation (denoted as $o_B$).
The coB\"uchi objective is to visit only states with observation $o_U$
infinitely often (i.e., to avoid to visit state $B$ infinitely often).
Note that the objective is a visible objective. 
Since we retain all edges as in Figure~\ref{fig:simple_example} and
from $B$ all states in $U$ are reached with positive probability
in one step, whenever the current observation is $o_U$, then the belief
is the set $U$.
As in Example~\ref{ex:belief-suff} there are three belief-based stationary
strategies ($\straa_1,\straa_2$ and $\straa_3$) in belief $U$, 
and the Markov chains under $\straa_1$ and $\straa_2$ are shown
in Figure~\ref{fig:complex_example}, and the Markov chain under
$\straa_3$ has the same edges as the original POMDP.
For all the belief-based stationary strategies the recurrent class
contains the state $B$, and hence $B$ is visited infinitely often
with probability~1 violating the coB\"uchi objective.
The strategy $\straa_4$ that alternates actions $a$ and $b$ 
is a finite-memory almost-sure winning strategy and the Markov chain 
obtained given $\straa_4$ is shown in Figure~\ref{fig:complex_example1}.
Also note that our example shows that belief-based stationary strategies
are also not sufficient for positive winning for 
coB\"uchi objectives.
\end{example}

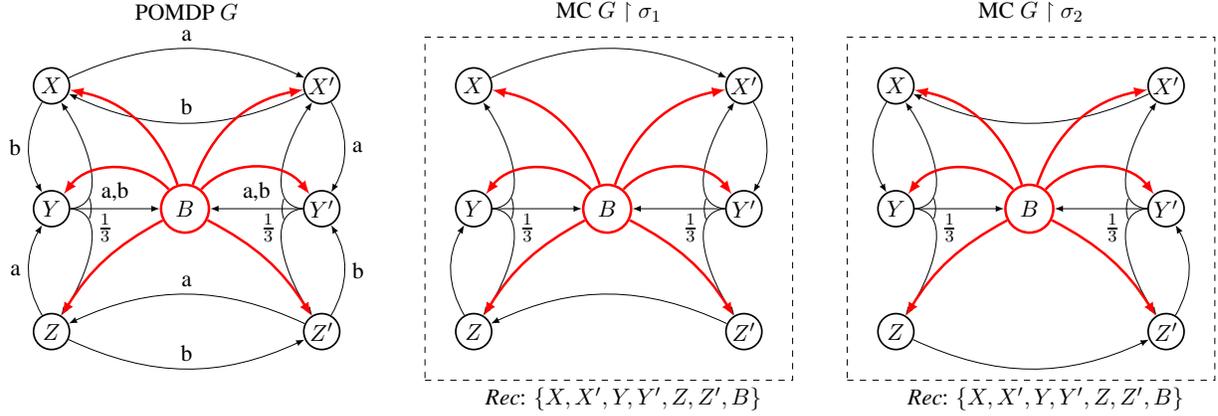
\begin{figure}
\begin{center}
\resizebox{\linewidth}{!}{
\begin{tikzpicture}[>=latex]
	\node[State,text=] (x) {$X$};
	\node[State,right of=x,xshift=100] (x') {$X'$};
	\node[State,below of=x,yshift=-30] (y) {$Y$}; 
	\node[State,below of=x',yshift=-30] (y') {$Y'$}; 
	\node[State,below of=y,yshift=-30] (z) {$Z$}; 
	\node[State,below of=y',yshift=-30] (z') {$Z'$}; 
	\node[RState,right of=y,xshift=35] (b) {$B$};
	\draw[->]
	(x) edge[bend left=25] node[above] {a} (x')
	(x) edge[bend right] node[left] {b} (y)
	(x') edge[bend left=25] node[above] {b} (x)
	(x') edge[bend left] node[right] {a} (y')
	(z) edge[bend left] node[left] {a} (y)
	(z) edge[bend right=25] node[above] {b} (z')
	(z') edge[bend right=25] node[above] {a} (z)
	(z') edge[bend right] node[right] {b} (y')
	(y) edge[bend right,out=270]   (x)
	(y) edge[] node[above] {a,b} (b)
	(y') edge[] node[above] {a,b} (b)
	(y') edge[bend left,out=90] (x')
	(y) edge[bend left,out=90]  (z)
	(y') edge[bend right,out=270] (z')
	(b) edge[red, very thick, bend left=50] (y')
	(b) edge[red, very thick, bend right=50] (y)
	(b) edge[red, very thick, bend right] (x)
	(b) edge[red, very thick, bend left] (x')
	(b) edge[red, very thick, bend right=15] (z)
	(b) edge[red, very thick, bend left=15] (z')
	;
	\node[right of=y,xshift=-3,yshift=-9] {$\frac{1}{3}$};
	\draw ($(y) +(18pt, 5pt)$) arc (20:-20:0.5cm);
	\draw ($(y') +(-18pt, -5pt)$) arc (-160:-200:0.5cm);
	\node[left of=y',xshift=3,yshift=-9] {$\frac{1}{3}$};
	\node[fit=(x) (x') (y) (y') (z) (z'),inner sep=0.8cm] (mdp) {};
	\node[above of=mdp,yshift=65] {POMDP $G$};

	\node[State,right of =x',xshift=4em] (x1) {$X$};
	\node[State,right of=x1,xshift=100] (x1') {$X'$};
	\node[State,below of=x1,yshift=-30] (y1) {$Y$}; 
	\node[State,below of=x1',yshift=-30] (y1') {$Y'$}; 
	\node[State,below of=y1,yshift=-30] (z1) {$Z$}; 
	\node[State,below of=y1',yshift=-30] (z1') {$Z'$}; 
	\node[RState,right of=y1,xshift=35] (b1) {$B$};
	\draw[->]
	(x1) edge[bend left=25] node[above] {} (x1')
	(x1') edge[bend left] node[right] {} (y1')
	(z1) edge[bend left] node[left] {} (y1)
	(z1') edge[bend right=25] node[above] {} (z1)
	(y1) edge[bend right,out=270]   (x1)
	(y1') edge[bend left,out=90] (x1')
	(y1) edge[bend left,out=90]  (z1)
	(y1') edge[bend right,out=270] (z1')
	(b1) edge[red, very thick, bend left=50] (y1')
	(b1) edge[red, very thick, bend right=50] (y1)
	(b1) edge[red, very thick, bend right] (x1)
	(b1) edge[red, very thick, bend left] (x1')
	(b1) edge[red, very thick, bend right=15] (z1)
	(b1) edge[red, very thick, bend left=15] (z1')
	(y1) edge[] node[above] {} (b1)
	(y1') edge[] node[above] {} (b1)
	;
	\node[right of=y1,xshift=-3,yshift=-9] {$\frac{1}{3}$};
	\draw ($(y1) +(18pt, 5pt)$) arc (20:-20:0.5cm);
	\draw ($(y1') +(-18pt, -5pt)$) arc (-160:-200:0.5cm);
	\node[left of=y1',xshift=3,yshift=-9] {$\frac{1}{3}$};
	\node[rectangle,fit= (x1) (x1') (y1) (y1') (z1) (z1') ,inner sep=0.5cm,draw=black, dashed] (straa1) {};
	\node[below of=straa1,yshift=-62,xshift=7] {\emph{Rec}: $\{X,X',Y,Y',Z,Z',B\}$};
	\node[above of=straa1, yshift=65] {MC $G\restr{\sigma_1}$};

	\node[State,right of =x1',xshift=4em] (x2) {$X$};
	\node[State,right of=x2,xshift=100] (x2') {$X'$};
	\node[State,below of=x2,yshift=-30] (y2) {$Y$}; 
	\node[State,below of=x2',yshift=-30] (y2') {$Y'$}; 
	\node[State,below of=y2,yshift=-30] (z2) {$Z$}; 
	\node[State,below of=y2',yshift=-30] (z2') {$Z'$}; 
	\node[RState,right of=y2,xshift=35] (b2) {$B$};
	\draw[->]
	(x2) edge[bend right] node[left] {} (y2)
	(x2') edge[bend left=25] node[above] {} (x2)
	(z2) edge[bend right=25] node[above] {} (z2')
	(z2') edge[bend right] node[right] {} (y2')
	(y2) edge[bend right,out=270]   (x2)
	(y2) edge[] node[above] {} (b2)
	(y2') edge[] node[above] {} (b2)
	(y2') edge[bend left,out=90] (x2')
	(y2) edge[bend left,out=90]  (z2)
	(y2') edge[bend right,out=270] (z2')
	(b2) edge[red, very thick, bend left=50] (y2')
	(b2) edge[red, very thick, bend right=50] (y2)
	(b2) edge[red, very thick, bend right] (x2)
	(b2) edge[red, very thick, bend left] (x2')
	(b2) edge[red, very thick, bend right=15] (z2)
	(b2) edge[red, very thick, bend left=15] (z2')

	;
	\node[right of=y2,xshift=-3,yshift=-9] {$\frac{1}{3}$};
	\draw ($(y2) +(18pt, 5pt)$) arc (20:-20:0.5cm);
	\draw ($(y2') +(-18pt, -5pt)$) arc (-160:-200:0.5cm);
	\node[left of=y2',xshift=3,yshift=-9] {$\frac{1}{3}$};
	\node[rectangle,fit= (x2) (x2') (y2) (y2') (z2) (z2') ,inner sep=0.5cm,draw=black, dashed] (straa2) {};
	\node[below of=straa2,yshift=-62,xshift=7] {\emph{Rec}: $\{X,X',Y,Y',Z,Z',B\}$};
	\node[above of=straa2, yshift=65] {MC $G\restr{\sigma_2}$};
\end{tikzpicture}
}
\caption{Belief is not sufficient}
\label{fig:complex_example}
\end{center}
\end{figure}

\begin{figure}[h]
\centering

\resizebox{10cm}{!}{
\begin{tikzpicture}[>=latex]

	\node[BState,text=] (za) {$Za$};
	\node[BState,right of=za,xshift=50] (yb) {$Yb$};
	\node[BState,right of=yb,xshift=50] (xa) {$Xa$};
	\node[BState,right of=xa,xshift=50] (x'b) {$X'b$};
	\node[BState,right of=x'b,xshift=50] (y'a) {$Y'a$};
	\node[BState,right of=y'a,xshift=50] (z'b) {$Z'b$};
	\draw[->]
	(xa) edge[bend left] node [above] {} (x'b)
	(x'b) edge[bend left] node [above] {} (xa)
	(yb) edge[] node [above] {} (xa)
	(y'a) edge[] node [above] {} (x'b)
	(za) edge[bend left] node [above] {} (yb)
	(yb) edge[bend left] node [above] {} (za)
	(y'a) edge[bend left] node [above] {} (z'b)
	(z'b) edge[bend left] node [above] {} (y'a)
	;
	\node[rectangle,fit= (xa) (x'b) ,inner sep=0.3cm,draw=black, dashed] (rec1) {};
	\node[above of=rec1,xshift=27] {\emph{Rec:} $\{Xa,X'b\}$};

	\node[BState,below of=yb,yshift=-30] (ba) {$Ba$};
	\node[BState,below of=y'a,yshift=-30] (bb) {$Bb$};
	\draw[->]
	(yb) edge[]  (ba)
	(y'a) edge[]  (bb)
	(ba) edge[bend left] (yb)
	(ba) edge[bend right=10] (x'b)
	(ba) edge[bend right=10] (z'b)
	(bb) edge[bend right] (y'a)
	(bb) edge[bend left=10] (xa)
	(bb) edge[bend left=10] (za)
	;
	\node[BState,below of=za,yshift=-100] (x'a) {$X'a$};
	\node[BState,right of=x'a,xshift=50] (y'b) {$Y'b$};
	\node[BState,right of=y'b,xshift=50] (z'a) {$Z'a$};
	\node[BState,right of=z'a,xshift=50] (zb) {$Zb$};
	\node[BState,right of=zb,xshift=50] (ya) {$Ya$};
	\node[BState,right of=ya,xshift=50] (xb) {$Xb$};
	\draw[->]
	(z'a) edge[bend left] node [above] {} (zb)
	(zb) edge[bend left] node [above] {} (z'a)
	(y'b) edge[] node [above] {} (z'a)
	(ya) edge[] node [above] {} (zb)
	(y'b) edge[bend left] node [above] {} (x'a)
	(x'a) edge[bend left] node [above] {} (y'b)
	(ya) edge[bend left] node [above] {} (xb)
	(xb) edge[bend left] node [above] {} (ya)
	;
	\node[rectangle,fit= (z'a) (zb) ,inner sep=0.3cm,draw=black, dashed] (rec2) {};
	\node[below of=rec2,xshift=27] {\emph{Rec:} $\{Z'a,Zb\}$};
	\draw[->]
	(y'b) edge[]  (ba)
	(ya) edge[]  (bb)
	(ba) edge[bend right] (y'b)
	(ba) edge[bend left=10] (xb)
	(ba) edge[bend left=10] (zb)
	(bb) edge[bend left] (ya)
	(bb) edge[bend right=10] (x'a)
	(bb) edge[bend right=10] (z'a)
	;

\end{tikzpicture}
}
\caption{The Markov chain $G \restr \sigma_4$.}
\label{fig:complex_example1}
\end{figure}

\begin{remark}
In Example~\ref{ex:complex} we have shown that 
belief-based stationary strategies are not sufficient for finite-memory
almost-sure and positive winning strategies in POMDPs
with coB\"uchi objectives.
In contrast, for almost-sure winning for B\"uchi objectives
in POMDPs, belief-based stationary strategies are 
sufficient~\cite{CDHR06} (both for finite-memory and infinite-memory 
strategies).
The fact that belief-based stationary strategies are not 
sufficient for finite-memory positive winning strategies in 
POMDPs with B\"uchi objectives can be obtained from a 
simple modification of Example~\ref{ex:complex} as follows:
we consider the POMDP in Example~\ref{ex:complex} and change
the state $B$ to an absorbing state. 
The B\"uchi objective is to visit the observation $o_U$ 
infinitely often, and for all the three belief-based stationary strategies
$\straa_1, \straa_2$, and $\straa_3$ the Markov chain has only
one recurrent class consisting of the absorbing state $B$.
The strategy $\straa_4$ ensures that with positive probability
a recurrent class is contained in $o_U$ and is a finite-memory 
positive winning strategy.
Finally, for positive winning in POMDPs with coB\"uchi objectives,
the EXPTIME-complete computational complexity result was obtained
with the following straight forward observation~\cite{CDH10a}: given a POMDP $G$
with a coB\"uchi objective $\coBuchi(\target)$, let $S_W$ be the set 
of states $s$ such that if $s$ is the starting state (i.e., 
initial belief is $\set{s}$), then almost-sure safety can be 
ensured for the target set (i.e., $\Safe(\target)$ can be ensured
almost-surely).
Then positive winning for coB\"uchi coincides with positive reachability
to the set $S_W$ because as soon as $S_W$ is reached, then the current 
belief contains a state in $S_W$, and then with positive probability the 
strategy can assume that the current state is a state in $S_W$ and play 
the almost-sure safety strategy and the strategy ensures that the coB\"uchi
objective is satisfied with positive probability.
Conversely it was also shown that a positive winning strategy for the 
coB\"uchi objective must ensure positive probability reachability to 
$S_W$~\cite{CDH10a}.
Hence positive winning for coB\"uchi objectives can be ensured by solving
almost-sure safety and positive reachability, and thus we obtain the
EXPTIME-complete result from results of almost-sure safety and positive
reachability.
However, from the previous construction it was not clear whether belief-based stationary 
strategies are sufficient or not, and Example~\ref{ex:complex} shows
that belief-based stationary strategies are not sufficient. 
\end{remark}





\subsection{Upper bound on memory of finite-memory strategies}
For the following of the section, we fix a POMDP $\game=(S,\Act,\trans,\Obs,\obsmap,s_0)$,
with a Muller objective $\Muller(\calf)$ with the set $\priority$ of colors and
a color mapping function $\col$. 
We will denote by $\D$ the powerset of the powerset of the set $\priority$  
of colors, 
i.e., $\D=\powset(\powset(\priority))$; and
note that $|\D|=2^{2^{d}}$, where $d=|\priority|$.
The goal of the section is to prove the following fact: given a
finite-memory almost-sure (resp. positive) winning strategy $\straa$ 
on $\game$ there exists a finite-memory almost-sure (resp. positive) 
winning strategy $\straa'$ on $\game$, of memory size at most 
$\Mem^*=2^{|S|} \cdot 2^{|S|} \cdot |\D|^{|S|} $.

\smallskip\noindent{\bf Overview of the proof.} We first present an overview of 
our proof structure.
\begin{itemize}
\item Given an arbitrary finite-memory strategy $\sigma$ we will consider the 
Markov chain $G \restr \sigma$ arising by fixing the strategy.

\item Given the Markov chain we will define a projection graph that depends 
on the recurrent classes of the Markov chain. 
The projection graph is of size at most $\Mem^*$. 

\item Given the projection graph we will construct a projected strategy with 
memory size at most $\Mem^*$ that preserves the recurrent classes of the 
Markov chain $G \restr \sigma$.
\end{itemize}

\smallskip\noindent{\bf Notations.}
Given $Z\in\D^{|S|}$ and given $s\in S$, we write $Z(s)$ 
(which is in $\D=\powset(\powset(D))$) 
for the $s$-component of $Z$. 
For two sets $U_1$ and $U_2$ and $U \subseteq U_1\times U_2$, we denote by $\proj_1(U)$ 
the projection of $U$ on the first component, formally, 
$\proj_1(U)=\set{u_1 \in U_1 \mid \exists u_2 \in U_2. (u_1,u_2) \in U}$;
and the definition of $\proj_2(U)$ for the projection on the second component 
is analogous.

\smallskip\noindent{\bf Basic definitions for the projection graph.}
We now introduce notions associated with the finite Markov chain 
$G \restr \sigma$ that will be essential in defining the projection graph.

\begin{definition}[Recurrence set functions]
Let $\straa$ be a finite-memory strategy with memory $M$ on $\game$ for the
Muller objective with the set $\priority$ of colors, and let $m \in M$.
\begin{itemize}
 \item \emph{(Function set recurrence).} 
The function $\SetRec_{\straa}(m): S \to \D$  maps every state 
$s \in S$ to the projections of colors 
of recurrent classes reachable from 
$(s,m)$ in $\game \restr \sigma$.
Formally, $\SetRec_{\straa}(m)(s) =
\set{\col(\proj_1(U)) \mid U \in \Rec(\game \restr \straa)((s,m))}$,
i.e., we consider the set $\Rec(\game \restr \straa)((s,m))$ of 
recurrent classes reachable from the state $(s,m)$ in $\game \restr \straa$, 
obtain the projections on the state space $S$ and consider the colors 
of states in the projected set.
We will in sequel consider $\SetRec_{\straa}(m) \in \D^{|S|}$.


\item \emph{(Function boolean recurrence).} The function  $\BoolRec_\straa(m): S \to \set{0,1}$ 
 is such that for all $s\in S$, we have $\BoolRec_\straa(m)(s)=1$ if there exists $U\in \Rec(\game \restr \straa)((s,m))$ 
such that  $(s,m)\in U$, and $0$ if not.
Intuitively, $\BoolRec_{\straa}(m)(s)=1$ if $(s,m)$ belongs to a recurrent class in $\game \restr \straa$ and
$0$ otherwise.
In sequel we will consider $\BoolRec_{\straa}(m) \in \set{0,1}^{|S|}$.
\end{itemize}
\end{definition}

%

We first define the projection graph and then present  
a simple property of $\SetRec_{\straa}(m)$ function 
related to the reachability property.

\begin{definition}[Projection graph]
Let $\straa$ be a finite-memory strategy. 
We define the \emph{projection graph} $\PG(\straa)=(V,E)$ associated to 
$\straa$ as follows:
\begin{itemize}
\item \emph{(Vertex set).} 
The set of vertices is $V=\set{ (U,\BoolRec_\straa(m),\SetRec_\straa(m)) 
\mid U\subseteq S\ \mathrm{and}\ m\in M}$. 
 
\item \emph{(Edge labels).} 
The edges of the graph are labeled by actions in $\Act$.

\item \emph{(Edge set).} 
Let $U\subseteq S$, $m\in M$ and $a\in \Supp(\straa_n(m))$. 
Let $\ov{U}=\bigcup_{s \in U} \Supp(\trans(s,a))$
denote the set of possible successors of states in $U$ given action $a$.
We add the following set of edges in $E$: Given $(U',m')$ such that 
there exists $o\in \Obs$ with $\obsmap^{-1}(o) \cap \ov{U}=U'$ and 
 $m'\in \Supp(\straa_u(m,o,a))$,
we add the edge $(U,\BoolRec_\straa(m),\SetRec_\straa(m))\stackrel{a}{\rightarrow}(U',\BoolRec_\straa(m'),\SetRec_\straa(m'))$ to $E$.
Intuitively, the update from $U$ to $U'$ is the update of the belief, i.e.,
if the previous belief is the set $U$ of states, and the current observation 
is $o$, then the new belief is $U'$; 
the update of $m$ to $m'$ is according to the support of the memory update
function; and the $\BoolRec$ and $\SetRec$ functions for the memories are
given by the strategy $\straa$.
 
\item \emph{(Initial vertex).} 
The \emph{initial vertex} of $\PG(\straa)$ is the vertex $(\set{s_0},\BoolRec_\straa(m_0),\SetRec_\straa(m_0))$.
\end{itemize}

\end{definition}
Note that $V\subseteq\powset(S)\times \set{0,1}^{|S|} \times 
\D^{|S|}$, and hence $|V|\leq \Mem^*$. For the rest of this section we fix an arbitrary finite-memory strategy $\straa$ that uses memory $M$.
\begin{lemma}\label{l-incl-rec}
Let $s,s'\in S$ and $m,m'\in M$ be such that $(s',m')$ is reachable from 
$(s,m)$ in $\game \restr \straa$. 
Then $\SetRec_\straa(m')(s') \subseteq \SetRec_\straa(m)(s)$.
\end{lemma}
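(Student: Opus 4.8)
The plan is to reduce the statement to a purely Markov-chain fact about reachability of recurrent classes, already isolated as Property~4 in the excerpt. First I would unwind the definition: $\SetRec_\straa(m)(s)$ is, by definition, the set $\set{\col(\proj_1(U)) \mid U \in \Rec(\game\restr\straa)((s,m))}$, and similarly with $(s',m')$. So it suffices to show the inclusion $\Rec(\game\restr\straa)((s',m')) \subseteq \Rec(\game\restr\straa)((s,m))$ at the level of the sets of reachable recurrent classes of the Markov chain $\game\restr\straa$; applying $\proj_1$ and then $\col(\cdot)$ to both sides preserves the inclusion and yields exactly $\SetRec_\straa(m')(s') \subseteq \SetRec_\straa(m)(s)$.

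For the remaining inclusion of recurrent-class sets, I would invoke Property~4 of Markov chains stated in the excerpt: if $\ov s'$ is reachable from $\ov s$ in a finite Markov chain, then $\Rec(\ov\game)(\ov s') \subseteq \Rec(\ov\game)(\ov s)$. Here the relevant finite Markov chain is $\game\restr\straa$ on state space $S\times M$ (which is finite since $\straa$ is a finite-memory strategy, as fixed in the paragraph preceding the lemma), the role of $\ov s$ is played by $(s,m)$ and that of $\ov s'$ by $(s',m')$; the hypothesis of the lemma is precisely that $(s',m')$ is reachable from $(s,m)$ in $\game\restr\straa$. This is essentially a one-line transitivity argument: any recurrent class reachable from $(s',m')$ is, via a path through $(s',m')$, also reachable from $(s,m)$.

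There is no real obstacle here — the lemma is a routine consequence of Property~4 together with unfolding the definition of $\SetRec$. The only point deserving a word of care is that we must take the projection $\proj_1$ and then the colour map $\col$ of the recurrent classes on both sides, and note that set inclusion of families of recurrent classes is preserved under taking the image $\set{\col(\proj_1(U)) \mid U \in \cdot}$; this is immediate since taking images of a smaller family of sets yields a smaller family of colour-sets. I would write the proof in three sentences: reduce to the inclusion of reachable-recurrent-class sets, apply Property~4 to the finite Markov chain $\game\restr\straa$, and push the inclusion forward through $\proj_1$ and $\col$.
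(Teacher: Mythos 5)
Your proposal is correct and follows exactly the paper's argument: apply Property~4 to the finite Markov chain $\game\restr\straa$ to get $\Rec(\game\restr\straa)((s',m'))\subseteq\Rec(\game\restr\straa)((s,m))$, then note that this inclusion is preserved under applying $\proj_1$ and $\col$. Nothing to add.
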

\begin{proof}
Since $(s',m')$ is reachable from $(s,m)$ in $\game \restr \straa$, it follows 
by simple properties of Markov chains and recurrent classes that we have 
$\Rec(\game \restr \straa)((s',m')) \subseteq \Rec(\game \restr \straa)((s,m))$ 
(Property~4 of Markov chains).
The $\subseteq$ relation is preserved under the projections on states, and 
then considering the color mapping. 
Hence the result follows.
\end{proof}

In the following two lemmas we establish the connection of functions 
$\BoolRec_{\straa}(m)$ and $\SetRec_{\straa}(m)$ with the edges of 
the projection graph. 
The intuitive description of the first lemma is as follows: 
it shows that if $\BoolRec$ is set to~1 for a vertex of the projection graph, 
then for all successors according to the edges of the projection graph,  
$\BoolRec$ is also~1 for the successors.
The second lemma shows a similar result for the projection graph showing 
that the $\SetRec$ functions are subsets for each component for 
successor vertices.

\begin{lemma}\label{l-C-croit}
Let $(V,E)=\PG(\straa)$ be the projection graph of $\straa$.
Let $(U,B,L)\stackrel{a}{\rightarrow}(U',B',L')$ be an edge in $E$, 
where $U,U'\subseteq S$, $B,B'\in\set{0,1}^{|S|}$, and 
$L,L'\in\D^{|S|}$. 
Then for all $s\in U$ and $s'\in \Supp(\trans(s,a))$ the following assertion 
holds:  if $B(s) =1$, then $B'(s')=1$.
\end{lemma}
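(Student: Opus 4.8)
The plan is to unwind the definitions and track what an edge of the projection graph means back in the Markov chain $\game \restr \straa$. Fix an edge $(U,B,L)\stackrel{a}{\rightarrow}(U',B',L')$ in $E$. By the definition of the edge set there are memory states $m,m'\in M$ and an observation $o$ such that $B=\BoolRec_\straa(m)$, $B'=\BoolRec_\straa(m')$, $a\in\Supp(\straa_n(m))$, $m'\in\Supp(\straa_u(m,o,a))$, and $U'=\obsmap^{-1}(o)\cap \ov U$ where $\ov U=\bigcup_{t\in U}\Supp(\trans(t,a))$. Now take $s\in U$ and $s'\in\Supp(\trans(s,a))$ with $\obsmap(s')=o$; note such $s'$ is exactly an element of $U'$, and since $a\in\Supp(\straa_n(m))$, $s'\in\Supp(\trans(s,a))$, and $m'\in\Supp(\straa_u(m,\obsmap(s'),a))$, the decomposition formula for $\trans_\straa$ shows that $\trans_\straa\big((s',m')\mid(s,m)\big)>0$, i.e.\ $(s',m')$ is a one-step successor of $(s,m)$ in $\game\restr\straa$.

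The next step is to use the hypothesis $B(s)=\BoolRec_\straa(m)(s)=1$. By definition this means $(s,m)$ lies in some recurrent class $U_0\in\Rec(\game\restr\straa)((s,m))$, and by Property~2 of Markov chains this is the \emph{only} recurrent class reachable from $(s,m)$, so $(s,m)$ is itself a recurrent state. Since $(s',m')$ is reachable from $(s,m)$ in one step and $(s,m)$ is recurrent (it sits in a bottom scc), $(s',m')$ must lie in the same recurrent class $U_0$; more formally, by Property~4, $\Rec(\game\restr\straa)((s',m'))\subseteq\Rec(\game\restr\straa)((s,m))=\set{U_0}$, and since a nonempty set of reachable recurrent classes always exists (Property~1(b)), we get $\Rec(\game\restr\straa)((s',m'))=\set{U_0}$, with $(s',m')\in U_0$ because $(s,m)\in U_0$ and $U_0$ is a bottom scc containing all its own one-step successors. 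Hence $(s',m')$ belongs to a recurrent class reachable from itself, which by definition gives $\BoolRec_\straa(m')(s')=1$, i.e.\ $B'(s')=1$.

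I expect the only genuinely delicate point to be the claim that a one-step successor of a recurrent state in a finite Markov chain is again recurrent and in the same class. This is standard — a recurrent class is a bottom scc, so it has no outgoing edges, and any edge from a state inside it must stay inside it — but it should be spelled out carefully (invoking Property~2 to pin down that $\Rec$ of a recurrent state is the singleton of its own class, and then observing the successor cannot escape it). The rest is pure bookkeeping: matching up the indices $m,m',o$ from the edge definition with the transition-probability decomposition of $\trans_\straa$, and observing that $\obsmap(s')=o$ forces $s'\in U'$ (which we do not even strictly need for the $\BoolRec$ conclusion, only for consistency with the edge). No hard estimates are involved; the content is entirely in carefully reading the definition of $\PG(\straa)$ and the elementary Markov-chain properties already listed.
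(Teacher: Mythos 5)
Your proof is correct and follows essentially the same route as the paper's: unwind the edge of $\PG(\straa)$ to obtain $m,m'$ with $a\in\Supp(\straa_n(m))$ and $m'\in\Supp(\straa_u(m,o,a))$, conclude that $(s',m')$ is a one-step successor of $(s,m)$ in $\game\restr\straa$, and then use that the one-step successors of a recurrent state stay in its bottom scc to propagate $\BoolRec$ from $(s,m)$ to $(s',m')$. Your explicit restriction to $s'$ with $\obsmap(s')=o$ coincides with what the paper does implicitly (its proof invokes $m'\in\Supp(\straa_u(m,\obsmap(s'),a))$, which is only justified for that $o$), and this is exactly the form in which the lemma is later applied, so nothing essential is lost.
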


\begin{proof}
We first note that if $(U,B,L) \stackrel{a}{\rightarrow}(U',B',L')$ is an 
edge in $E$, then there exists memory state $m$ and $m'$ such that 
(i)~$B=\BoolRec_{\straa}(m)$ and $L=\SetRec_{\straa}(m)$;
(ii)~$B'=\BoolRec_{\straa}(m')$ and $L'=\SetRec_{\straa}(m')$;
(iii)~$a\in \Supp(\straa_n(m))$ and $m'\in \Supp(\straa_u(m,\obsmap(s'),a))$.
Hence it follows that $(s',m')$ is reachable in one step from $(s,m)$ in 
$\game \restr \straa$.
Now, if $(s',m')$ is reached with positive probability from $(s,m)$ in 
$\game \restr \straa$ and if $(s,m)$ is a recurrent state of $\game \restr
\straa$, then $(s',m')$ is also recurrent and lies in the same recurrent 
class as $(s,m)$ (since both $(s,m)$ and $(s',m')$ would lie in the same 
bottom scc of the graph of the Markov chain).
Thus if $\BoolRec_{\straa}(m)(s)=1$, then 
$\BoolRec_{\straa}(m')(s')=1$.
Since $B=\BoolRec_\straa(m)$ and $B'=\BoolRec_\straa(m')$, 
the desired result follows.
\end{proof}

\begin{lemma}\label{l-Z-decrease}
Let $(V,E)=\PG(\straa)$ be the projection graph of $\straa$.
Let $(U,B,L)\stackrel{a}{\rightarrow}(U',B',L')$ be an edge in $E$, 
where $U,U'\subseteq S$, $B,B'\in\set{0,1}^{|S|}$, and 
$L,L'\in\D^{|S|}$. 
Then for all $s\in S$ and all $s'\in \Supp(\trans(s,a))$, we have 
$L'(s')\subseteq L(s)$.
\end{lemma}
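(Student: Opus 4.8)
The plan is to mirror the proof of Lemma~\ref{l-C-croit}, replacing the statement about recurrence membership with the reachability statement of Lemma~\ref{l-incl-rec}. First I would recall that if $(U,B,L)\stackrel{a}{\rightarrow}(U',B',L')$ is an edge of $\PG(\straa)$, then by the definition of the projection graph there exist memory states $m,m'\in M$ such that $B=\BoolRec_\straa(m)$, $L=\SetRec_\straa(m)$, $B'=\BoolRec_\straa(m')$, $L'=\SetRec_\straa(m')$, with $a\in\Supp(\straa_n(m))$ and $m'\in\Supp(\straa_u(m,\obsmap(s'),a))$ for the relevant observation. In particular, for any $s\in S$ and $s'\in\Supp(\trans(s,a))$, the state $(s',m')$ is reachable from $(s,m)$ in one step in the Markov chain $\game\restr\straa$, since the transition probability $\trans_\straa\big((s',m')\mid(s,m)\big)$ includes the positive summand $\straa_n(m)(a)\cdot\trans(s,a)(s')\cdot\straa_u(m,\obsmap(s'),a)(m')$.

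Next I would invoke Lemma~\ref{l-incl-rec} directly: since $(s',m')$ is reachable from $(s,m)$ in $\game\restr\straa$, we have $\SetRec_\straa(m')(s')\subseteq\SetRec_\straa(m)(s)$. Finally, substituting $L'=\SetRec_\straa(m')$ and $L=\SetRec_\straa(m)$ gives $L'(s')\subseteq L(s)$, which is the claim.

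One point that needs a little care is that the memory state $m'$ witnessing the edge in the projection graph may, a priori, depend on which observation $o$ is used in the edge definition, and the lemma quantifies over \emph{all} $s\in S$ and $s'\in\Supp(\trans(s,a))$ rather than only those consistent with a single fixed $U$ and $o$. The resolution is that the conclusion $L'(s')\subseteq L(s)$ is really a statement purely about the memory states $m$ and $m'$ attached to the vertices $(U,B,L)$ and $(U',B',L')$ (via $B=\BoolRec_\straa(m)$ etc.), together with the fact that $a\in\Supp(\straa_n(m))$; once $m$ and $m'$ are fixed as the witnesses of the edge, the reachability of $(s',m')$ from $(s,m)$ holds for every $s'\in\Supp(\trans(s,a))$ regardless of observations, so Lemma~\ref{l-incl-rec} applies uniformly. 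There is no real obstacle here — the lemma is essentially an immediate corollary of Lemma~\ref{l-incl-rec} and the edge definition; the only thing to be explicit about is the one-step reachability in $\game\restr\straa$ induced by an edge of $\PG(\straa)$.
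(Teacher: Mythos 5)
Your proof is essentially the paper's: extract the memory witnesses $m,m'$ from the edge of $\PG(\straa)$, note that $(s',m')$ is a one-step successor of $(s,m)$ in $\game\restr\straa$, and conclude by monotonicity of $\SetRec$ under reachability (the paper invokes Property~4 directly, which is exactly the content of Lemma~\ref{l-incl-rec}). One remark on your closing caveat: the assertion that one-step reachability holds ``regardless of observations'' is not literally justified, since $\straa_u(m,\cdot,a)$ is observation-dependent and the edge only guarantees $m'\in\Supp(\straa_u(m,o,a))$ for the particular observation $o$ with $U'=\obsmap^{-1}(o)\cap\bigcup_{t\in U}\Supp(\trans(t,a))$ (so one-step reachability of $(s',m')$ is guaranteed only when $\obsmap(s')=o$); however, the paper's own proof makes the same silent assumption, and in every use of the lemma (Lemma~\ref{l-chain_Z}) the successor's observation does coincide with $o$, so this does not separate your argument from the paper's.
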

\begin{proof}
By definition of $\PG(\straa)$ (as in the proof of Lemma~\ref{l-C-croit}), 
there exist $m,m'\in M$ such that (i)~$B=\BoolRec_{\straa}(m)$ and $L=\SetRec_{\straa}(m)$;
(ii)~$B'=\BoolRec_{\straa}(m')$ and $L'=\SetRec_{\straa}(m')$.
Moreover, $a\in \Supp(\straa_n(m))$, and $(U',m')$ is such that there exists
$o \in \Obs$ with $U' =(\bigcup_{s\in U} \Supp(\trans(s,a)) \cap \obsmap^{-1}(o) $
and $m' \in \Supp(\straa_u(m,o,a))$.
This implies that $(s',m')$ is reachable from $(s,m)$ in $\game \restr \straa$.
As a consequence
$\Rec(\game \restr \straa)((s',m')) \subseteq 
\Rec(\game \restr \straa)((s,m))$ (Property~4), and as $\subseteq$ relation is preserved by the projection on the states and then on the colors, 
it follows that $L'(s')\subseteq L(s)$.
\end{proof}

We now define the notion of projected strategies: intuitively
the projected strategy collapses memory with same $\BoolRec$ and $\SetRec$
functions, and at a collapsed memory state plays uniformly the union of the 
actions played at the corresponding memory states.

\begin{definition}[Projected strategy $\projp(\straa)$ of a finite-memory 
strategy]
Let $\PG(\straa)=(V,E)$ be the projection graph of $\straa$. 
We define the following projected strategy $\straa'=\projp(\straa)=(\straa_u',\straa_n',M',m_0')$:
\begin{itemize}
  \item \emph{(Memory set).} The memory set of $\projp(\straa)$ is $M'=V=\set{(U,\BoolRec_\straa(m),\SetRec_\straa(m)) \mid U\subseteq S\ \mathrm{and}\ m\in M}$.
  \item \emph{(Initial memory).} The initial memory state of $\projp(\straa)$ is
$m_0'=(\set{s_0},\BoolRec_\straa(m_0),\SetRec_\straa(m_0))$.
  \item \emph{(Memory update).} Let $m=(U,B,L)\in M'$, $o\in\Obs$ and $a\in\Act$. 
  Then $\straa_u'(m,o,a)$ is the uniform distribution over the set 
$\set{m'=(U',B',L')\in M'\mid  m\stackrel{a}{\rightarrow}m'\in E \mbox{ and } U'
\subseteq \obsmap^{-1}(o)}$.
  \item \emph{(Action selection).} Given $m\in M'$, the action selection function 
$\straa'_n(m)$ is the uniform distribution over 
$\set{a\in\Act\mid \exists m'\in M'\ s.t.\ m\stackrel{a}{\rightarrow}m'\in E}$.
\end{itemize}
\end{definition}

\smallskip\noindent{\bf Markov chain of the projected strategy.}
For the following of the section, we fix a finite-memory strategy $\straa$ on $\game$, 
 let $(V,E)=\PG(\straa)$ be the projection graph, and let $\straa'=\projp(\straa)$ be
the projected strategy. 
The finite-memory strategy $\straa'=(\straa_u',\straa_n',M',m_0')$ induces a probability transition function on 
$S\times M'$: given $s,s'\in S$ and $m,m'\in M'$ 
let $\trans_{\straa'}\big((s',m')\ |\ (s,m)\big)$ 
be the probability to go from state $(s,m)$ to state $(s',m')$ in one step if we use strategy $\straa'$. 
Formally, 
\[
\trans_{\straa'}\big((s',m')\ |\
(s,m)\big)= \sum_{a \in \Act} \straa_n'(m)(a) \cdot \trans(s,a)(s')\cdot\straa_u'(m,\obsmap(s'),a)(m').
\]
The chain $\game \restr \straa'$ is a finite state Markov chain, 
with state space $S\times M'$, which is a subset of $S\times \powset(S)\times \set{0,1}^{|S|}\times \D^{|S|}$. 
Given $X\in S$, $Y\subseteq S$, $C\in\set{0,1}^{|S|}$, and $Z\in\D^{|S|}$, 
let $\Succ_1((X,Y,C,Z))$ denote the set of states of the Markov chain reachable in one step from the 
state $(X,Y,C,Z)$.

\smallskip\noindent{\bf Random variable notations.} 
For all $n\geq0$ we write $X_n,Y_n,C_n,Z_n,W_n$ for the random variables which 
correspond respectively to the projection of the $n$-th state of the Markov
chain $\game \restr \straa'$ on the $S$ component, 
the $\powset(S)$ component, the $\set{0,1}^{|S|}$ component,
the $\D^{|S|}$ component, and the $n$-th action, respectively.

\smallskip\noindent{\bf Run of the Markov chain $\game \restr \straa'$.}
A \emph{run} on $\game \restr \straa'$ is a sequence 
\[
r=(X_0,Y_0,C_0,Z_0)\stackrel{W_0}{\rightarrow}(X_1,Y_1,C_1,Z_1)\stackrel{W_1}{\rightarrow}...
\]
such that each finite prefix of $r$ is generated with positive probability on the chain, 
i.e., for all $i \geq 0$, we have  
(i)~$W_i \in \Supp(\straa'_n(Y_i,C_i,Z_i))$; 
(ii)~$X_{i+1} \in \Supp(\trans(X_i,W_i))$; and
(iii)~$(Y_{i+1},C_{i+1},Z_{i+1}) \in 
\Supp(\straa'_u((Y_i,C_i,Z_i),\obsmap(X_{i+1}),W_i))$. 
In the following three lemmas we establish crucial properties of the Markov 
chain obtained from the projected strategy. 

\begin{lemma}
Let $X\in S$, $Y\subseteq S$, $C\in\set{0,1}^{|S|}$, and $Z\in\D^{|S|}$. 
Then:
\[
Z(X)=\bigcup_{(X',Y',C',Z')\in \Succ_1((X,Y,C,Z))} Z'(X').
\]
\end{lemma}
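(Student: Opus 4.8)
The plan is to prove the equality by double inclusion, using the characterization of the $\SetRec$ function as "colors of recurrent classes reachable from $(s,m)$" together with Property~5 of Markov chains (reachable recurrent classes from a state are exactly the union of those reachable from its successors). First I would unpack the structure of a state $(X,Y,C,Z)$ of $\game \restr \straa'$: by definition of the projected strategy, the memory component $(Y,C,Z)$ is a vertex of $\PG(\straa)$, hence there exists a memory state $m \in M$ of the original strategy with $C = \BoolRec_\straa(m)$ and $Z = \SetRec_\straa(m)$; in particular $Z(X) = \SetRec_\straa(m)(X)$, which by definition equals $\set{\col(\proj_1(U)) \mid U \in \Rec(\game \restr \straa)((X,m))}$. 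The idea is then to relate the successors of $(X,Y,C,Z)$ in $\game\restr\straa'$ to the successors of $(X,m)$ in $\game\restr\straa$ and apply Property~5.

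For the inclusion $\bigcup_{(X',Y',C',Z') \in \Succ_1((X,Y,C,Z))} Z'(X') \subseteq Z(X)$: each successor $(X',Y',C',Z')$ arises because $X' \in \Supp(\trans(X,a))$ for some action $a \in \Supp(\straa_n'(Y,C,Z))$ and the memory updates accordingly, so $(X',Y',C',Z')$ is reachable from $(X,Y,C,Z)$ in one step in $\game\restr\straa'$. By Lemma~\ref{l-Z-decrease} (applied to the edge $(Y,C,Z) \stackrel{a}{\rightarrow} (Y',C',Z')$ of the projection graph, with the roles of $U,s$ played here by the belief component — note the lemma is stated for all $s \in S$, not just $s$ in the belief set) we get $Z'(X') \subseteq Z(X)$ directly, and taking the union over all successors gives the desired inclusion.

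For the reverse inclusion $Z(X) \subseteq \bigcup_{(X',Y',C',Z') \in \Succ_1((X,Y,C,Z))} Z'(X')$: fix a color set $c \in Z(X) = \SetRec_\straa(m)(X)$, so $c = \col(\proj_1(U))$ for some recurrent class $U \in \Rec(\game\restr\straa)((X,m))$. Since the action selection function $\straa_n'$ at memory $(Y,C,Z)$ plays \emph{uniformly over all} actions labeling outgoing edges of the projection graph — which includes every action in $\Supp(\straa_n(m))$ — and the memory update $\straa_u'$ covers all corresponding memory successors, I would argue that for every one-step successor $(X',m)$ of $(X,m)$ in the original chain $\game\restr\straa$ there is a corresponding successor state $(X',Y',C',Z')$ of $(X,Y,C,Z)$ in $\game\restr\straa'$ with $Z' = \SetRec_\straa(m')$ for the appropriate $m'$. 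By Property~5, $\Rec(\game\restr\straa)((X,m)) = \bigcup_{(X',m') } \Rec(\game\restr\straa)((X',m'))$ where the union ranges over one-step successors $(X',m')$ of $(X,m)$; hence $U$ is reachable from some such $(X',m')$, so $c \in \SetRec_\straa(m')(X') = Z'(X')$ for the matching successor of $(X,Y,C,Z)$. The main obstacle I anticipate is the bookkeeping in this reverse direction: one must verify carefully that the projected strategy's action and memory-update supports are genuinely \emph{large enough} that every original successor $(X',m')$ of $(X,m)$ is "simulated" by some successor of $(X,Y,C,Z)$ (this is precisely where the design choice of playing uniformly over the union of actions matters), and that the belief component $Y'$ is consistent, i.e. $Y' \subseteq \obsmap^{-1}(\obsmap(X'))$ and the edge $(Y,C,Z)\stackrel{a}{\to}(Y',C',Z')$ really lies in $E$. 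Once the correspondence between one-step successors in the two chains is pinned down, both inclusions follow mechanically from Property~5 and Lemma~\ref{l-Z-decrease}.
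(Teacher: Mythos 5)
Your proposal is correct and follows essentially the same route as the paper, whose own proof is a one-line appeal to Property~5 (the recurrent classes reachable from a state of $\game\restr\straa$ are the union of those reachable from its one-step successors), preserved under projection to colors. Your extra bookkeeping---fixing $m$ with $C=\BoolRec_\straa(m)$, $Z=\SetRec_\straa(m)$, matching one-step successors of $(X,m)$ in $\game\restr\straa$ with successors of $(X,Y,C,Z)$ in $\game\restr\straa'$ via the projection-graph edges, and invoking Lemma~\ref{l-Z-decrease} for the reverse inclusion---simply makes explicit what the paper leaves implicit.
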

\begin{proof}
This follows from the following basic property of finite Markov chains: given a state $s$ of a finite Markov chain, the set of recurrent classes reachable from $s$ is the union of the 
set of recurrent classes reachable from the set of states reachable from $s$ in one step (Property~5 of Markov chains). The relation is preserved by projection on the colors 
of states.
\end{proof}

\begin{lemma}\label{l-decroit}
Let $\straa'=\projp(\straa)$ be 
the projected strategy of $\straa$.
Given $s,s'\in S$ and $m,m'\in M$, if $(s',m')$ is reachable from $(s,m)$ in $\game\restr \straa$, then for all $Y\subseteq S$ 
such that $(s,Y,\BoolRec_\straa(m),\SetRec_\straa(m))$ is a state of $\game \restr \straa'$, there exists $Y'\subseteq S$ 
such that $(s',Y',\BoolRec_\straa(m'),\SetRec_\straa(m'))$ is reachable from $(s,Y,\BoolRec_\straa(m),\SetRec_\straa(m))$ in $\game \restr \straa'$.
\end{lemma}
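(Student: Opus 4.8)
The plan is to prove Lemma~\ref{l-decroit} by induction on the length of a path from $(s,m)$ to $(s',m')$ in $\game \restr \straa$. The base case is trivial: if the path has length $0$ then $(s,m) = (s',m')$ and we take $Y' = Y$. For the inductive step it suffices to handle a single transition $(s,m) \to (s_1,m_1)$ in $\game \restr \straa$ taken with positive probability, and then compose. So the crux is the following one-step claim: if $(s_1,m_1)$ is reached in one step from $(s,m)$ in $\game \restr \straa$, and $(s,Y,\BoolRec_\straa(m),\SetRec_\straa(m))$ is a state of $\game \restr \straa'$, then there is some $Y_1 \subseteq S$ such that $(s_1,Y_1,\BoolRec_\straa(m_1),\SetRec_\straa(m_1))$ is a successor of $(s,Y,\BoolRec_\straa(m),\SetRec_\straa(m))$ in $\game \restr \straa'$.

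To establish the one-step claim I would unpack what ``$(s_1,m_1)$ reached in one step in $\game \restr \straa$'' means: there is an action $a$ with $\straa_n(m)(a) > 0$, i.e.\ $a \in \Supp(\straa_n(m))$, with $s_1 \in \Supp(\trans(s,a))$, and $m_1 \in \Supp(\straa_u(m, \obsmap(s_1), a))$. Writing $m = (U,B,L)$ as an element of $M' = V$ (with $B = \BoolRec_\straa(m)$ and $L = \SetRec_\straa(m)$), the key point is that by construction of the projection graph $\PG(\straa)$, the edge $(U,B,L) \stackrel{a}{\to} (U_1, \BoolRec_\straa(m_1), \SetRec_\straa(m_1))$ belongs to $E$, where $U_1 = \bigl(\bigcup_{t \in U}\Supp(\trans(t,a))\bigr) \cap \obsmap^{-1}(\obsmap(s_1))$ --- provided $s \in U$ so that $s_1$ is indeed in the successor belief and $U_1$ is nonempty. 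This last proviso is exactly why the lemma is phrased with $Y$ ranging over sets such that $(s,Y,\dots)$ is a \emph{reachable} state of $\game \restr \straa'$: one needs the invariant that in any such state, the first ($S$) component always lies in the second ($\powset(S)$, belief) component. I would either invoke this invariant if it has been established, or observe it follows by a straightforward induction from the initial state $(\{s_0\},\{s_0\},\dots)$ together with the form of $\straa'_u$ (which intersects the successor belief with $\obsmap^{-1}(\obsmap(s_1))$, and $\obsmap(s_1)$ is the observation of the new $S$-component). Given $s \in Y$, the action $a$ is in $\Supp(\straa'_n(s,Y,B,L))$ since $a$ labels an outgoing edge of $(U,B,L)$ in $E$, hence of the vertex regarded as memory element; the $S$-transition to $s_1$ has positive probability since $s_1 \in \Supp(\trans(s,a))$; and the memory component updates from $(Y,B,L)$ to $(Y_1, \BoolRec_\straa(m_1), \SetRec_\straa(m_1))$ with positive probability under $\straa'_u$, where $Y_1 = \bigl(\bigcup_{t\in Y}\Supp(\trans(t,a))\bigr) \cap \obsmap^{-1}(\obsmap(s_1))$, precisely because $m \stackrel{a}{\to} m_1' \in E$ for the corresponding vertex $m_1'$ and $\straa'_u$ puts uniform positive mass over all such edge-successors with matching observation. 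This gives the desired successor in $\game \restr \straa'$ with $Y' = Y_1$, closing the one-step claim and hence, by composition along the path, the lemma.

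The main obstacle I anticipate is the bookkeeping around the belief component $Y$ versus the belief component $U$ hidden inside $m$: the memory elements of $\straa'$ carry a belief $U$, and separately the Markov chain $\game \restr \straa'$ tracks a state-component $X$ together with a belief $Y$; one must be careful that the $Y$ appearing in a reachable state of $\game \restr \straa'$ need not equal the $U$ inside the memory element, yet the successor relation still goes through because the edge relation of $\PG(\straa)$ and the form of $\straa'_u$ only ever constrain things through supports and observations, not through exact belief equality. The cleanest way to manage this is to carry along the invariant ``$X \in Y$ and $\obsmap(X)$ is constant on $Y$'' (or whatever minimal invariant is needed) for reachable states of $\game \restr \straa'$, prove it once by induction, and then apply it; the rest of the argument is then essentially a direct translation of ``positive probability transition in $\game \restr \straa$'' into ``positive probability transition in $\game \restr \straa'$'' via the definition of $\projp(\straa)$.
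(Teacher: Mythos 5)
Your core argument is the same as the paper's: translate one positive-probability step of $\game\restr\straa$ (an action $a\in\Supp(\straa_n(m))$ with $s_1\in\Supp(\trans(s,a))$ and $m_1\in\Supp(\straa_u(m,\obsmap(s_1),a))$) into an edge of $\PG(\straa)$ from $(Y,\BoolRec_\straa(m),\SetRec_\straa(m))$ to $(Y_1,\BoolRec_\straa(m_1),\SetRec_\straa(m_1))$, conclude that this gives a positive-probability step of $\game\restr\straa'$, and then compose along the path by induction. That is exactly how the paper proves the lemma.

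The one point where you deviate is the proviso that $s\in Y$ (and that $Y_1\neq\emptyset$), which you propose to secure by reading the hypothesis as ``$(s,Y,\dots)$ is a \emph{reachable} state'' and proving an invariant from the initial state. This proviso is not needed, and imposing it would actually prove a weaker statement than the lemma: the lemma quantifies over \emph{every} $Y\subseteq S$ for which $(s,Y,\BoolRec_\straa(m),\SetRec_\straa(m))$ is a state of the chain (the word ``reachable'' does not appear), and later applications (e.g.\ Lemma~\ref{l-reach-pr-prob-1}) invoke it from arbitrary starting states. No invariant is required because the edges of $\PG(\straa)$ are defined for every $U\subseteq S$ and every $m$ with $a\in\Supp(\straa_n(m))$ and $m_1\in\Supp(\straa_u(m,\obsmap(s_1),a))$, with no requirement that $s\in U$ or that $U'=\ov{U}\cap\obsmap^{-1}(\obsmap(s_1))$ be nonempty; and $\straa'_u((Y,B,L),\obsmap(s_1),a)$ then puts positive mass on $(Y_1,B_1,L_1)$ simply because that edge exists and $Y_1\subseteq\obsmap^{-1}(\obsmap(s_1))$ --- a point your closing paragraph essentially concedes. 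Drop the proviso and the auxiliary invariant, and your proof coincides with the paper's.
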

\begin{proof}
Suppose first that $(s',m')$ is reachable from $(s,m)$ in $\game\restr \straa$ in one step. 
Let $Y\subseteq S$ 
be such that $(s,Y,\BoolRec_\straa(m),\SetRec_\straa(m))$ is a state of $\game \restr \straa'$. 
Then there exists an edge in the projection graph of $\straa$ from 
$(Y,\BoolRec_\straa(m),\SetRec_\straa(m))$ to another vertex $(Y',\BoolRec_\straa(m'),\SetRec_\straa(m'))$. 
As a consequence, there exists $Y'\subseteq S$ 
such that $(s',Y',\BoolRec_\straa(m'),\SetRec_\straa(m'))$ is reachable from 
$(s,Y,\BoolRec_\straa(m),\SetRec_\straa(m))$ in $\game \restr \straa'$.

We conclude the proof by induction: if $(s',m')$ is reachable from $(s,m)$ in $\game\restr \straa$, then there exists a sequence 
of couples $(s_1,m_1),(s_2,m_2),...,(s_i,m_i)$ such that $(s_1,m_1)=(s,m)$,
$(s_i,m_i)=(s',m')$, and for all $j\in \set{1,...,i-1}$ 
we have that $(s_{j+1},m_{j+1})$ is reachable from $(s_{j},m_{j})$ in one step. 
Using the proof for an elementary step (or one step) inductively on such a sequence, we get 
the result.
\end{proof}

\begin{lemma}\label{l-chain_Z}
Let $X_0\in S$, $Y_0\in\powset(S)$, $C_0\in\set{0,1}^{|S|}$ and 
$Z_0\in\D^{|S|}$, and 
let $r=(X_0,Y_0,C_0,Z_0)\stackrel{W_0}{\rightarrow}(X_1,Y_1,C_1,Z_1)\stackrel{W_1}{\rightarrow}...$ 
be a run on $\game \restr \straa'$ with a starting state $(X_0,Y_0,C_0,Z_0)$. 
Then for all $n\geq0$ the following assertions hold:
\begin{enumerate}
 \item $X_{n+1}\in \Supp(\trans(X_n,W_n))$.
 \item $Z_n(X_n)$ is not empty.
 \item $Z_{n+1}(X_{n+1})\subseteq Z_n(X_n)$.  
 \item $(Y_n,C_n,Z_n)\stackrel{W_n}{\rightarrow}(Y_{n+1},C_{n+1},Z_{n+1})$ is an edge in $E$, where $(V,E)=\PG(\straa)$.
 \item If $C_n(X_n) =1$, then $C_{n+1}(X_{n+1})=1$. 
 \item If $C_n(X_n)=1$, then $|Z_n(X_n)|=1$. 
  If $\lbrace Z\rbrace = Z_n(X_n)$, then for all $j \geq 0$ we have 
  $\col(X_{n+j})\in Z$.
\end{enumerate}
\end{lemma}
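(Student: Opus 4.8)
The plan is to prove the six assertions essentially in the order stated, since each builds on the previous ones, and then remark that the whole list is just the combination of properties of the projection graph $\PG(\straa)$ (Lemmas~\ref{l-C-croit}, \ref{l-Z-decrease}) with the definition of the projected strategy $\straa'=\projp(\straa)$. First I would record the trivial observation: by definition of a run on $\game \restr \straa'$, the step $(X_i,Y_i,C_i,Z_i)\stackrel{W_i}{\rightarrow}(X_{i+1},Y_{i+1},C_{i+1},Z_{i+1})$ is taken with positive probability, so $W_n\in\Supp(\straa_n'(Y_n,C_n,Z_n))$, $X_{n+1}\in\Supp(\trans(X_n,W_n))$, and $(Y_{n+1},C_{n+1},Z_{n+1})\in\Supp(\straa_u'((Y_n,C_n,Z_n),\obsmap(X_{n+1}),W_n))$. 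Assertion~1 is immediate from this. For assertion~4, I would unwind the definitions of $\straa_n'$ and $\straa_u'$: the support of $\straa_n'(Y_n,C_n,Z_n)$ consists of actions $a$ for which there is an edge $(Y_n,C_n,Z_n)\stackrel{a}{\rightarrow}m'$ in $E$, and the support of $\straa_u'$ is a set of vertices $m'=(Y',C',Z')$ with $(Y_n,C_n,Z_n)\stackrel{W_n}{\rightarrow}m'\in E$ and $Y'\subseteq\obsmap^{-1}(\obsmap(X_{n+1}))$; hence $(Y_n,C_n,Z_n)\stackrel{W_n}{\rightarrow}(Y_{n+1},C_{n+1},Z_{n+1})$ is indeed an edge of $\PG(\straa)$.

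Next, assertions~3 and~5 follow by applying the two structural lemmas about the projection graph to the edge produced in assertion~4. Since $(Y_n,C_n,Z_n)\stackrel{W_n}{\rightarrow}(Y_{n+1},C_{n+1},Z_{n+1})\in E$ and $X_n\in Y_n$ (this containment I would justify separately, see below) with $X_{n+1}\in\Supp(\trans(X_n,W_n))$, Lemma~\ref{l-Z-decrease} gives $Z_{n+1}(X_{n+1})\subseteq Z_n(X_n)$, which is assertion~3, and Lemma~\ref{l-C-croit} gives that $C_n(X_n)=1$ implies $C_{n+1}(X_{n+1})=1$, which is assertion~5. For assertion~2, I would argue that $Z_n(X_n)=\SetRec_\straa(m)(X_n)$ for the memory state $m$ backing the vertex $(Y_n,C_n,Z_n)$, and this set is nonempty because from the state $(X_n,m)$ of the finite Markov chain $\game\restr\straa$ at least one recurrent class is reachable (Property~1(b) of Markov chains); projecting onto states and then onto colors keeps the set nonempty. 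One subtlety I should handle carefully here is the invariant ``$X_n\in Y_n$'', i.e., that the first coordinate always lies in the belief coordinate: this needs a short induction on $n$, using the edge relation of $\PG(\straa)$, whose belief update is exactly the subset/belief update, together with the fact that $\straa_u'$ only retains successor vertices whose belief component is contained in $\obsmap^{-1}(\obsmap(X_{n+1}))$.

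Finally, assertion~6 is the real content and the step I expect to be the main obstacle. The first half — that $C_n(X_n)=1$ implies $|Z_n(X_n)|=1$ — I would derive from the meaning of $\BoolRec_\straa$ and $\SetRec_\straa$: if $(X_n,m)$ is a recurrent state of $\game\restr\straa$, then by Property~2 the only recurrent class reachable from $(X_n,m)$ is its own class $U$, so $\Rec(\game\restr\straa)((X_n,m))=\set{U}$ and therefore $\SetRec_\straa(m)(X_n)=\set{\col(\proj_1(U))}$ is a singleton. For the second half — that if $\set{Z}=Z_n(X_n)$ then $\col(X_{n+j})\in Z$ for all $j\ge0$ — I would chain assertions~3 and~5: by assertion~5, $C_{n+j}(X_{n+j})=1$ for all $j$, so by the singleton argument $Z_{n+j}(X_{n+j})$ is also a singleton, and by assertion~3 (applied repeatedly) $Z_{n+j}(X_{n+j})\subseteq Z_{n+j-1}(X_{n+j-1})\subseteq\cdots\subseteq Z_n(X_n)=\set{Z}$; hence $Z_{n+j}(X_{n+j})=\set{Z}$ for every $j$. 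Now $Z_{n+j}(X_{n+j})=\SetRec_\straa(m_{n+j})(X_{n+j})=\set{\col(\proj_1(U_{n+j}))}$ where $U_{n+j}$ is the (unique) recurrent class containing $(X_{n+j},m_{n+j})$, so $\col(\proj_1(U_{n+j}))=Z$, and since $(X_{n+j},m_{n+j})\in U_{n+j}$ we get $X_{n+j}\in\proj_1(U_{n+j})$ and therefore $\col(X_{n+j})\in\col(\proj_1(U_{n+j}))=Z$. The delicate bookkeeping throughout is keeping track of a choice of backing memory state $m_i$ for each vertex $(Y_i,C_i,Z_i)$ consistent with the run, which is exactly what Lemma~\ref{l-decroit} (and the edge structure of $\PG(\straa)$) lets me do; I would invoke it rather than re-proving the reachability correspondence between $\game\restr\straa$ and $\game\restr\straa'$.
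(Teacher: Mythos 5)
Your proposal is correct and takes essentially the same route as the paper: assertions~1 and~4 by unwinding the definition of a run and of $\projp(\straa)$, assertions~3 and~5 from that edge via Lemmas~\ref{l-Z-decrease} and~\ref{l-C-croit}, assertion~2 from the fact that some recurrent class is always reachable, and assertion~6 via Property~2 of Markov chains, with your chaining of assertions~3 and~5 plus the singleton/membership fact being a harmless (if anything cleaner) variant of the paper's one-step induction inside the recurrent class $R$. Two small quibbles that do not affect correctness: the backing memory state $m_i$ for each vertex $(Y_i,C_i,Z_i)$ comes directly from the definition of the vertex set of $\PG(\straa)$ rather than from Lemma~\ref{l-decroit}, and the invariant $X_n\in Y_n$ is neither provable for the arbitrary start states allowed by the lemma nor actually needed, since Lemma~\ref{l-Z-decrease} is stated for all $s\in S$ and the argument proving Lemma~\ref{l-C-croit} never uses $s\in U$ (the paper applies it in exactly the same way).
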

\begin{proof}
We prove all the points below.
\begin{enumerate}
 \item The first point is a direct consequence of the definition of the Markov chain.
 \item The second point follows also from the definition of the chain as 
 from every state of a Markov chain at least one recurrent class is reachable and hence the projection on colors 
 is not empty.
 \item The third point follows from the first point of the lemma and 
 Lemma~\ref{l-Z-decrease}.
 \item For the fourth point: given $(X_n,Y_n,C_n,Z_n)$, the strategy $\straa_n'$ samples $W_n$ with uniform probability among the elements of the set:
\[\lbrace a\in\Act\ |\ \exists m\in M\ s.t.\ (Y_n,C_n,Z_n)\stackrel{a}{\rightarrow}m\in E\rbrace.\]
Once $W_n$ has been chosen, then $\straa_u'((Y_n,C_n,Z_n),\obsmap(X_{n+1}),W_n)$ samples $(Y_{n+1},C_{n+1},Z_{n+1})$ uniformly among the elements of the set:
\[\lbrace (U,B,L)\in \powset(S)\times \set{0,1}^{|S|}\times \D^{|S|} \mid
(Y_n,C_n,Z_n)\stackrel{W_n}{\rightarrow} (U,B,L)\in E \mbox{ and } U \subseteq
\obsmap^{-1}(\obsmap(X_{n+1})) \rbrace.\]
This proves that $(Y_n,C_n,Z_n)\stackrel{W_n}{\rightarrow} (Y_{n+1},C_{n+1},Z_{n+1})$ is an edge in $E$.
 \item The fifth point follows from the fourth point and Lemma \ref{l-C-croit}.
 \item Suppose $(X_n,Y_n,C_n,Z_n)$ is such that $C_n(X_n)=1$. Let $m\in M$ be an arbitrary memory state such that $C_n=\BoolRec_\straa(m)$ and $Z_n=\SetRec_\straa(m)$. 
By hypothesis, since $C_n(X_n)=1$, it follows that $(X_n,m)$ is a recurrent state in the Markov chain $\game \restr \straa$. 
As a consequence, only one recurrent class $R \subseteq S \times M$ of $\game \restr \straa$ is reachable from $(X_n,m)$, 
and $(X_n,m)$ belongs to this class (Property~2 of Markov chains). 
Hence $Z_n(X_n)=\set{\col(\proj_1(R))}$, and thus $|Z_n(X_n)|=1$.
It also follows that all states $(X',m')$ reachable in one step from $(X_n,m)$ also belong to the recurrent class
$R$. It follows that $X_{n+1} \in \proj_1(R)$ and hence $\col(X_{n+1}) \in \col(\proj_1(R))$. By induction for all $j \geq 0$ we have 
$\col(X_{n+j}) \in \col(\proj_1(R))$.
\end{enumerate}
The desired result follows.
\end{proof}

We now introduce the final notion that is required to complete the proof.
The notion is that of a pseudo-recurrent state.
Intuitively a state $(X,Y,C,Z)$ is pseudo-recurrent if $Z$ contains 
exactly one recurrent subset, $X$ belongs to the subset and it will follow
for some memory $m \in M$ (of certain desired property) $(X,m)$ is a 
recurrent state in the Markov chain $\game \restr \straa$.
The important property that will be useful is that once a pseudo-recurrent state
is reached, then $C$ and $Z$ remain invariant.
We now formally define pseudo-recurrent states.

\begin{definition}[Pseudo-recurrent states]
 Let $X\in S$, $Y\subseteq S$, $C\in\set{0,1}^{|S|}$, and  $Z\in\D^{|S|}$.
Then the state $(X,Y,C,Z)$ is called \emph{pseudo-recurrent} if there exists 
$Z_\infty\subseteq \priority$ such that:
\begin{center}
(i)~$Z(X)= \set{ Z_\infty}$, (ii)~$\col(X)\in Z_\infty$, and (iii)~$C(X)=1$.
\end{center}
\end{definition}

The following lemma shows that in the Markov chain $\game \restr \straa'$, all states reachable from a 
pseudo-recurrent state are also pseudo-recurrent.

\begin{lemma}\label{l-reach-pr-pr}
Let $(X,Y,C,Z)$ be a pseudo-recurrent state. If $(X',Y',C',Z')$ is reachable 
from $(X,Y,C,Z)$ in $\game \restr \straa'$, then $(X',Y',C',Z')$ 
is also a pseudo-recurrent state and $Z'(X')=Z(X)$. 
\end{lemma}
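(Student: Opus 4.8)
The plan is to prove the statement by induction on the length of the path from $(X,Y,C,Z)$ to $(X',Y',C',Z')$ in $\game \restr \straa'$, so it suffices to treat the case where $(X',Y',C',Z')$ is reachable in one step, i.e. $(X',Y',C',Z') \in \Succ_1((X,Y,C,Z))$. Fix the witness $Z_\infty \subseteq \priority$ for pseudo-recurrence of $(X,Y,C,Z)$, so that $Z(X) = \set{Z_\infty}$, $\col(X) \in Z_\infty$, and $C(X) = 1$. Since $(X',Y',C',Z')$ is a successor, there is an action $a = W$ with $X' \in \Supp(\trans(X,a))$, and by Lemma~\ref{l-chain_Z} (point~4) the pair $(Y,C,Z)\stackrel{a}{\rightarrow}(Y',C',Z')$ is an edge of $\PG(\straa)$.

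The first key step is to establish $C'(X') = 1$: this is exactly point~5 of Lemma~\ref{l-chain_Z} (equivalently Lemma~\ref{l-C-croit}), using that $X' \in \Supp(\trans(X,a))$ and $C(X) = 1$. This gives condition~(iii) of pseudo-recurrence for $(X',Y',C',Z')$. The second key step is to pin down $Z'(X')$. From $C(X)=1$, point~6 of Lemma~\ref{l-chain_Z} tells us that $|Z(X)| = 1$ — which we already know — and, more importantly, unwinding its proof: there is a memory state $m \in M$ with $C = \BoolRec_\straa(m)$, $Z = \SetRec_\straa(m)$, and $(X,m)$ a recurrent state of $\game \restr \straa$ lying in its unique reachable recurrent class $R$, so $Z(X) = \set{\col(\proj_1(R))}$, hence $Z_\infty = \col(\proj_1(R))$. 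By the edge structure of $\PG(\straa)$ there is $m' \in M$ with $C' = \BoolRec_\straa(m')$, $Z' = \SetRec_\straa(m')$, and $(X',m')$ reachable in one step from $(X,m)$ in $\game \restr \straa$; since $(X,m)$ is recurrent, $(X',m')$ lies in the same recurrent class $R$. Consequently $R$ is the unique recurrent class reachable from $(X',m')$ and contains it, so $Z'(X') = \SetRec_\straa(m')(X') = \set{\col(\proj_1(R))} = \set{Z_\infty} = Z(X)$. This simultaneously gives $Z'(X') = Z(X)$ (the claimed invariance) and condition~(i) with the same witness $Z_\infty$.

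For condition~(ii), we need $\col(X') \in Z_\infty$. Since $(X',m')$ lies in $R$, we have $X' \in \proj_1(R)$, hence $\col(X') \in \col(\proj_1(R)) = Z_\infty$. Thus $(X',Y',C',Z')$ is pseudo-recurrent with witness $Z_\infty$, and $Z'(X') = Z(X)$, completing the one-step case; the general case follows by a straightforward induction, carrying the fixed $Z_\infty$ and the fixed recurrent class $R$ through each step (note that $R$ stays the same at every step because every newly reached $(X^{(i)}, m^{(i)})$ is forced into $R$ by recurrence, and $Z^{(i)}(X^{(i)}) = \set{Z_\infty}$ throughout). The main obstacle is the careful bookkeeping connecting vertices of $\PG(\straa)$ back to actual memory states $m \in M$ and then invoking Property~2 of Markov chains to force successive states into the single recurrent class $R$; once the existence of a common witnessing $m$ (and hence $R$) is set up at the first step, everything else is a clean propagation, and in particular no new recurrent class can appear because $\BoolRec_\straa$ being $1$ pins us inside $R$.
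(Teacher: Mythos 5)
Your proof is correct in substance, but it takes a noticeably more laborious route than the paper for two of the three conditions. The paper obtains $C'(X')=1$ exactly as you do (point~5 of Lemma~\ref{l-chain_Z}), but then it gets $Z'(X')=\set{Z_\infty}$ purely formally: by point~3 of Lemma~\ref{l-chain_Z} we have $Z'(X')\subseteq Z(X)$, by point~2 it is nonempty, and a nonempty subset of the singleton $\set{Z_\infty}$ must equal it; and it gets $\col(X')\in Z_\infty$ by citing point~6 directly. You instead unwind the memory-level argument behind point~6: you pass to a memory state $m$ of the original strategy, use Property~2 to locate the unique recurrent class $R$ of $\game\restr\straa$ containing $(X,m)$, and read off both $Z'(X')=\set{\col(\proj_1(R))}=\set{Z_\infty}$ and $\col(X')\in Z_\infty$ from membership of $(X',m')$ in $R$. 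Both arguments work; the paper's is shorter because it re-uses Lemma~\ref{l-chain_Z} as a black box, while yours essentially re-proves part of it inside this lemma.

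Two points to tighten. First, a quantifier-ordering wrinkle: you fix $m$ with $C=\BoolRec_\straa(m)$ and $Z=\SetRec_\straa(m)$ first, and then assert that the edge of $\PG(\straa)$ yields an $m'$ with $(X',m')$ reachable in one step from $(X,m)$ for that same $m$; the edge definition only guarantees the existence of some witnessing pair of memory states, which need not involve your chosen $m$. The fix is immediate: take $m$ to be the memory state witnessing the edge (any such $m$ satisfies $C=\BoolRec_\straa(m)$ and $Z=\SetRec_\straa(m)$, so your recurrence argument applies to it verbatim). Second, in the induction you claim that the same recurrent class $R$ persists along the whole path; this is not justified (at later steps the edge may be witnessed by a different memory state, giving a possibly different recurrent class, albeit with the same color projection $Z_\infty$), but it is also not needed: the one-step case already shows that pseudo-recurrence with the same witness $Z_\infty$ and the same value $Z'(X')=\set{Z_\infty}$ is preserved, which is all the induction requires.
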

\begin{proof}
 Let $(X,Y,C,Z)$ be a pseudo-recurrent state, and let $Z_\infty\subseteq \priority$ be such that $Z(X) = \set{Z_\infty}$, $\col(X)\in Z_\infty$, and $C(X)=1$. 
By Lemma~\ref{l-chain_Z} (fifth point), if $C(X)=1$, then $C'(X')=1$. 
By Lemma~\ref{l-chain_Z} (third point) also, 
$Z'(X')= \set{Z_\infty}$, since $Z'(X')$ is a non empty subset of $Z(X)$. Finally, the fact that $\col(X')\in Z_\infty$ follows from the last (sixth) point of 
Lemma~\ref{l-chain_Z}.
\end{proof}


In the following lemma we show that with probability~1 a pseudo-recurrent state is
reached in $\game \restr \straa'$.

\begin{lemma}\label{l-reach-pr-prob-1}
Let $X\in S$, $Y\in\powset(S)$, $C\in\set{0,1}^{|S|}$, and $Z\in\D^{|S|}$.
Then almost-surely (with probability~1) a run on $\game \restr \straa'$ from 
any starting state $(X,Y,C,Z)$ reaches a pseudo-recurrent state.
\end{lemma}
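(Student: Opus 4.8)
The plan is to show that from any starting state $(X,Y,C,Z)$ in $\game \restr \straa'$, a pseudo-recurrent state is reached with probability~1. Since $\game \restr \straa'$ is a finite Markov chain, by Property~1(b) of Markov chains it suffices to prove that from every state $(X,Y,C,Z)$ there is a \emph{path} (positive-probability finite run) in $\game \restr \straa'$ to some pseudo-recurrent state; then, since the recurrent classes are reached with probability~1 and — by Lemma~\ref{l-reach-pr-pr} — pseudo-recurrence is closed under reachability, every recurrent class reachable from $(X,Y,C,Z)$ must consist entirely of pseudo-recurrent states, so a pseudo-recurrent state is reached almost surely.

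So the core task is to exhibit, from an arbitrary state $(X,Y,C,Z)$ of $\game \restr \straa'$, a finite path to a pseudo-recurrent state. First I would fix a memory state $m \in M$ of the original strategy $\straa$ with $C = \BoolRec_\straa(m)$ and $Z = \SetRec_\straa(m)$; this exists by the definition of $V = M'$, and one should moreover choose the pair $(m)$ compatibly with $Y$ (i.e.\ so that $(X,Y,C,Z)$ actually arises from a vertex $(Y,C,Z)$ of $\PG(\straa)$ with that $m$). In the Markov chain $\game \restr \straa$, consider the state $(X,m)$: by Property~1(b) there is a path from $(X,m)$ to some recurrent state $(X^\ast, m^\ast)$ lying in a recurrent class $R$ of $\game \restr \straa$, so $\BoolRec_\straa(m^\ast)(X^\ast) = 1$, and $\SetRec_\straa(m^\ast)(X^\ast) = \set{\col(\proj_1(R))}$ is a singleton containing $\col(X^\ast)$ (Property~2, Property~3). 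Now invoke Lemma~\ref{l-decroit}: since $(X^\ast, m^\ast)$ is reachable from $(X,m)$ in $\game \restr \straa$, and $(X,Y,\BoolRec_\straa(m),\SetRec_\straa(m))$ is a state of $\game \restr \straa'$, there exists $Y^\ast \subseteq S$ such that $(X^\ast, Y^\ast, \BoolRec_\straa(m^\ast), \SetRec_\straa(m^\ast))$ is reachable from $(X,Y,C,Z)$ in $\game \restr \straa'$. By construction this target state satisfies (i)~$\SetRec_\straa(m^\ast)(X^\ast) = \set{Z_\infty}$ with $Z_\infty = \col(\proj_1(R))$, (ii)~$\col(X^\ast) \in Z_\infty$, and (iii)~$\BoolRec_\straa(m^\ast)(X^\ast) = 1$ — precisely the three conditions for pseudo-recurrence.

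The main obstacle I anticipate is the careful bookkeeping linking states of $\game \restr \straa'$ back to memory states and recurrent classes of $\game \restr \straa$: one must make sure that when a state $(X,Y,C,Z)$ of $\game \restr \straa'$ is given, there genuinely is a memory $m$ realizing \emph{both} $(C,Z)$ and the belief component $Y$ in a mutually consistent way, so that Lemma~\ref{l-decroit} can be applied. This is exactly what Lemma~\ref{l-decroit} was set up to handle — it transports a reachability fact in $\game \restr \straa$ (between $(X,m)$-style states) to a reachability fact in $\game \restr \straa'$ (between $(X,Y,C,Z)$-style states) regardless of the belief component — so once its hypotheses are matched the argument goes through. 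Combining the path-existence statement with Property~1(b) and the closure Lemma~\ref{l-reach-pr-pr} as in the first paragraph then yields that a pseudo-recurrent state is reached with probability~1 from any starting state, completing the proof.
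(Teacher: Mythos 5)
Your proposal is correct and follows essentially the same route as the paper: fix $m$ with $C=\BoolRec_\straa(m)$ and $Z=\SetRec_\straa(m)$, find a recurrent state reachable from $(X,m)$ in $\game\restr\straa$, transport that reachability to $\game\restr\straa'$ via Lemma~\ref{l-decroit}, and verify the three pseudo-recurrence conditions. The only (immaterial) difference is in the final step: the paper concludes directly from Property~1~(a), whereas you combine Property~1~(b) with the closure property of Lemma~\ref{l-reach-pr-pr}, which is equally valid.
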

\begin{proof}
We show that given $(X,Y,C,Z)$ there exists a pseudo-recurrent state 
$(X',Y',C',Z')$ which is reachable from $(X,Y,C,Z)$ in $\game \restr \straa'$.
First let us consider the Markov chain $\game \restr \straa$ obtained from the
original finite-memory strategy $\straa$ with memory $M$. 
Let $m\in M$ be such that $C=\BoolRec_{\straa}(m)$ and $Z=\SetRec_\straa(m)$.
We will now show that the result is a consequence of Lemma~\ref{l-decroit}.
First we know that there exists $t\in S$ and $m'\in M$ such that 
$(t,m')$ is recurrent and reachable from $(X,m)$ with positive probability in
$\game \restr \straa$. Let $R \subseteq S \times M$ be the unique recurrent
class such that $(t,m') \in R$, and 
$Z_\infty = \set{\col(\proj_1(R))}$.
By Lemma~\ref{l-decroit}, this implies that from $(X,Y,C,Z)$ we can reach a 
state $(X',Y',C',Z')$ such that:
\begin{itemize}
 \item $X'=t$;
 \item $Z'(X')=\set{Z_\infty}$;
 \item $\col(X')\in Z_\infty$; and
 \item $C'(X')=1$.
\end{itemize}
Hence $(X',Y',C',Z')$ is a pseudo-recurrent state.
This shows that from all states with positive  probability a pseudo-recurrent
state is reached, and since it holds for all states with positive 
probability, it follows that it holds for all states with probability~1 
(Property~1~(a)).
\end{proof}

In the following three lemmas we establish the required properties of 
pseudo-recurrent states.

\begin{lemma}\label{l-reach-set-rec}
Let $(X,Y,C,Z)$ be a state of $\game \restr \straa'$, and let $Z_B\in Z(X)$. 
Then there exists a pseudo-recurrent state $(X',Y',C',Z')$ which is 
reachable from $(X,Y,C,Z)$ and such that $Z'(X')=\set{Z_B}$.
\end{lemma}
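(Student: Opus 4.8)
### Proof proposal for Lemma~\ref{l-reach-set-rec}

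\textbf{The plan is to reduce this lemma to Lemma~\ref{l-reach-pr-prob-1} and Lemma~\ref{l-reach-pr-pr} by first working in the original Markov chain $\game \restr \straa$ and then transferring the reachability witness to $\game \restr \straa'$ via Lemma~\ref{l-decroit}.} First I would fix a memory state $m \in M$ with $C = \BoolRec_\straa(m)$ and $Z = \SetRec_\straa(m)$, which exists because $(X,Y,C,Z)$ is a state of $\game \restr \straa'$, hence $(C,Z)$ is the $\BoolRec/\SetRec$ pair of some memory state by construction of the projection graph. Since $Z_B \in Z(X) = \SetRec_\straa(m)(X)$, by definition of $\SetRec$ there is a recurrent class $R \in \Rec(\game\restr\straa)((X,m))$ with $\col(\proj_1(R)) = Z_B$. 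Pick any state $(t,m') \in R$; then $(t,m')$ is recurrent and reachable from $(X,m)$ in $\game\restr\straa$.

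Next I would invoke Lemma~\ref{l-decroit}: since $(t,m')$ is reachable from $(X,m)$ in $\game \restr \straa$, and $(X,Y,C,Z) = (X,Y,\BoolRec_\straa(m),\SetRec_\straa(m))$ is a state of $\game \restr \straa'$, there exists $Y' \subseteq S$ such that $(t,Y',\BoolRec_\straa(m'),\SetRec_\straa(m'))$ is reachable from $(X,Y,C,Z)$ in $\game \restr \straa'$. Set $X' = t$, $C' = \BoolRec_\straa(m')$, $Z' = \SetRec_\straa(m')$. Since $(t,m')$ is recurrent in $\game\restr\straa$ and belongs to the recurrent class $R$, Property~2 of Markov chains gives $\Rec(\game\restr\straa)((t,m')) = \set{R}$, so $Z'(X') = \SetRec_\straa(m')(t) = \set{\col(\proj_1(R))} = \set{Z_B}$; moreover $\col(t) \in \col(\proj_1(R)) = Z_B$ because $t \in \proj_1(R)$; and $\BoolRec_\straa(m')(t) = 1$ because $(t,m') \in R$. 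These are exactly conditions (i)--(iii) of pseudo-recurrence with $Z_\infty = Z_B$, so $(X',Y',C',Z')$ is pseudo-recurrent with $Z'(X') = \set{Z_B}$, as required.

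\textbf{The main obstacle I anticipate is the careful bookkeeping to ensure Lemma~\ref{l-decroit} applies verbatim}, namely checking that the choice of $m$ is legitimate and that the reachable state produced by Lemma~\ref{l-decroit} has precisely the $\BoolRec/\SetRec$ components coming from the recurrent memory state $m'$ (and not some other memory with the same belief component). This is handled by taking $m'$ to be exactly a memory coordinate of a state in $R$, so the recurrence structure of $\game\restr\straa$ pins down $Z'(X')$ to the singleton $\set{Z_B}$; the rest is a direct application of Properties~2 and the definitions. No genuinely new idea is needed beyond combining Lemma~\ref{l-decroit} with the singleton-image consequence of recurrence.
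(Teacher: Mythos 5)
Your proof is correct and takes essentially the same route as the paper: fix a memory state $m$ with $C=\BoolRec_\straa(m)$, $Z=\SetRec_\straa(m)$, pick a state $(t,m')$ in the recurrent class $R$ witnessing $Z_B$, and transfer reachability to $\game\restr\straa'$ via Lemma~\ref{l-decroit}, using Property~2 to pin $Z'(X')$ to $\set{Z_B}$. If anything, you spell out the verification of the pseudo-recurrence conditions ($C'(X')=1$ and $\col(X')\in Z_B$) more explicitly than the paper, which leaves them implicit by analogy with Lemma~\ref{l-reach-pr-prob-1}.
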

\begin{proof}
The proof is of similar flavor as Lemma~\ref{l-reach-pr-prob-1}.
Consider the Markov chain $\game \restr \straa$ arising by fixing the original
finite-memory strategy $\straa$.
Let $m\in M$ such that $C= \BoolRec_\straa(m)$ and $Z=\SetRec_\sigma(m)$. 
We have $Z_B\in \SetRec_\straa(m)(X)$, hence $Z_B = \col(\proj_{1}(R))$  for some recurrent class $R$
of the chain $\game \restr \straa$ reachable from $(X,m)$. 
Let $t\in S$ and $m'\in M$ be such that $(t,m')$ is reachable from $(X,m)$ in
$\game \restr \straa$ and 
$\Rec(\game \restr \straa)((t,m')) = \set{R}$, then 
$\SetRec_\straa(m')(t)=\set{Z_B}$.
By Lemma~\ref{l-decroit}, there exists $Y',C'$ such that 
$(t,Y',C',\SetRec_\straa(m'))$ is reachable from 
$(X,Y,C,\SetRec_\straa(m))$ in $\game \restr \straa'$ from the 
starting state $(X,Y,C,Z)$.
The desired result follows.
\end{proof}

\begin{lemma}\label{l-prelim-lim}
Let $(X,Y,C,Z)$ be a pseudo-recurrent state, and $Z_\infty\subseteq \priority$ such that $Z(X)=\lbrace Z_\infty\rbrace$. 
Then for all colors 
$\ell\in Z_\infty$, there exists a state 
$(X',Y',C',Z')$ which is reachable in $\game \restr \straa'$ from 
$(X,Y,C,Z)$ and such that $\col(X')=\ell$.
\end{lemma}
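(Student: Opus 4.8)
The plan is to reduce the claim to a reachability statement inside the original Markov chain $\game \restr \straa$ and then transfer it to $\game \restr \straa'$ using Lemma~\ref{l-decroit}, exactly in the spirit of the proofs of Lemma~\ref{l-reach-pr-prob-1} and Lemma~\ref{l-reach-set-rec}.

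First I would unpack what pseudo-recurrence buys us. Fix a memory state $m\in M$ of the original strategy with $C=\BoolRec_\straa(m)$ and $Z=\SetRec_\straa(m)$; such an $m$ exists by construction of the projection graph and of the projected strategy $\straa'$. Since $(X,Y,C,Z)$ is pseudo-recurrent we have $C(X)=1$, i.e.\ $\BoolRec_\straa(m)(X)=1$, which by definition means that $(X,m)$ lies in a recurrent class $R$ of $\game\restr\straa$. By Property~2 of Markov chains, $R$ is the \emph{unique} recurrent class reachable from $(X,m)$, so $\SetRec_\straa(m)(X)=\set{\col(\proj_1(R))}$; comparing with the hypothesis $Z(X)=\set{Z_\infty}$ forces $Z_\infty=\col(\proj_1(R))$.

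Next, fix $\ell\in Z_\infty=\col(\proj_1(R))$ and pick $t\in \proj_1(R)$ with $\col(t)=\ell$, together with a memory $m''\in M$ such that $(t,m'')\in R$. Since $R$ is a bottom scc (a recurrent class) and $(X,m)\in R$, the state $(t,m'')$ is reachable from $(X,m)$ in $\game\restr\straa$ (Property~3). Now apply Lemma~\ref{l-decroit}: as $(t,m'')$ is reachable from $(X,m)$ in $\game\restr\straa$ and $(X,Y,\BoolRec_\straa(m),\SetRec_\straa(m))=(X,Y,C,Z)$ is a state of $\game\restr\straa'$, there is some $Y'\subseteq S$ with $(t,Y',\BoolRec_\straa(m''),\SetRec_\straa(m''))$ reachable from $(X,Y,C,Z)$ in $\game\restr\straa'$. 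Taking $X'=t$, this $Y'$, and $C'=\BoolRec_\straa(m'')$, $Z'=\SetRec_\straa(m'')$ yields a state reachable from $(X,Y,C,Z)$ with $\col(X')=\ell$, as required.

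I do not expect a genuine obstacle here; the only point needing a little care is the bookkeeping that the chosen $m$ witnesses recurrence of $(X,m)$ via the condition $C(X)=1$ (rather than via the singleton condition on $Z(X)$), and that reachability \emph{within} the recurrent class $R$ is precisely what licenses the invocation of Lemma~\ref{l-decroit}. Everything else is a direct application of the Markov-chain properties collected earlier and of the projected-strategy lemmas.
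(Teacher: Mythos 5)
Your proposal is correct and follows essentially the same route as the paper's proof: fix a memory $m$ witnessing $C=\BoolRec_\straa(m)$, $Z=\SetRec_\straa(m)$, use $C(X)=1$ to identify the unique recurrent class $R$ containing $(X,m)$ with $Z_\infty=\col(\proj_1(R))$, pick a state of $R$ with color $\ell$ reachable within the class, and transfer reachability to $\game\restr\straa'$ via Lemma~\ref{l-decroit}. The only cosmetic difference is your citation of Property~3 for reachability inside $R$ (strong connectivity of a bottom scc suffices), which is harmless.
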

\begin{proof}
We again consider the Markov chain $\game \restr \straa$.
Let $m\in M$ be such that $C=\BoolRec_\straa(m)$ and $Z=\SetRec_\straa(m)$. Let $R$ be the unique recurrent class in $\game \restr \straa$ such that $(X,m) \in R$, 
then $Z_\infty = \col(\proj_{1}(R))$. 
For every $\ell\in Z_\infty$, there exists $(X',m') \in R$ such that $\col(X') =\ell$. As $(X',m')$ is reachable from $(X,m)$ in $\game \restr \straa$, by Lemma~\ref{l-decroit},
there exists $Y',C',Z'$ such that $(X',Y',C',Z')$ is reachable 
in $\game \restr \straa'$ from $(X,Y,C,Z)$.
\end{proof}


\begin{lemma}\label{l-rec_pr_pr}
Let $(X,Y,C,Z)$ be a pseudo-recurrent state, then we have 
$Z(X)=\SetRec_{\straa'}(m')(X)$, where $m'=(Y,C,Z)$.
In other words, if we consider a pseudo-recurrent state, and then consider
the projection on the state space of the POMDP $\game$ of the recurrent 
classes reachable and consider the colors, 
then they coincide with $Z(X)$. 
\end{lemma}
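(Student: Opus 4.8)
The plan is to prove the two inclusions $\SetRec_{\straa'}(m')(X) \subseteq Z(X)$ and $Z(X) \subseteq \SetRec_{\straa'}(m')(X)$ by exploiting that pseudo-recurrence propagates under reachability in $\game \restr \straa'$. First I would fix $Z_\infty \subseteq \priority$ with $Z(X) = \set{Z_\infty}$, $\col(X) \in Z_\infty$, and $C(X) = 1$, which exists because $(X,Y,C,Z)$ is pseudo-recurrent, and recall that $m' = (Y,C,Z)$, so the state $(X,m')$ of $\game \restr \straa'$ is exactly $(X,Y,C,Z)$.

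For the inclusion $\SetRec_{\straa'}(m')(X) \subseteq Z(X)$, let $R'$ be any recurrent class of $\game \restr \straa'$ reachable from $(X,Y,C,Z)$; it suffices to show $\col(\proj_1(R')) = Z_\infty$. By Lemma~\ref{l-reach-pr-pr}, every state reachable from $(X,Y,C,Z)$ in $\game \restr \straa'$ is pseudo-recurrent with the same $Z_\infty$; in particular, for every state $(X'',Y'',C'',Z'') \in R'$ we have $\col(X'') \in Z_\infty$, giving $\col(\proj_1(R')) \subseteq Z_\infty$. Conversely, pick any state $(X',Y',C',Z') \in R'$: it is pseudo-recurrent (again by Lemma~\ref{l-reach-pr-pr}) and satisfies $Z'(X') = \set{Z_\infty}$, so Lemma~\ref{l-prelim-lim} applies and yields, for each color $\ell \in Z_\infty$, a state $(X'',Y'',C'',Z'')$ reachable in $\game \restr \straa'$ from $(X',Y',C',Z')$ with $\col(X'') = \ell$. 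Since $R'$ is a bottom scc of the Markov chain and $(X',Y',C',Z') \in R'$, any state reachable from it lies in $R'$, hence $\ell \in \col(\proj_1(R'))$. Thus $\col(\proj_1(R')) = Z_\infty$ for every reachable recurrent class $R'$, and since at least one recurrent class is reachable from $(X,Y,C,Z)$ (Property~1(b) of Markov chains), we conclude $\SetRec_{\straa'}(m')(X) = \set{Z_\infty} = Z(X)$, which in fact already gives both inclusions simultaneously.

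The only delicate point is the appeal, inside the argument for $Z_\infty \subseteq \col(\proj_1(R'))$, to the standard fact that a state reachable from a state of a recurrent class belongs to that class (bottom scc closed under reachability); everything else is bookkeeping that combines Lemma~\ref{l-reach-pr-pr} (pseudo-recurrence is preserved and the $Z$-component freezes to $\set{Z_\infty}$) with Lemma~\ref{l-prelim-lim} (every color of $Z_\infty$ is witnessed by a reachable state). I expect no substantial obstacle beyond making sure Lemma~\ref{l-prelim-lim} is invoked at a state \emph{inside} $R'$ rather than at the original $(X,Y,C,Z)$, so that its conclusion lands back in $R'$.
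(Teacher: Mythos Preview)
Your proof is correct and follows essentially the same approach as the paper: both use Lemma~\ref{l-reach-pr-pr} to get $\col(\proj_1(R'))\subseteq Z_\infty$ for each reachable recurrent class $R'$, and Lemma~\ref{l-prelim-lim} applied at a state inside $R'$ (together with closure of bottom scc's under reachability) for the reverse inclusion. You make the bottom-scc closure step explicit where the paper leaves it implicit in its ``This implies\ldots'' sentence, but the argument is the same.
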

\begin{proof}
Let  $(X,Y,C,Z)$ be a pseudo-recurrent state, and let $Z_\infty$ be such that $Z(X)=\lbrace Z_\infty\rbrace$.
First, by Lemma~\ref{l-reach-pr-pr}, we know that if $(X',Y',C',Z')$ is 
reachable from $(X,Y,C,Z)$ in $\game \restr \straa'$, then $\col(X')\in Z_\infty$. 
This implies that for all $Z_B\in \SetRec_{\straa'}(m')(X)$, where $m'=(Y,C,Z)$, 
we have $Z_B\subseteq Z_\infty$.
Second, by Lemma \ref{l-prelim-lim}, if $(X',Y',C',Z')$ is reachable from 
$(X,Y,C,Z)$ in $\game \restr \straa'$ and $\ell\in Z_\infty$, then there 
exists $(X'',Y'',C'',Z'')$ reachable from $(X',Y',C',Z')$  
such that $\col(X'')=\ell$. This implies that for all $Z_B\in \SetRec_{\straa'}(m')(X)$, 
where $m'=(Y,C,Z)$, we have $Z_\infty\subseteq Z_B$.
Thus, $\SetRec_{\straa'}(m')(X)=\lbrace Z_\infty\rbrace=Z(X)$.
\end{proof}




We are now ready to prove the main lemma which shows that the color sets
of the projections of the recurrent classes on the state space of the POMDP coincide for 
$\straa$ and the projected strategy $\straa'=\projp(\straa)$.

\begin{lemma}\label{l-equiv-recs}
Consider a finite-memory strategy $\straa=(\straa_u,\straa_n,M,m_0)$ and
the projected strategy $\straa'=\projp(\straa)=(\straa_u',\straa_n',M',m_0')$.
Then we have
\[
\SetRec_{\straa'}(m_0')(s_0)=\SetRec_\straa(m_0)(s_0);
\]
i.e., the colors 
of the projections of the recurrent classes of the two strategies on the 
state space of the POMDP $\game$ coincide.
\end{lemma}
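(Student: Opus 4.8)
The plan is to prove the two inclusions separately, in each case lifting a recurrent class from one Markov chain to a reachable recurrent class in the other and then using Lemma~\ref{lemm:Markov-basic}-style reasoning on colors. Throughout I will use the abbreviation $L_0=\SetRec_\straa(m_0)(s_0)$ and $L_0'=\SetRec_{\straa'}(m_0')(s_0)$, and recall that $m_0'=(\set{s_0},\BoolRec_\straa(m_0),\SetRec_\straa(m_0))$.

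\textbf{Inclusion $L_0'\subseteq L_0$.}
Take $Z_B\in L_0'$, so $Z_B=\col(\proj_1(R'))$ for some recurrent class $R'$ of $\game\restr\straa'$ reachable from $(s_0,m_0')$. Pick any state $(X,Y,C,Z)\in R'$. By Lemma~\ref{l-reach-pr-prob-1} a pseudo-recurrent state is reached with probability~$1$ from $(X,Y,C,Z)$, and since $R'$ is a bottom scc, that pseudo-recurrent state lies in $R'$; so without loss of generality we may assume $(X,Y,C,Z)$ is itself pseudo-recurrent. By Lemma~\ref{l-reach-pr-pr} every state reachable from $(X,Y,C,Z)$ — in particular every state of $R'$ — is pseudo-recurrent with the same $Z(\cdot)$-value $\set{Z_\infty}$ where $Z(X)=\set{Z_\infty}$, and by Lemma~\ref{l-rec_pr_pr} we have $Z(X)=\SetRec_{\straa'}((Y,C,Z))(X)=\set{\col(\proj_1(R'))}=\set{Z_B}$, so $Z_B=Z_\infty$. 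It then remains to show $Z_\infty\in L_0$. The state $(X,Y,C,Z)$ is reachable from $(s_0,m_0')$ in $\game\restr\straa'$; chasing this reachable run through the projection graph and the definition of $\projp(\straa)$, one recovers (via Lemma~\ref{l-Z-decrease} / point~3 of Lemma~\ref{l-chain_Z}) that $Z(X)\subseteq\SetRec_\straa(m_0)(s_0)$; but $Z(X)=\set{Z_\infty}$, so $Z_\infty\in L_0$, as required. The cleanest way to get this last containment is: along the run, each $(Y_i,C_i,Z_i)\stackrel{W_i}{\to}(Y_{i+1},C_{i+1},Z_{i+1})$ is an edge of $\PG(\straa)$, so there are memory states $m_i\in M$ with $C_i=\BoolRec_\straa(m_i)$, $Z_i=\SetRec_\straa(m_i)$, consecutive ones connected by a one-step reachability relation in $\game\restr\straa$ witnessed on the $X$-coordinate; iterating Lemma~\ref{l-incl-rec} gives $\SetRec_\straa(m_n)(X_n)\subseteq\SetRec_\straa(m_0)(s_0)$, and the left side is $Z_n(X_n)=Z(X)$.

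\textbf{Inclusion $L_0\subseteq L_0'$.}
This is the direction where we must show the projected strategy does not destroy recurrent behavior. Take $Z_B\in L_0=\SetRec_\straa(m_0)(s_0)$, so $Z_B=\col(\proj_1(R))$ for a recurrent class $R$ of $\game\restr\straa$ reachable from $(s_0,m_0)$. Apply Lemma~\ref{l-reach-set-rec} with the state $(s_0,\set{s_0},\BoolRec_\straa(m_0),\SetRec_\straa(m_0))$ of $\game\restr\straa'$ and this $Z_B\in Z(s_0)$: it yields a pseudo-recurrent state $(X',Y',C',Z')$ reachable from $(s_0,m_0')$ in $\game\restr\straa'$ with $Z'(X')=\set{Z_B}$. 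Now apply Lemma~\ref{l-rec_pr_pr} to $(X',Y',C',Z')$: $\SetRec_{\straa'}((Y',C',Z'))(X')=Z'(X')=\set{Z_B}$, i.e.\ there is a (unique) recurrent class $R''$ of $\game\restr\straa'$ reachable from $(X',Y',C',Z')$ — hence from $(s_0,m_0')$ — with $\col(\proj_1(R''))=Z_B$. Therefore $Z_B\in\SetRec_{\straa'}(m_0')(s_0)=L_0'$.

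\textbf{Main obstacle.} Both inclusions are now assembled entirely from the already-established Lemmas~\ref{l-incl-rec}--\ref{l-rec_pr_pr}, so the remaining work is bookkeeping; the genuinely delicate point — and the one I expect to be the crux — is the $L_0\subseteq L_0'$ direction, specifically the guarantee that the uniform-over-more-actions projected strategy still has a recurrent class whose color set is exactly $Z_B$ rather than some strictly larger set. That is precisely what Lemma~\ref{l-rec_pr_pr} buys us: pseudo-recurrence pins $Z(\cdot)$ to a singleton and makes it invariant (Lemma~\ref{l-reach-pr-pr}), and the two-sided argument inside Lemma~\ref{l-rec_pr_pr} (upper bound from Lemma~\ref{l-reach-pr-pr}, lower bound from Lemma~\ref{l-prelim-lim}) forces the actual recurrent color set of $\straa'$ to coincide with the remembered value. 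So in the write-up I would make sure to invoke Lemma~\ref{l-reach-set-rec} to land in a pseudo-recurrent state with the \emph{right} singleton, and then let Lemma~\ref{l-rec_pr_pr} close the gap.
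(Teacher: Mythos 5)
Your proof is correct and follows essentially the same route as the paper: the inclusion $\SetRec_\straa(m_0)(s_0)\subseteq\SetRec_{\straa'}(m_0')(s_0)$ via Lemma~\ref{l-reach-set-rec} and Lemma~\ref{l-rec_pr_pr} (your direct reachability argument just unfolds Lemma~\ref{l-incl-rec}), and the reverse inclusion by locating a pseudo-recurrent state inside the recurrent class of $\game\restr\straa'$ (Lemmas~\ref{l-reach-pr-prob-1}, \ref{l-reach-pr-pr}, \ref{l-rec_pr_pr}) and then using the monotonicity of $Z_n(X_n)$ from the third point of Lemma~\ref{l-chain_Z}. The only cosmetic difference is that the paper first picks a state all of whose successors already have $\SetRec=\set{Z_B}$ (Property~2) and then reaches a pseudo-recurrent state, whereas you work directly inside the recurrent class; the two are interchangeable.
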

\begin{proof}
For the proof, let $X=s_0$, $Y=\lbrace s_0\rbrace$, $C=\BoolRec_\straa(m_0)$, 
$Z=\SetRec_\straa(m_0)$. We need to show that 
$\SetRec_{\straa'}(m_0')(X)=Z(X)$, where $m_0'=(Y,C,Z)$.
We show inclusion in both directions.

\begin{itemize}
\item \emph{First inclusion:($Z(X) \subseteq \SetRec_{\straa'}(m_0')(X)$).}
Let $Z_B\in Z(X)$. By Lemma \ref{l-reach-set-rec}, there exists a state
$(X',Y',C',Z')$ which is reachable in $\game \restr \straa'$ from 
$(X,Y,C,Z)$, which is 
pseudo-recurrent, and such that $Z'(X')=\lbrace Z_B\rbrace$. 
By Lemma \ref{l-rec_pr_pr}, we have $Z'(X')=\SetRec_{\straa'}(m')(X')$ where 
$m'=(Y',C',Z')$. By Lemma \ref{l-incl-rec}, we have 
$\SetRec_{\straa'}(m')(X')\subseteq \SetRec_{\straa'}(m_0')(X)$. 
This proves that $Z_B\in \SetRec_{\straa'}(m_0')(X)$.

\item \emph{Second inclusion: ($\SetRec_{\straa'}(m_0')(X) \subseteq Z(X)$).}
Conversely, let $Z_B\in \SetRec_{\sigma'}(m_0')(X)$. 
Since $\game \restr \straa'$ is a finite Markov chain, 
there exists a state $(X',Y',C',Z')$ which is reachable from $(X,Y,C,Z)$ 
in $\game \restr \straa'$ such that:
\begin{itemize}
 \item $\lbrace Z_B\rbrace=\SetRec_{\straa'}(m')(X')$, where $m'=(Y',C',Z')$.
 \item For all $(X'',Y'',C'',Z'')$ reachable from $(X',Y',C',Z')$ 
  in $\game \restr \straa'$ we have 
  $\lbrace Z_B\rbrace=\SetRec_{\straa'}(m'')(X'')$ where $m''=(Y'',C'',Z'')$.
\end{itemize}
The above follows from the following property of a finite Markov chain:
given a state $s$ of a finite Markov chain and a recurrent class $R$ 
reachable from $s$, from all states $t$ of $R$ the recurrent class reachable 
from $t$ is $R$ only (Property~2 of Markov chains). The condition is preserved by a projection on colors 
of states in $R$.

By Lemma \ref{l-reach-pr-prob-1}, there exists a pseudo-recurrent state 
$(X'',Y'',C'',Z'')$ 
which is reachable from $(X',Y',C',Z')$ in $\game \restr \straa'$.
By Lemma \ref{l-rec_pr_pr}, we know that $Z''(X'')=\SetRec_{\sigma'}(m'')(X'')$ 
where $m''=(Y'',C'',Z'')$. 
Since $\SetRec_{\straa'}(m'')(X'')=\lbrace Z_B\rbrace$, and since by 
Lemma~\ref{l-chain_Z} (third point) we have 
$Z''(X'')\subseteq Z'(X')\subseteq Z(X)$, we get that $Z_B\in Z(X)$.
\end{itemize}
The desired result follows.
\end{proof}

\begin{theorem}\label{thrm:muller}
Given a POMDP $G$ and a Muller objective $\Muller(\calf)$ 
with the set $\priority$ of colors,
the following assertions hold:
\begin{enumerate}
\item If there is a finite-memory almost-sure winning 
strategy $\sigma$, then the projected strategy $\projp(\sigma)$,
with memory of size at most $\Mem^*= 2^{2\cdot |S|} \cdot |\D|^{|S|}$  (where $\D
= \powset(\powset(\priority))$), is also an almost-sure winning strategy.

\item If there is a finite-memory positive winning 
strategy $\sigma$, then the projected strategy $\projp(\sigma)$,
with memory of size at most $\Mem^*$, is also a positive 
winning strategy.

\item Finite-memory almost-sure (resp. positive) winning 
strategies require at least exponential memory in general, and 
randomized belief-based stationary strategies are not sufficient in general 
for finite-memory almost-sure and positive winning strategies.
\end{enumerate}
\end{theorem}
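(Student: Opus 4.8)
The plan is to derive all three parts of the theorem from Lemma~\ref{l-equiv-recs} together with Lemma~\ref{lemm:Markov-basic}, and to handle Part~3 by pointing to the results cited earlier. First I would set up the correspondence between winning of $\straa$ and winning of $\straa'=\projp(\straa)$ through their recurrent-class structure. Recall that by Lemma~\ref{lemm:Markov-basic}, whether a finite-memory strategy is almost-sure (resp.\ positive) winning from the initial state depends only on the colors of the projections onto $S$ of the recurrent classes reachable from $(s_0,m_0)$ in the induced Markov chain: the strategy is almost-sure winning iff $\col(\proj_1(\ov C))\in\calf$ for \emph{every} reachable recurrent class, and positive winning iff this holds for \emph{some} reachable recurrent class. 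In our notation this set of color sets is exactly $\SetRec_{\straa}(m_0)(s_0)$ for $\straa$ and $\SetRec_{\straa'}(m_0')(s_0)$ for $\straa'$.

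For Part~1, suppose $\straa$ is finite-memory almost-sure winning. Then every $Z_B\in\SetRec_{\straa}(m_0)(s_0)$ satisfies $Z_B\in\calf$. By Lemma~\ref{l-equiv-recs} we have $\SetRec_{\straa'}(m_0')(s_0)=\SetRec_{\straa}(m_0)(s_0)$, so every color set of a recurrent class reachable in $\game\restr\straa'$ also lies in $\calf$; hence by Lemma~\ref{lemm:Markov-basic} the strategy $\straa'$ is almost-sure winning. The memory bound is immediate: the memory set of $\projp(\straa)$ is $V$, and by the remark following the definition of the projection graph, $V\subseteq\powset(S)\times\set{0,1}^{|S|}\times\D^{|S|}$, so $|M'|=|V|\le 2^{|S|}\cdot 2^{|S|}\cdot|\D|^{|S|}=2^{2\cdot|S|}\cdot|\D|^{|S|}=\Mem^*$. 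Part~2 is the dual argument: if some $Z_B\in\SetRec_{\straa}(m_0)(s_0)$ lies in $\calf$, then by Lemma~\ref{l-equiv-recs} the same $Z_B$ lies in $\SetRec_{\straa'}(m_0')(s_0)$, so by Lemma~\ref{lemm:Markov-basic} $\straa'$ is positive winning, with the same memory bound.

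Part~3 is not proved here from scratch; I would simply record that the exponential lower bound on memory already follows for reachability objectives from~\cite{CDH10a}, and that the insufficiency of randomized belief-based stationary strategies (for both almost-sure and positive winning, already for coB\"uchi objectives) is witnessed by Example~\ref{ex:belief-suff} and Example~\ref{ex:complex} above. The only genuinely nontrivial ingredient is Lemma~\ref{l-equiv-recs}, whose proof has already been carried out; given it, the present theorem is a short bookkeeping argument. The one point to state carefully is the correct accounting of which recurrent classes are \emph{reachable from the initial state}: $\SetRec$ is defined exactly as reachable-from-$(s,m)$, and the initial vertex $m_0'=(\set{s_0},\BoolRec_{\straa}(m_0),\SetRec_{\straa}(m_0))$ of $\PG(\straa)$ corresponds to the initial state, so no discrepancy arises. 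Thus the main (and only real) obstacle — preserving recurrent-class color structure under the collapsing and action-enlarging projection — is the content of Lemma~\ref{l-equiv-recs}, and everything in this theorem reduces to invoking it.
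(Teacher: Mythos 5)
Your proposal matches the paper's own proof essentially verbatim: parts (1) and (2) are obtained by combining Lemma~\ref{lemm:Markov-basic} (winning is determined by the colors of reachable recurrent classes) with Lemma~\ref{l-equiv-recs} (these colors coincide for $\straa$ and $\projp(\straa)$), the memory bound comes from $|V|\le\Mem^*$, and part (3) is handled by citing~\cite{CDH10a} and the counterexamples. No gaps; the approach and the bookkeeping are the same as in the paper.
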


\begin{proof}
Consider a finite-memory strategy $\sigma$ with memory $M$ 
and initial memory $m_0$ and the Markov chain $G \restr \sigma$.
Also recall that the number of memory states used by $\projp(\sigma)$ 
is at most $\Mem^*$.
\begin{enumerate}
\item 
By Lemma~\ref{lemm:Markov-basic} if $\sigma$ is almost-sure 
winning, then for all recurrent classes $C$ reachable from $(s_0,m_0)$ 
in $G \restr \sigma$ we have $\col(\proj_1(C)) \in \calf$; and 
by Lemma~\ref{l-equiv-recs} it follows that in the Markov chain 
$G \restr \projp(\sigma)$ for all recurrent classes $C'$ 
reachable from $(s_0,m_0')$, where $m_0'$ is the initial 
memory of $\projp(\sigma)$, we have $\col(\proj_1(C')) \in \calf$. 
It follows from Lemma~\ref{lemm:Markov-basic} that $\projp(\sigma)$
is an almost-sure winning strategy.

\item 
By Lemma~\ref{lemm:Markov-basic} if $\sigma$ is positive
winning, then there exists some recurrent class $C$ reachable from 
$(s_0,m_0)$ in $G \restr \sigma$ with $\col(\proj_1(C))\in \calf$; and 
by Lemma~\ref{l-equiv-recs} it follows that in the Markov chain 
$G \restr \projp(\sigma)$ there exists some recurrent class $C'$ 
reachable from $(s_0,m_0')$, where $m_0'$ is the initial 
memory of $\projp(\sigma)$, with $\col(\proj_1(C')) \in \calf$. 
It follows from Lemma~\ref{lemm:Markov-basic} that $\projp(\sigma)$
is a positive winning strategy.

\item The exponential memory requirement follows from the results
of~\cite{CDH10a} that shows exponential memory requirement for 
almost-sure winning strategies for reachability objectives 
and positive winning strategies for safety objectives.
The fact that randomized belief-based stationary strategies are not sufficient 
follows from Example~\ref{ex:belief-suff}.
\end{enumerate} 
The desired result follows.
\end{proof}


\newcommand{\Int}{I}
\newcommand{\wt}{\widetilde}
\newcommand{\wb}{\overline}

\section{Strategy Complexity for Parity Objectives under 
Finite-memory Strategies}
\label{sec:reduction}
In this section we will establish the exponential upper bounds for almost-sure
(resp. positive) winning strategies in POMDPs with parity objectives 
under finite-memory strategies.
Observe that B\"uchi and coB\"uchi objectives are parity (hence also Muller) 
objectives with $2$ priorities (or colors) (i.e., $d=2$), and from 
Theorem~\ref{thrm:muller} we already obtain an upper bound of 
$2^{6 \cdot |S|}$ on memory size for them.
However, applying the general result of Theorem~\ref{thrm:muller} for Muller
objectives to parity objectives we obtain a double exponential bound, and we
will establish the exponential memory bound.
Formally we will establish Theorem~\ref{thrm:parity}:
for item~(1), in Section~\ref{subsec:positive} we present a reduction that
for positive winning given a POMDP with $|S|$ states and a parity objective 
with $2\cdot d$ priorities constructs an equivalent POMDP with $d \cdot |S|$ 
states with B\"uchi objectives (and thus applying Theorem~\ref{thrm:muller} we 
obtain the $2^{3\cdot d \cdot |S|}$ upper bound); 
for item~(2), in Section~\ref{subsec:almost} we present a reduction that
for almost-sure winning given a POMDP with $|S|$ states and a parity objective 
with $2\cdot d$ priorities constructs an equivalent POMDP with $d \cdot |S|$ 
states with a coB\"uchi objective (and thus applying Theorem~\ref{thrm:muller} we obtain 
the $2^{3\cdot d \cdot |S|}$ upper bound); 
and item~(3) follows as in the proof of Theorem~\ref{thrm:muller}.

\begin{theorem}\label{thrm:parity}
Given a POMDP $G$ and a parity objective $\Parity(p)$ 
with the set $\priority$ of $d$ priorities,
the following assertions hold:
\begin{enumerate}

\item If there is a finite-memory positive winning strategy, then 
there is a positive winning strategy with memory of size at most $2^{3\cdot d \cdot |S|}$.

\item If there is a finite-memory almost-sure winning strategy, then 
there is an almost-sure winning strategy with memory of size at most $2^{3\cdot d \cdot |S|}$.

\item Finite-memory almost-sure (resp. positive) winning strategies require 
exponential memory in general, and belief-based stationary strategies are not sufficient 
in general for finite-memory almost-sure (resp. positive) winning strategies.

\end{enumerate}
\end{theorem}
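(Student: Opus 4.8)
The plan is to reduce parity objectives to the already-solved case of Büchi and coBüchi objectives (which are Muller objectives with $d=2$), so that Theorem~\ref{thrm:muller} can be applied directly. For item~(2) (almost-sure winning) I would construct, from a POMDP $G$ with $|S|$ states and a parity objective with priorities $\set{0,1,\ldots,2d-1}$, an ``equivalent'' POMDP $\wh{G}$ with roughly $d\cdot|S|$ states carrying a coBüchi objective. The idea is the standard one for converting parity to coBüchi at the cost of memory: the controller maintains in its state a guess $i \in \set{0,1,\ldots,d-1}$ of ``the smallest priority seen infinitely often equals $2i$''. In the product $\wh{G} = G \times \set{0,\ldots,d-1}$, the coBüchi target set consists of all pairs $(s,i)$ such that $p(s) \geq 2i$ (i.e.\ $s$ has priority at least the guessed minimum-even value), together with the ability to ``restart'' the guess whenever a priority strictly below $2i$ is observed. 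One has to be careful about what is observable: the component $i$ is part of the controller's memory, so it can be folded into the strategy rather than the state space, or it can be made visible without affecting the observation structure. Then: a play in $G$ satisfies $\Parity(p)$ iff there is a consistent guess $i$ and a tail along which only priorities $\geq 2i$ occur and priority $2i$ occurs infinitely often; this translates to a coBüchi condition in $\wh{G}$ on the set of ``good'' product states. Symmetrically, for item~(1) (positive winning) the dual construction gives a Büchi objective on a POMDP of size $d\cdot|S|$.

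The key steps, in order, would be: (a)~define the product POMDP $\wh{G}$ precisely, specifying the transition function, the observation map (which must not leak more than $\obsmap$ does), and the (co)Büchi target set; (b)~prove the \emph{correctness} of the reduction: there is a finite-memory almost-sure (resp.\ positive) winning strategy in $G$ for $\Parity(p)$ iff there is one in $\wh{G}$ for the corresponding coBüchi (resp.\ Büchi) objective, and moreover a finite-memory strategy of memory $\mu$ in $\wh{G}$ yields one of memory $d\cdot\mu$ (or $\mu$, since the priority guess is already in the state) in $G$ and vice versa; (c)~apply Theorem~\ref{thrm:muller} to $\wh{G}$, which has $|\wh{S}| \leq d\cdot|S|$ states and an objective with $2$ colors, to get a winning strategy of memory size at most $2^{2\cdot|\wh{S}|}\cdot |\D|^{|\wh{S}|} = 2^{2 d|S|}\cdot (2^{2^2})^{d|S|} = 2^{2d|S|}\cdot 2^{4d|S|} = 2^{6 d|S|}$; wait---this gives $2^{6d|S|}$, not $2^{3d|S|}$, so one must instead bound the Muller-to-(co)Büchi memory more tightly, using that for $d=2$ colors $|\D| = 2^{2^2} = 16 = 2^4$, giving $2^{2|\wh S|}\cdot 2^{4|\wh S|}$; to reach $2^{3d|S|}$ I would instead note that the relevant objective on $\wh G$ is coBüchi and revisit the Muller bound specialised to coBüchi, where $\SetRec$ takes values in a set of size $2$ rather than $2^{2^2}$ (only the target-or-not distinction of recurrent classes matters), yielding memory $2^{2|\wh S|}\cdot 2^{|\wh S|} = 2^{3|\wh S|} \leq 2^{3d|S|}$; and finally (d)~transfer this strategy back through the reduction, observing the memory blows up by at most the factor already absorbed into $|\wh S| \leq d|S|$. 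Item~(3) is immediate from the corresponding part of Theorem~\ref{thrm:muller} together with Example~\ref{ex:belief-suff} and the lower bounds of~\cite{CDH10a}, since reachability and safety are special cases of parity.

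The main obstacle I anticipate is step~(b), the correctness of the reduction, specifically the \emph{soundness} direction for almost-sure winning: showing that a finite-memory strategy winning the coBüchi objective in $\wh{G}$ induces one winning the parity objective in $G$. The subtlety is that in $\wh{G}$ the controller commits to a priority guess $i$, and the coBüchi condition forces that from some point on only priorities $\geq 2i$ are seen; but one must also ensure priority $2i$ is genuinely seen \emph{infinitely often} (not just that smaller priorities vanish), which a plain coBüchi target set does not enforce---so the construction must additionally interleave a check that returns to a ``priority-$2i$ witnessed'' flag infinitely often, and one must verify this does not inflate the state count beyond $d\cdot|S|$ and does not turn the objective into a genuine parity/Streett condition. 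I would resolve this by using the well-known trick where reaching priority exactly $2i$ resets an auxiliary bit and the coBüchi target excludes the ``waiting-too-long'' states, arguing via the recurrent-class structure (Lemma~\ref{lemm:Markov-basic}) applied to $\wh{G}\restr{\straa}$ that in any recurrent class the minimal priority and the set of priorities visited infinitely often are exactly determined, so the coBüchi condition on $\wh{G}$ captures parity on $G$ at the level of recurrent classes, which is all that Lemma~\ref{lemm:Markov-basic} requires. The positive-winning / Büchi case is analogous and slightly easier because there one only needs \emph{some} recurrent class to be good.
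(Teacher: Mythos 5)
Your high-level plan---reduce the parity objective to a B\"uchi/coB\"uchi objective on a POMDP with roughly $d\cdot|S|$ states and then invoke Theorem~\ref{thrm:muller}, handling item~(3) via Example~\ref{ex:belief-suff} and the lower bounds of~\cite{CDH10a}---is the same as the paper's. The gap is in your step~(b), exactly where you anticipate trouble, and your proposed fixes do not close it. First, observability: you keep the guess $i$ either as controller memory or as a visible component, updated by ``restarting whenever a priority strictly below $2i$ is observed''. But priorities are attached to hidden states and are not observation-measurable, so the controller cannot perform these updates; and if the guess is a visible component updated by the environment, it leaks information about the hidden priorities, so a winning strategy in the product need not be implementable in $G$, breaking one direction of the reduction. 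The paper avoids this by putting the copy index into the \emph{hidden} state space and letting the transition function manage it, with the observation map ignoring the copy ($\obsmap$ of $(s,i)$ is $\obsmap(s)$); then literally the same finite-memory strategy, with the same memory set, transfers back and forth (Lemmas~\ref{lemm:pos1}, \ref{lemm:pos2}, \ref{lemm:almost1}, \ref{lemm:almost2}, \ref{lemm:almost3}).

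Second, and more fundamentally, your target set $\set{(s,i) \mid p(s)\ge 2i}$ only enforces ``eventually no priority below $2i$'', and the needed conjunction with ``priority $2i$ infinitely often'' is a Rabin pair that no deterministic auxiliary-bit or ``waiting-too-long'' gadget can turn into a coB\"uchi condition: a recurrent class with minimal priority $2i+1$ stays inside your target forever, and any bounded waiting counter overflows infinitely often almost surely even in perfectly good recurrent classes, since return times are unbounded. The missing idea is the paper's \emph{probabilistic} gadget: in copy $i$, states of priority $<2i$ escape with probability $\half$ (to an absorbing losing sink for the positive/B\"uchi reduction; to copy $i-1$ for the almost-sure reduction), and in a second step the priority-$0$ states of the resulting three-priority parity objective escape with probability $\half$ to an absorbing \emph{winning} sink. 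These random escapes delete the unwanted states from every recurrent class of the product chain under any finite-memory strategy, so that at the level of recurrent classes (Lemma~\ref{lemm:Markov-basic}) ``minimal priority even'' becomes exactly a B\"uchi, respectively coB\"uchi, condition---something your deterministic construction cannot achieve. (Your worry about $2^{6\cdot d\cdot|S|}$ versus $2^{3\cdot d\cdot|S|}$ is only a bookkeeping slip: the theorem's $d$ counts priorities, so the number of copies is about $d/2$ and the bound $2^{6\cdot(d/2)\cdot|S|}=2^{3\cdot d\cdot|S|}$ comes out without any sharpened Muller bound specialised to coB\"uchi.)
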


\subsection{Positive parity to positive B\"uchi}\label{subsec:positive}
Given a POMDP $\game = (S, \Act, \trans, \Obs, \obsmap, s_0)$ and a parity
objective $\Parity(p)$ with priority set $\priority = \{0,\ldots, 2\cdot d \}$, 
we will construct a POMDP $\game' = (S', \Act, \trans', \Obs', \obsmap',s'_0)$ 
together with a B\"uchi objective $\Buchi(\target)$ such that positive winning
under finite-memory strategies is preserved.
Let $\Int$ denote the set $\{0, \ldots, d\}$.
Intuitively, in the construction of $\game'$ we form a copy $\game_i$ 
of the POMDP $\game$ for each $i \in \Int$. 
There will be a positive probability of going from the newly added initial 
state $s'_0$ to every copy $\game_i$.
The transition probabilities in the copy $\game_i$ will mimic the transition 
probability of $\game$ for states with priority at least $2\cdot i$, and for 
states with priority strictly smaller than $2\cdot i$ it mimics the transitions 
of $\game$ with probability $\half$ and with probability $\half$ goes
to a newly added absorbing state $s_f$.
Note that the construction will ensure that for any finite-memory 
strategy, in the copy $\game_i$ there are no recurrent classes that contain 
priorities strictly smaller than $2\cdot i$ as the absorbing state $s_f$
is always reached with positive probability from such states  
(with priority strictly smaller than $2\cdot i$). 
Note that every recurrent class that intersects with an 
absorbing state must consists only of the absorbing state, since
there are no transitions from the absorbing state to any other
state.
In the copy $\game_i$ states with priority $2\cdot i$ are assigned priority~0,
and all other states are assigned priority~1.
Formally the construction is as follows:
\begin{itemize}
\item $S' = (S \times \Int) \cup \set{s'_{0}, s_{f}}$.
\item We define the probabilistic transition function $\trans'$ as follows:
\begin{enumerate}
\item $\trans'(s'_0, a)((s,i)) = \frac{\trans(s_0,a)(s)}{\vert I \vert}$, 
for all $a \in \Act$ and all $i \in \Int$, i.e., with positive probability 
we move to copy $\game_i$ for all $i \in \Int$;
\item $\trans'((s,i),a)((s',i)) = 
\begin{cases}
\trans(s,a)(s') & \mbox{if } p(s) \geq 2\cdot i; \\[2ex]
\frac{\trans(s,a)(s')}{2} & \mbox{otherwise; }\\
\end{cases}$
\item if $p(s) < 2\cdot i$, then we also have $\trans'((s,i),a)(s_f) = \frac{1}{2}$;
\item $\trans'(s_f,a)(s_f) = 1$ for all $a \in \Act$ (i.e., $s_f$ is an absorbing state).
\end{enumerate}
\item $\Obs' = \Obs \cup \{s_f\}$.
\item $\obsmap'((s,i)) = \obsmap(s)$, $\obsmap'(s'_0) = \obsmap(s_0)$ and
$\obsmap'(s_f) = s_f$.
\end{itemize}
We define the priority function $p'$ for the B\"uchi objective as follows:
\begin{enumerate}
\item $p'(s'_0) = 1$;
\item $p'((s,i)) =
\begin{cases}
0 & \mbox{if } p(s) = 2\cdot i; \\
1 & \mbox{otherwise;} \\
\end{cases}$
\item $p'(s_f) = 1$.
\end{enumerate}
The B\"uchi objective for $\game'$ is $\Buchi(p'^{-1}(0))$, i.e., the target 
set $\target$ is the set of states with priority~0 according to $p'$.

\begin{lemma}\label{lemm:pos1}
If there exists a finite-memory positive winning strategy in $\game$ for the
parity objective $\Parity(p)$, then there exists a finite-memory positive 
winning strategy with the same memory states in $\game'$ for the 
objective $\Buchi(p'^{-1}(0))$.
\end{lemma}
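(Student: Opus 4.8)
The plan is to take a finite-memory positive winning strategy $\straa = (\straa_u, \straa_n, M, m_0)$ for $\Parity(p)$ in $\game$ and use the \emph{same} memory structure, the same action selection function, and (up to the trivial renaming of observations) the same memory update function, to define a strategy $\straa'$ in $\game'$. Since $\game'$ has the same action alphabet and the observation map $\obsmap'$ only renames observations on the new copies (with $\obsmap'((s,i)) = \obsmap(s)$) and adds one fresh observation for $s_f$, the strategy $\straa'$ is well-defined and observation-based. The extra step for $\straa'$ is only at the very first move: from $s'_0$ the transition $\trans'$ already distributes mass to every copy $\game_i$ proportionally to $\trans(s_0, a)(s)$, so after one step the memory of $\straa'$ is updated exactly as the memory of $\straa$ would be after the first step in $\game$, and from then on $\straa'$ inside any single copy $\game_i$ simulates $\straa$.

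The core of the argument is to relate recurrent classes of $\game' \restr \straa'$ to recurrent classes of $\game \restr \straa$, and then invoke Lemma~\ref{lemm:Markov-basic}. First I would observe that, since $\straa$ is positive winning, there is a recurrent class $\wb C$ of $\game \restr \straa$ reachable from $(s_0, m_0)$ such that $\min(p(\proj_1(\wb C)))$ is even, say equal to $2\cdot i$ for some $i \in \Int$. Fix this $i$ and look at the copy $\game_i$. Inside $\game_i$, every state $(s,i)$ with $p(s) \geq 2\cdot i$ has exactly the transitions of $s$ in $\game$ (rescaled by nothing), while states with $p(s) < 2\cdot i$ leak probability $\half$ to the absorbing sink $s_f$. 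Since $\min(p(\proj_1(\wb C))) = 2\cdot i$, no state of $\wb C$ has priority strictly below $2\cdot i$, so the image of $\wb C$ under $s \mapsto (s,i)$ carries the \emph{unchanged} transition structure of $\wb C$; hence $\set{(s,i) \mid s \in \proj_1(\wb C)}$ paired with the corresponding memory states is a bottom scc, i.e. a recurrent class $\wb C'$ of $\game' \restr \straa'$. This $\wb C'$ is reachable from $(s'_0, m_0)$ with positive probability (go to copy $\game_i$ on the first step, then follow the path witnessing reachability of $\wb C$). Finally, $\wb C'$ contains a state $(s,i)$ with $p(s) = 2\cdot i$ (a state of $\wb C$ attaining the minimum priority), and by the definition of $p'$ such a state has priority $0$, so $\wb C' \cap \target \neq \emptyset$, i.e. $\wb C'$ satisfies the B\"uchi condition. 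By Lemma~\ref{lemm:Markov-basic} applied to $\game' \restr \straa'$, the strategy $\straa'$ is positive winning for $\Buchi(p'^{-1}(0))$.

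The main obstacle, and the step that needs the most care, is verifying that the image of $\wb C$ in copy $\game_i$ really is a \emph{bottom} scc of the induced Markov chain $\game' \restr \straa'$ — in particular that no transition escapes it. Inside $\game_i$ the only way to leave a set closed under $\game$-transitions is via the $\half$-probability edge to $s_f$, and that edge exists \emph{only} from states of priority strictly less than $2\cdot i$; since $\wb C$ attains minimum priority exactly $2\cdot i$, those edges are absent on $\wb C$, so the set stays absorbing. One also has to check that the memory component behaves correctly: because $\obsmap'((s,i)) = \obsmap(s)$ and $\straa'$ reuses $\straa_u, \straa_n$, the joint state-memory transition structure of $\game' \restr \straa'$ restricted to $\set{(s,i) \mid s \in S} \times M$ is isomorphic to that of $\game \restr \straa$ (for this particular $i$, on the relevant priority range), so recurrent classes transfer over. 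The remaining bookkeeping — reachability from $s'_0$ through the first randomizing step, and the priority relabelling making the minimal-priority state a $0$-priority (B\"uchi target) state — is routine.
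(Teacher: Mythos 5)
Your proposal is correct and follows essentially the same route as the paper's proof: reuse the same memory structure of $\straa$ in $\game'$, pick a recurrent class $C$ of $\game\restr\straa$ with even minimal priority $2\cdot i$, observe that in copy $\game_i$ the transitions on $\proj_1(C)$ are unchanged (no leak to $s_f$ since all priorities are at least $2\cdot i$) so its copy is a recurrent class, establish reachability through the first randomizing step, and note that the minimal-priority state becomes a priority-$0$ target, concluding via Lemma~\ref{lemm:Markov-basic}. The points you flag as needing care (bottom-scc closure and the memory component) are exactly the ones the paper's argument handles, in the same way.
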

\begin{proof}
Let $\straa  = (\straa_u, \straa_n, M, m_0)$ be a finite-memory positive winning 
strategy in the POMDP $\game$ for the objective $\Parity(p)$.
We define the strategy $\straa' = (\straa_u',\straa'_n, M, m_0)$ in the POMDP $\game'$ as follows:
the strategy plays as $\straa$ for all states other than $s_f$, and $\straa'_u(\{s_f\},m,a)(m)$
for all $a \in \Act$.
As the only state in the observation $s_f$ is the absorbing state $s_f$, no
matter what the strategy plays, $s_f$ is not left. The rest of the components is
the same as in the strategy $\straa$.
Let $\wh{\game}$ denote the Markov Chain $\game \restr \straa$ and $\wh{\game}'$ the Markov
chain $\game' \restr \straa'$.

\smallskip\noindent{\em Reachability.}
We first show that if $(s',m')$ is reachable from $(s,m)$ in $\wh{\game}$  
for $s,s'\in S$ and $m,m'\in M$, then for all $i\in \Int$ we have that 
$(s',i,m')$ is reachable from $(s,i,m)$ in $\wh{\game}'$. 
We prove the fact inductively. Let $(s,m) \rightarrow (s',m')$ be an edge in $\wh{G}$,
then there exists an action $a \in \Act$ such that 
(i)~$s' \in \Supp(\trans(s,a))$, (ii)~$a \in \Supp(\straa_n(m))$, 
and (iii)~$m' \in \Supp(\straa_u(m,\obsmap(s'),a))$.
By definition of the transition function $\trans'$ this implies that 
$(s',i) \in \Supp(\trans'((s,i),a)$.
Since $\straa'$ plays the same as $\straa$,  
it follows that $(s,i,m) \rightarrow (s',i,m')$ is an edge in $\wh{\game}'$. 
As the state spaces of the Markov chains are finite, this extends to 
reachability by simple induction.

\smallskip\noindent{\em Recurrent class.}
Since $\straa$ is a positive winning strategy, there must exist a recurrent
class $C$ reachable from $(s_0,m_0)$ in $\wh{\game}$ such that the  minimal 
priority $\min(p(\proj_1(C)))$ is even. 
Let that minimal priority be $2 \cdot i$.
Consider the copy $\game_i$ of $\game$ in $\game'$: it contains all states from $\proj_1(C)$,
and moreover as the minimal priority of the states in $\proj_1(C)$ is 
$2 \cdot i$ (according to $p$), 
the transition function $\trans'$ matches the transition function $\trans$ 
for states in $\proj_1(C)$ and all actions $a \in \Act$.
As the strategy $\straa'$ does not know, due to the observation definition, in
which copy it is and plays as the strategy $\straa$ does in
$\game$, the set $C' = \{(s,i,m) \mid (s,m) \in C \}$ of
states forms a recurrent class in $\wh{\game}'$.

Finally we need to show that $C'$ is reachable from $(s_0',m_0)$ in $\wh{\game}'$.
Since $C$ is reachable from $(s_0,m_0)$ in $\wh{\game}$, there exists a state 
$(s,m)$ that is reachable in one step from $(s_0,m_0)$ and $C$ is reachable from 
$(s,m)$.
The state $(s,i,m)$ is reachable in one step from $(s_0',m_0)$ in $\wh{\game}'$ 
(from the initial state $s_0'$ all copies are reached with positive probability), 
and reachability to $C'$ from $(s,i,m)$ follows from the argument above for 
reachability.
As the set $\proj_1(C')$ contains a state $s$ with $p(s)= 2 \cdot i$, we
have that $p'((s,i)) = 0$, i.e., a target state belongs to $C'$. 
It follows that $\straa'$ is a finite-memory positive winning strategy
in $\game'$ for $\Buchi(p'^{-1}(0))$ and the desired result follows.
\end{proof}

\begin{lemma}\label{lemm:pos2}
If there exists a finite-memory positive winning strategy in $\game'$ for the 
objective $\Buchi(p'^{-1}(0))$, then there exists a finite-memory positive winning
strategy with the same memory states in $\game$ for the objective $\Parity(p)$.
\end{lemma}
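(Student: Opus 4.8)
The plan is to reuse the given strategy essentially verbatim. Since for every observation $o\in\Obs$ and every copy $i\in\Int$ we have $\obsmap'((s,i))=\obsmap(s)$, a finite-memory strategy $\straa'=(\straa_u',\straa_n',M',m_0')$ on $\game'$ restricts to a well-defined observation-based strategy $\straa=(\straa_u,\straa_n,M',m_0')$ on $\game$ with the same memory states, where $\straa_n=\straa_n'$ and $\straa_u(m,o,a)=\straa_u'(m,o,a)$ for $o\in\Obs$. The claim to establish is that if $\straa'$ is positive winning for $\Buchi(p'^{-1}(0))$, then $\straa$ is positive winning for $\Parity(p)$.

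First I would locate a suitable recurrent class in $\game'\restr\straa'$. The state $s_0'$ has no incoming edge, so it is transient; and because $s_f$ is absorbing, any recurrent class meeting $\set{s_f}$ has state-projection $\set{s_f}$ and hence priority set $\set{1}$, so it is not winning for $\Buchi(p'^{-1}(0))$. Since no edge goes from copy $\game_j$ to copy $\game_i$ for $i\neq j$, every other recurrent class lies inside a single copy. As $\straa'$ is positive winning, Lemma~\ref{lemm:Markov-basic} gives a recurrent class $C'$ reachable from $(s_0',m_0')$ that contains a target state $(s,i,m)$ with $p'((s,i))=0$, i.e. $p(s)=2\cdot i$; by the above, $C'$ sits entirely inside copy $\game_i$. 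Moreover $C'$ cannot contain any $(s',i,m')$ with $p(s')<2\cdot i$: from such a state $s_f$ is entered with probability $\half$, contradicting that $C'$ is a bottom scc. Hence every state of $C'$ has priority $\geq 2\cdot i$, and at least one has priority exactly $2\cdot i$, so $\min(p(\proj_1(C')))=2\cdot i$ is even.

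Next I would transfer $C'$ back to $\game$ by setting $C=\set{(s,m)\mid (s,i,m)\in C'}$. The key point is that inside copy $\game_i$ the non-sink transition $\trans'((s',i),a)((s'',i))$ has exactly the same support as $\trans(s',a)(s'')$ (it equals the latter when $p(s')\geq 2\cdot i$ and half of it otherwise), and the memory update is unchanged; since $\straa$ plays exactly as $\straa'$, a pair $(s'',m'')$ is a $\game\restr\straa$-successor of $(s',m')$ iff $(s'',i,m'')$ is a non-sink $\game'\restr\straa'$-successor of $(s',i,m')$. For states of $C'$ (all of priority $\geq 2\cdot i$) no mass goes to $s_f$, so the sub-chain of $\game'\restr\straa'$ on $C'$ is isomorphic to the sub-chain of $\game\restr\straa$ on $C$; consequently $C$ is strongly connected and closed under $\game\restr\straa$-transitions, i.e. a recurrent class of $\game\restr\straa$. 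Finally, a witnessing path from $(s_0',m_0')$ to $C'$ must enter copy $\game_i$ at its first step and thereafter avoid $s_f$, so its projection is a path in $\game\restr\straa$ from some $(t,m_1)$ — reachable in one step from $(s_0,m_0')$ — into $C$; hence $C$ is reachable from $(s_0,m_0')$ in $\game\restr\straa$. Since $\min(p(\proj_1(C)))=2\cdot i$ is even, Lemma~\ref{lemm:Markov-basic} yields that $\straa$ is a finite-memory positive winning strategy in $\game$ for $\Parity(p)$, with the same memory $M'$.

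The step I expect to require the most care is verifying that $C$ really is a recurrent class of $\game\restr\straa$: one must check both that it is strongly connected (inherited from $C'$ via the support correspondence) and, crucially, that no $\game\restr\straa$-edge escapes $C$, which combines the priority lower bound $\geq 2\cdot i$ on the states of $C'$ with the absence of inter-copy edges and the fact that $C'$ is a bottom scc. The reachability transfer is comparatively easy once one observes that a path reaching a recurrent class inside $\game_i$ can never pass through $s_f$.
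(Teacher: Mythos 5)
Your proposal is correct and follows essentially the same route as the paper's proof: use the same strategy (restricted to the observations of $\game$), pick a reachable recurrent class $C'$ containing a priority-$0$ state, argue it lies in a single copy $\game_i$ and contains no state of priority below $2\cdot i$ (else $s_f$ would be reached), transfer it to $C=\set{(s,m)\mid (s,i,m)\in C'}$, and transfer reachability edge-by-edge through the copy. The extra care you take in verifying that $C$ is closed and strongly connected only makes explicit what the paper states more briefly.
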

\begin{proof}
Given a finite-memory positive winning strategy 
$\straa=(\straa_u,\straa_n,M,m_0)$ in the POMDP $\game'$
we show that $\straa$ is also positive winning in the POMDP $\game$.
Similar to the previous lemma we fix the strategy $\straa$ in $\game$ and
obtain a Markov Chain $\wh{\game} = \game \restr \straa$ and 
$\wh{\game}' = \game'\restr \straa$. 

\smallskip\noindent{\em Reachability.}
We show that if  $(s',i,m')$ is reachable from $(s,i,m)$ in $\wh{\game}'$ 
for $s,s' \in S$, $m,m'\in M$, and $i \in \Int$, then $(s',m')$ is reachable 
from $(s,m)$ in $\wh{\game}$. 
This follows from the fact that (i)~if there is an edge 
$(s,i) \stackrel{a}{\rightarrow}(s',i)$ in $\wh{\game}'$ 
(i.e., $\trans'((s,i),a)((s',i))>0$), then we have an edge 
$s\stackrel{a}{\rightarrow}s'$ in $\wh{\game}$ (i.e., $\trans(s,a)(s')>0$);
and (ii)~the strategy $\straa$ is the same for both POMDPs.

\smallskip\noindent{\em Recurrent class.}
As $\straa$ is a positive winning strategy in $\game'$, there exists a recurrent
class $C'$ reachable from $(s_0',m_0)$ in $\wh{\game}'$ which satisfies 
$\proj_1(C') \cap p'^{-1}(0)$ is non-empty. 
Note that there must exist an $i \in \Int$ such that all the states of the 
recurrent class $C'$ are elements from the set $S \times \{i\} \times M$, 
i.e., the class is included in some copy $\game_i$ (since there are no 
transitions between copies and the absorbing state has priority~$1$).
As $C'$ forms a recurrent class in copy $i$ it follows that all the states in $\proj_1(C')$ 
have priority at least $2 \cdot i$ according to $p$ 
(since states with priority strictly smaller than $2\cdot i$ according to 
$p$ have positive transition probability to $s_f$).
Consider the set of states $C = \{(s,m) \mid
(s,i,m) \in C' \}$ in $\wh{\game}$. As the strategies we consider are the same,
$C$ forms a recurrent class in $\wh{\game}$ with the minimal priority
at least $2 \cdot i$. 
Moreover, since $\proj_1(C') \cap p'^{-1}(0)$ is non-empty, it follows that the
minimal priority of $C$ is exactly $2\cdot i$, i.e., 
$\min(p(\proj_1(C)))$ is $2\cdot i$ and even.

Finally, it remains to show that $C$ is reachable from the initial state of $\wh{\game}$.
Since $C'$ is reachable from $(s_0',m_0)$, it must be reachable from some state 
$(s,i,m)$ of copy $\game_i$ and $(s,i,m)$ is reachable in one step from $(s_0',m_0)$
in $\wh{\game}'$.
Then it follows that the state $(s,m)$ is reachable from $(s_0,m_0)$ in  
one step in $\wh{\game}$, and the reachability of $C$ from $(s,m)$ follows 
from the fact that $C'$ is reachable from $(s,i,m)$ and the argument for 
reachability above.
Hence, $\straa$ is a positive winning strategy for the objective $\Parity(p)$
in $\game$ and the desired result follows.
\end{proof}

Lemma~\ref{lemm:pos1} and Lemma~\ref{lemm:pos2} establishes item~(1)
of Theorem~\ref{thrm:parity}.

\subsection{Almost-sure parity to almost-sure coB\"uchi}\label{subsec:almost}
For almost-sure winning the reduction from parity objectives to coB\"uchi objectives
will be achieved in two steps: (1)~First we show how to reduce POMDPs with parity objectives
to POMDPs with parity objectives with priorities in $\set{0,1,2}$; and (2)~then show how to 
reduce POMDPs with  priorities in $\set{0,1,2}$ to coB\"uchi objectives, for
almost-sure winning.

\subsubsection{Almost-sure parity to almost-sure parity with three priorities}
Given a POMDP $\game = (S, \Act, \trans, \Obs, \obsmap, s_0)$ and a parity
objective $\Parity(p)$ with priority set $\priority = \{0,\ldots, 2 \cdot d+1 \}$, 
we will construct a POMDP $\wb{\game} = (\wb{S}, \Act, \wb{\trans}, \Obs,
\wb{\obsmap},
\wb{s}_0)$ together with a parity objective $\Parity(\wb{p})$ which assigns priorities
from the set $\{0,1,2\}$ such that almost-sure winning under finite-memory 
strategies is preserved.
Let $\Int$ denote the set $\{0, \ldots, d\}$.
Intuitively to construct $\wb{\game}$ we form a copy $\wb{\game}_i$ of the POMDP $\game$ for each
$i \in \Int$. The game starts in the initial state of the copy $\wb{\game}_d$. 
The transition probabilities in the copy $\wb{\game}_i$ are as follows:
for states with priority at least $2\cdot i$ it mimics the transition of 
$\game$; and for states with priority strictly smaller than $2\cdot i$ it 
mimics the transition of $\game$ with probability $\half$ and with probability 
$\half$ moves to the copy $i-1$ (i.e., to $\wb{\game}_{i-1}$).
In $\wb{\game}_{i}$, states with priority $2\cdot i$ and $2\cdot i+1$ are assigned 
priorities~0 and~1, respectively, and all other states are assigned priority~2.
We now present the formal construction of $\wb{\game}$:

\begin{itemize}
\item $\wb{S} = S \times \Int$
\item We define the transition function $\wb{\trans}$ in two steps; for a state
$(s,i) \in S \times \Int$ and an action $a \in \Act$:
\begin{enumerate}
\item $\wb{\trans}((s,i),a)((s',i)) = 
\begin{cases}
\trans(s,a)(s') & \mbox{if } p(s) \geq 2\cdot i \\ 
\frac{\trans(s,a)(s')}{2} & \mbox{otherwise};
\end{cases}$
\item $\wb{\trans}((s,i),a)((s',i-1)) = \frac{\trans(s,a)(s')}{2}$ if 
$p(s) < 2\cdot i$ 
\end{enumerate}
\item $\wb{\obsmap}((s,i)) = \obsmap(s)$;
\item $\wb{s}_0 = (s_0,d)$.
\end{itemize}
The new parity objective $\Parity(\wb{p})$ assigning priorities $\{0,1,2\}$ is
defined as follows:
$$ \wb{p}((s,i)) =
\begin{cases}
0 & \mbox{if } p(s) = 2\cdot i; \\
1 & \mbox{if } p(s) = 2\cdot i + 1; \\
2 & \mbox{otherwise;} \\ 
\end{cases} $$

\begin{lemma}\label{lemm:almost1}
If there exists a finite-memory almost-sure winning strategy in the POMDP
$\game$ for the objective $\Parity(p)$, then there exists a finite-memory 
almost-sure winning strategy with the same memory states in the POMDP 
$\wb{\game}$ for the objective $\Parity(\wb{p})$ with three priorities.
\end{lemma}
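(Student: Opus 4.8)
The plan is to let $\wb{\straa}$ be \emph{literally the same} finite-memory strategy as $\straa$, i.e.\ $\wb{\straa} = (\straa_u, \straa_n, M, m_0)$; this is a legitimate observation-based strategy on $\wb{\game}$ because $\wb{\game}$ has the same observation alphabet $\Obs$ and $\wb{\obsmap}((s,i)) = \obsmap(s)$, and it uses exactly the memory states $M$ as required. By Lemma~\ref{lemm:Markov-basic} it then suffices to prove that in the Markov chain $\wb{\game} \restr \wb{\straa}$ every recurrent class reachable from the initial state $((s_0,d),m_0)$ has even minimum priority under $\wb{p}$.

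First I would pin down the structure of the recurrent classes of $\wb{\game} \restr \wb{\straa}$. Since every transition of $\wb{\game}$ either keeps the copy index fixed or decreases it by one, the copy index is non-increasing along runs; as a recurrent class is a bottom strongly connected component, it follows that every recurrent class $\wb{C}$ lies entirely inside a single copy $\wb{\game}_i$ (a copy-$i$ state cannot be reached from a lower copy). Next, no state $(s,i,m) \in \wb{C}$ can satisfy $p(s) < 2\cdot i$: from such a state $\wb{\straa}$ reaches with positive probability a state of copy $i-1$, which would have to lie in $\wb{C}$ since $\wb{C}$ is closed, contradicting that $\wb{C}$ sits in a single copy. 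Hence every state of $\wb{C}$ has $p(s) \geq 2\cdot i$, so by the definition of $\wb{\trans}$ the transitions of $\wb{\game}$ on these states coincide with those of $\game$; therefore the set $C = \set{(s,m) \mid (s,i,m) \in \wb{C}}$ is a bottom strongly connected component, i.e.\ a recurrent class, of $\game \restr \straa$ with $\min(p(\proj_1(C))) \geq 2\cdot i$. Moreover $C$ is reachable from $(s_0,m_0)$ in $\game \restr \straa$: any positive-probability path in $\wb{\game} \restr \wb{\straa}$ from $((s_0,d),m_0)$ to $\wb{C}$ projects (dropping copy indices) to a positive-probability path in $\game \restr \straa$ from $(s_0,m_0)$ to $C$, because every positive transition of $\wb{\trans}$ projects to a positive transition of $\trans$.

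Now I would use that $\straa$ is almost-sure winning for $\Parity(p)$: by Lemma~\ref{lemm:Markov-basic} applied to $\game \restr \straa$, $\min(p(\proj_1(C)))$ is even, say $2\cdot k$ with $k \geq i$, and a short case analysis on $\wb{p}$ restricted to copy $i$ finishes the argument. Recall $\wb{p}((s,i)) = 0$ iff $p(s) = 2\cdot i$, $\wb{p}((s,i)) = 1$ iff $p(s) = 2\cdot i + 1$, and $\wb{p}((s,i)) = 2$ otherwise. If $k = i$, then $\wb{C}$ contains a state of $\wb{p}$-priority $0$, so $\min(\wb{p}(\proj_1(\wb{C}))) = 0$, which is even. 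If $k > i$, then every state of $\wb{C}$ has $p(s) \geq 2\cdot k \geq 2\cdot i + 2$, hence none has $\wb{p}$-priority $0$ or $1$, so $\min(\wb{p}(\proj_1(\wb{C}))) = 2$, again even. Applying Lemma~\ref{lemm:Markov-basic} once more (the almost-sure direction) to $\wb{\game} \restr \wb{\straa}$ yields that $\wb{\straa}$ is almost-sure winning for $\Parity(\wb{p})$, completing the proof. Note that $\wb{\straa}$ uses the memory states $M$, the same as $\straa$.

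The step I expect to be the main obstacle is the structural analysis of the recurrent classes in $\wb{\game} \restr \wb{\straa}$: the core point is the ``no escape'' argument showing that a recurrent class confined to copy $i$ cannot meet any priority below $2\cdot i$, together with the verification that its projection to $S \times M$ is a genuine recurrent class of $\game \restr \straa$ — this needs that within the priority-$\geq 2\cdot i$ region the dynamics of $\wb{\game}$ faithfully copy those of $\game$, so that being a bottom SCC transfers. Once this correspondence between recurrent classes is in place, the priority bookkeeping via the three-valued $\wb{p}$ is routine.
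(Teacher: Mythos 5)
Your proof is correct and follows essentially the same route as the paper's: keep the strategy $\straa$ unchanged, observe that every recurrent class of $\wb{\game}\restr\straa$ sits in a single copy $i$ and contains no state of $p$-priority below $2\cdot i$ (so the dynamics there coincide with $\game$), project it to a reachable recurrent class of $\game\restr\straa$, and conclude via the priority bookkeeping for $\wb{p}$. The only differences are presentational (you argue directly over all recurrent classes while the paper argues by contradiction from a hypothetical odd-minimum class, and you make the path-projection argument for reachability explicit), so there is nothing substantive to add.
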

\begin{proof}
Let $\straa=(\straa_u,\straa_n,M,m_0)$ be a finite-memory almost-sure winning 
strategy in the POMDP $\game$ for the objective $\Parity(p)$ and $\wh{\game}$ 
the Markov Chain $\game\restr \straa$. 
We show that the strategy $\straa$ is also almost-sure winning in the POMDP 
$\wb{\game}$. 
Consider the Markov Chain $\wh{\game}'  = \wb{\game} \restr\straa$.
We need to show that for every recurrent class $\wb{C}$ reachable from the 
starting state $(s_0,d,m_0)$ we have that $\min(\wb{p}(\proj_1(\wb{C})))$ is even to show that 
$\straa$ is an almost-sure winning strategy in $\wb{\game}$.
We will show that if there is a reachable recurrent class in $\wh{\game}'$
with minimum priority odd, then there is a reachable recurrent class
in $\wh{\game}$ with minimum priority odd contradicting that $\straa$ is 
an almost-sure winning strategy in $\game$.

Assume towards contradiction that there exists a recurrent class 
$\wb{C}$ reachable from $(s_0,d,m_0)$ in $\wh{\game}'$ such that the minimal priority is odd, i.e., 
$\min(\wb{p}(\proj_1(\wb{C})))$ is odd (i.e.,~$\wb{C}$ contains a priority~1 state but no
priority~0 state).
By the construction of $\wb{\game}$, for every copy $\wb{\game}_i$, there are transitions
only to the states in the copy $\wb{\game}_i$ or to the lower copy $\wb{\game}_{i-1}$.
Hence there are no transitions from a lower copy to a higher copy.
Hence every recurrent class in $\wh{\game}'$ must be contained in some copy.
Let the recurrent class $\wb{C}$ be contained in copy $i$, i.e.,
$\wb{C} \subseteq S\times \set{i} \times M$.
Also note that by construction, every state with priority strictly smaller
than $2\cdot i$ (by priority function $p$) has positive probability transition 
to a lower copy and hence such states do not belong to the recurrent class.
Since $\min(\wb{p}(\proj_1(\wb{C})))$ is odd it follows that 
$\wb{C}$ does not contain a state with priority $0$ by $\wb{p}$ (i.e., priority $2\cdot i$ by 
$p$) but contains some state with priority $1$ by $\wb{p}$ (i.e., priority $2\cdot i+1$ by 
$p$), i.e., 
(i)~$\wb{C} \subseteq \big((\bigcup_{j \geq 2\cdot i} p^{-1}(j)) \times \set{i} \times M\big)$ 
($\wb{C}$ is contained in the copy $\wb{\game}_i$ and the priorities of the 
states in $\wb{C}$ are at least $2 \cdot i$);
(ii)~$\wb{C} \cap (p^{-1}(2\cdot i) \times \set{i} \times M)=\emptyset$ ($\wb{C}$ contains no 
priority~0 state according to $\wb{p}$); and
(iii)~$\wb{C} \cap (p^{-1}(2\cdot i+1) \times \set{i} \times M) \neq \emptyset$ 
($\wb{C}$ contains some priority~1 state according to $\wb{p}$).
Observe that due to the definition of observations whenever a state
$(s,i,m)$ is reachable in $\wh{\game}'$ we have that the state $(s,m)$ is also reachable in
$\wh{\game}$ (since memories of the strategies are the same and the observation function
cannot distinguish between copies). 
It follows that the set of states $C = \{(s,m) \mid (s,i,m) \in \wb{C}\}$ are reachable from 
$(s_0,m_0)$ in $\wh{\game}$.
Moreover as transition probabilities for states $(s,j)$ with $j \geq 2\cdot i$ are not modified
in the copy $\wb{\game}_i$ it follows that $C$ is a recurrent class reachable in  $\wh{\game}$.
Thus we have a recurrent class $C$ reachable from $(s_0,m_0)$ in $\wh{\game}$ such that 
$C \cap (p^{-1}(2\cdot i) \times M)=\emptyset$ and
$C \cap (p^{-1}(2\cdot i+1) \times M) \neq \emptyset$, i.e., the minimum
priority is $2 \cdot i+ 1$ and odd.
This contradicts that $\straa$ is an almost-sure winning strategy in $\game$ 
for $\Parity(p)$.
Hence it follows $\straa$ is an almost-sure winning strategy in $\wb{\game}$ for 
$\Parity(\wb{p})$ and the desired result follows.
\end{proof}

\begin{lemma}\label{lemm:almost2}
If there exists a finite-memory almost-sure winning strategy in the POMDP
$\wb{\game}$ for the objective $\Parity(\wb{p})$ with three priorities, then there 
exists a  finite-memory almost-sure winning strategy with the same memory 
states in the POMDP $\game$ for the objective $\Parity(p)$.
\end{lemma}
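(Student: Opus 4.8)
The plan is to prove that the \emph{very same} finite-memory strategy $\straa=(\straa_u,\straa_n,M,m_0)$ that is almost-sure winning in $\wb{\game}$ for $\Parity(\wb p)$ is almost-sure winning in $\game$ for $\Parity(p)$; this is legitimate since $\wb{\obsmap}((s,i))=\obsmap(s)$, so the observation alphabet of $\wb{\game}$ is $\Obs$ and $\straa$ can be read directly on $\game$. Write $\wh{\game}=\game\restr\straa$ and $\wh{\game}'=\wb{\game}\restr\straa$. By Lemma~\ref{lemm:Markov-basic} it suffices to show that every recurrent class $C$ of $\wh{\game}$ reachable from $(s_0,m_0)$ has $\min(p(\proj_1(C)))$ even. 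So I would assume towards a contradiction that some reachable recurrent class $C$ of $\wh{\game}$ has $\min(p(\proj_1(C)))=2\cdot j+1$ odd; note $j\in\Int$ since the priorities of $\game$ lie in $\set{0,\dots,2\cdot d+1}$. The goal is to exhibit a recurrent class of $\wh{\game}'$ reachable from $\wb{s}_0=(s_0,d,m_0)$ whose minimal $\wb p$-priority is odd, contradicting Lemma~\ref{lemm:Markov-basic} applied to $\wb{\game}$.

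The candidate is the copy-$j$ lift $\wb{C}=\set{(s,j,m)\mid (s,m)\in C}$ inside $\wb{\game}_j$, and I would first dispatch the easy structural facts. Every $s$ with $(s,m)\in C$ satisfies $p(s)\geq 2\cdot j+1\geq 2\cdot j$, so by the definition of $\wb{\trans}$ the transitions out of states of $\wb{C}$ keep the copy index equal to $j$ and coincide exactly with $\trans$; hence $\wb{C}$ is closed and strongly connected in $\wh{\game}'$, mirroring the bottom SCC $C$ of $\wh{\game}$, i.e.\ $\wb{C}$ is a recurrent class of $\wh{\game}'$. For its minimal priority, note $\wb p((s,j))=1$ for the state $s\in\proj_1(C)$ with $p(s)=2\cdot j+1$, while $\wb p((s,j))=0$ would force $p(s)=2\cdot j$, which is impossible since $\min(p(\proj_1(C)))=2\cdot j+1$; therefore $\min(\wb p(\proj_1(\wb{C})))=1$ is odd.

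Everything then reduces to showing $\wb{C}$ is reachable from $\wb{s}_0$, which I expect to be the main obstacle, since descending from copy $\wb{\game}_d$ to copy $\wb{\game}_j$ requires visiting states of sufficiently small priority. I would handle it by a \emph{lift then descend} construction. \emph{Lift:} take a path in $\wh{\game}$ from $(s_0,m_0)$ to some $(t,n)\in C$ and lift each of its edges into $\wb{\game}_d$; each lifted edge has positive probability (its probability is either $\trans(s,a)(s')$ or $\tfrac12\trans(s,a)(s')$), so $(t,d,n)$ is reachable from $\wb{s}_0$. \emph{Descend:} by downward induction on $i$ from $d$ to $j$ I would show that some $(s_i,i,m_i)$ with $(s_i,m_i)\in C$ is reachable from $\wb{s}_0$, the base case $i=d$ being $(t,d,n)$. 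For the step from $i$ to $i-1$ with $i\geq j+1$: since $C$ is a recurrent class of $\wh{\game}$ there is a path inside $C$ from $(s_i,m_i)$ to a state $(s^*,m^*)\in C$ with $p(s^*)=2\cdot j+1$, and every edge of this path lifts to a positive-probability edge of $\wh{\game}'$ staying inside $\wb{\game}_i$; then, because $p(s^*)=2\cdot j+1<2\cdot i$, there is a positive-probability transition from $(s^*,i)$ to $(s'',i-1)$ along an edge $(s^*,m^*)\to(s'',m'')$ of $C$ (such an edge exists because $(s^*,m^*)$ has a successor, which lies in $C$), giving the required $(s_{i-1},m_{i-1})=(s'',m'')\in C$. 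At $i=j$ the induction yields a state of $\wb{C}$ reachable from $\wb{s}_0$, so $\wb{C}$ is a reachable recurrent class of $\wh{\game}'$ with odd minimal $\wb p$-priority, the desired contradiction.

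Finally, combining this lemma with Lemma~\ref{lemm:almost1} I would conclude that $\game$ and $\wb{\game}$ are equivalent for almost-sure winning under finite-memory strategies and that such strategies can be taken with the same memory, which completes this first reduction step; composing it with the subsequent reduction from priorities $\set{0,1,2}$ to $\coBuchi$ objectives and with Theorem~\ref{thrm:muller} then yields item~(2) of Theorem~\ref{thrm:parity}.
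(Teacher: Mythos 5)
Your proposal is correct and follows essentially the same route as the paper: keep the very same strategy, assume a reachable recurrent class $C$ of $\game\restr\straa$ with odd minimal priority $2\cdot j+1$, lift it to the class $\wb{C}$ in copy $j$ (which is recurrent with minimal $\wb p$-priority $1$ because transitions in copy $j$ coincide with $\trans$ on states of priority at least $2\cdot j$), and establish reachability of $\wb{C}$ from $(s_0,d,m_0)$ by lifting a path into copy $d$ and descending copy by copy through a priority-$(2\cdot j+1)$ state of $C$. Your downward induction is just a slightly different bookkeeping of the paper's two "simple facts" (the paper tracks one fixed low-priority state $(s,m)$ across copies, you track an arbitrary state of $C$ and re-route to a low-priority state at each level), so the argument matches the paper's proof in substance.
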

\begin{proof}
As in the previous lemma let $\straa=(\straa_u,\straa_n,M,m_0)$ be a finite-memory 
almost-sure winning strategy in the POMDP $\wb{\game}$ for the objective
$\Parity(\wb{p})$ 
and $\wh{\game}'$ the Markov Chain $\wb{\game}\restr \straa$. 
We show that the strategy $\straa$ is also almost-sure winning in the POMDP 
$\game$. 
We consider the Markov Chain $\wh{\game}  = \game \restr \straa$.
We need to show that for all recurrent classes $C$ reachable in 
$\wh{\game}$ from $(s_0,m_0)$ the minimal priority is even.

Assume towards contradiction that there exists a reachable recurrent class $C$
from $(s_0,m_0)$ in $\wh{\game}$ with minimal priority odd, and let the minimal 
priority be $2\cdot i+1$. 
We need to show that this implies that there exists a reachable recurrent class 
from $(s_0,d,m_0)$ in $\wh{\game}'$ with minimal priority odd (as we consider 
only priorities $0,1,2$, the minimal priority is $1$).
Consider the subset of states $\wb{C} = \{(s,i,m) \mid (s,m) \in C \}$. The minimal
priority of the set in $\wb{\game}$ is 1 since $C$ does not contain any state with 
priority strictly smaller $2\cdot i+1$ and has some state with priority $2\cdot i +1$. 
The transition function $\wb{\trans}$ matches the transition function $\trans$ on states of 
$\proj_1(C)$ for any action $a\in \Act$. Therefore, $\wb{C}$ forms a recurrent class in $\wh{\game}'$. 
It remains to show that $\wb{C}$ is reachable from the initial state of $\wh{\game}'$.
Let $(s,i,m)$ be a state in $\wb{C}$ such that $\wb{p}(s)=1$: the state $(s,m)$ is reachable in 
$\wh{\game}$ from $(s_0,m_0)$ since $(s,m)$ is a state in the recurrent class $C$ reachable from 
$(s_0,m_0)$ in $\wh{\game}$.
Then for the starting copy $\wb{\game}_d$ we have that $(s,d,m)$ is 
reachable from $(s_0,d,m_0)$ in $\wh{\game}'$.
We now present two simple facts:
\begin{enumerate}
\item 
For all states $(s',m')\in C$ we have that $(s,m)$ is reachable from $(s',m')$ in 
$\wh{\game}$ (since $C$ is a recurrent class and both $(s',m')$ and $(s,m)$ 
belong to $C$), and it follows that for all $j\in \Int$ 
we have that $(s,j,m)$ is reachable from $(s',j,m')$ in the copy $j$.

\item 
Since $\wb{p}((s,i,m))=1$ we have that $p(s)=2\cdot i+1$, and for all $j >i$, 
in $\wb{\game}_j$ if the state $(s,j,m)$ is reached, then with positive 
probability we reach the copy $j-1$ (some state $(s',j-1,m')$).
Moreover, since $(s,m) \in C$, for all $j >i$, from $(s,j,m)$ we reach a 
state $(s',j-1,m')$ such that $(s',m') \in C$. 
\end{enumerate}
From the above two facts it follows that for all $j >i$ we have 
$(s,j-1,m)$ is reachable from $(s,j,m)$. 
It follows that $(s,i,m)$ is reachable from $(s,d,m)$ and since $(s,d,m)$ is
reachable from $(s_0,d,m_0)$ it follows that $(s,i,m)$ is reachable from
$(s_0,d,m_0)$.
Hence $\wb{C}$ is reachable from $(s_0,d,m_0)$ and we have 
a contradiction to the fact that $\straa$ is an almost-sure winning strategy in 
$\wb{\game}$ for $\Parity(\wb{p})$.
It follows that $\straa$ is an almost-sure winning strategy in $\game$ for 
$\Parity(p)$ and the desired result follows.
\end{proof}

Lemma~\ref{lemm:almost1} and Lemma~\ref{lemm:almost2} gives us the following 
lemma.
\begin{lemma}\label{lemm:almost4}
If there exists a finite-memory almost-sure winning strategy $\straa$ in the POMDP
$\game$ with the objective $\Parity(p)$, then $\straa$ is 
an almost-sure winning strategy in the POMDP $\wb{\game}$ with the objective
$\Parity(\wb{p})$ with three priorities; and vice versa.
\end{lemma}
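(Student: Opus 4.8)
The plan is to read Lemma~\ref{lemm:almost4} off directly from the two lemmas just established; there is nothing genuinely new to prove, but one point deserves to be made explicit. The construction of $\wb{\game}$ from $\game$ changes only the state space, the transition function, and the initial state, while keeping the same action alphabet $\Act$ and the same observation set $\Obs$, with $\wb{\obsmap}((s,i))=\obsmap(s)$. Since a finite-memory strategy $\straa=(\straa_u,\straa_n,M,m_0)$ is specified entirely through its memory set, the observations, and the actions, the very same object $\straa$ is a valid observation-based finite-memory strategy in $\game$ and in $\wb{\game}$. Hence the biconditional of Lemma~\ref{lemm:almost4} concerns a single syntactic strategy $\straa$, not two strategies related by a translation, and the ``same memory states'' clause requires no separate check.

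Given this, the proof is a two-line composition of implications. For the forward direction I would invoke Lemma~\ref{lemm:almost1}: if $\straa$ is a finite-memory almost-sure winning strategy in $\game$ for $\Parity(p)$, then $\straa$ is almost-sure winning in $\wb{\game}$ for the three-priority objective $\Parity(\wb{p})$. For the converse I would invoke Lemma~\ref{lemm:almost2}: if $\straa$ is a finite-memory almost-sure winning strategy in $\wb{\game}$ for $\Parity(\wb{p})$, then $\straa$ is almost-sure winning in $\game$ for $\Parity(p)$. Chaining the two implications yields exactly the claimed equivalence.

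For completeness I would recall, in a sentence, where the actual content sits, inside Lemmas~\ref{lemm:almost1}--\ref{lemm:almost2}: fixing $\straa$ produces the Markov chains $\wh{\game}=\game\restr\straa$ and $\wh{\game}'=\wb{\game}\restr\straa$, and one establishes a correspondence between the recurrent classes reachable from the respective initial states --- each reachable recurrent class of $\wh{\game}'$ is confined to one copy $\wb{\game}_i$, consists only of states of $p$-priority at least $2i$, and has an even $\wb{p}$-minimum iff the matching class of $\wh{\game}$ has an even $p$-minimum; conversely a reachable recurrent class of $\wh{\game}$ whose $p$-minimum is $2i$ lifts inside copy $\wb{\game}_i$ to a reachable recurrent class with $\wb{p}$-minimum $0$. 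The only genuinely delicate step there, already handled, is reachability of the lifted class from $(s_0,d,m_0)$: one descends from copy $d$ to copy $i$ by repeatedly firing the probability-$\half$ transitions to the next lower copy, each of which is available at a priority-$(2i{+}1)$ state of the recurrent class viewed in the current higher copy, while recurrence of the class is used only to return, within a copy, to that state before descending again. Since those two lemmas are already in hand, Lemma~\ref{lemm:almost4} follows immediately.
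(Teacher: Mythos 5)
Your proposal is correct and matches the paper's proof, which likewise obtains Lemma~\ref{lemm:almost4} directly by combining Lemma~\ref{lemm:almost1} and Lemma~\ref{lemm:almost2} (whose proofs indeed show that the very same strategy $\straa$ works in both POMDPs, since $\wb{\game}$ shares the action alphabet and observation set of $\game$).
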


Next we show how to reduce the problem of almost-sure winning for parity
objectives with priorities
from the set $\set{0,1,2}$ to the problem of almost-sure winning for coB\"uchi
objectives in POMDPs.

\subsubsection{Almost-sure parity with three priorities to almost-sure coB\"uchi}
Consider a POMDP $\wb{\game} = (\wb{S}, \Act, \wb{\trans}, \Obs, \wb{\obsmap},
\wb{s}_0)$ with a parity
objective $\Parity(\wb{p})$ assigning priorities from the set $\{0,1,2\}$. We
construct a POMDP $\wt{\game} =  (\wt{S}, \Act, \wt{\trans}, \wt{\Obs},
\wt{\obsmap}, \wt{s}_0)$ with a coB\"uchi objective $\coBuchi(\wt{T})$, where
the set of states $\wt{T}$ is going to be defined as $\wt{p}^{-1}(2)$ for a
function $\wt{p}$ assigning priorities from the set $\{1,2\}$.
Intuitively, for states with priority $1$ and $2$, the transition function 
$\wt{\trans}$ mimics the transitions of $\wb{\trans}$; and for states
with priority~0, the transition function $\wt{\trans}$ mimics the transitions 
of $\wb{\trans}$ with probability~$\half$ and with probability $\half$ goes to 
a newly added absorbing state that is assigned priority~2.
Formally the POMDP $\wt{\game}$ is defined as follows:
\begin{itemize}
\item $\wt{S}= \wb{S} \cup \{\wt{s}_r\}$;
\item $\wt{\trans}$ is defined for all states $s\in \wb{S}$ and all actions $a \in
\Act$ as follows:
\begin{enumerate}
\item $\wt{\trans}(s,a)(s') = 
\begin{cases}
\wb{\trans}(s,a)(s') & \mbox{if } \wb{p}(s) \in \{1,2\}; \\
\frac{\wb{\trans}(s,a)(s')}{2} & \mbox{if } \wb{p}(s) = 0 ; \\
\end{cases} $
\item $\wb{\trans}(s,a)(\wt{s}_r) = \half$ if $\wb{p}(s) = 0$ ;
\item $\wb{\trans}(\wt{s}_r,a)(\wt{s}_r) = 1$, i.e.,
$\wt{s}_r$ is an absorbing state;
\end{enumerate}
\item $\wt{\Obs} = \Obs \cup \{\wt{s}_r\}$, i.e., the additional state is a new
observation;
\item $\wt{\obsmap}(s) = \begin{cases}
\wb{\obsmap}(s) & \mbox{if } s \in \wb{S} \\
\wt{s}_r & \mbox{if } s = \wt{s}_r \\
\end{cases} $ 
\end{itemize}
The coB\"uchi objective is defined by a priority function $\wt{p}$, that is
defined as:
$$\wt{p}(s) = \begin{cases}
\wb{p}(s) & \mbox{if } \wb{p}(s) \in \{1,2\} \\
2 & \mbox{if } \wb{p}(s) = 0  \mbox{ or } s = \wt{s}_r\\
\end{cases} $$

The objective in $\wt{\game}$ is $\coBuchi(\wt{p}^{-1}(2))$. 

\begin{lemma}\label{lemm:almost3}
If there exists a finite-memory almost-sure winning strategy $\straa$ in the POMDP
$\wb{\game}$ with the objective $\Parity(\wb{p})$ with three priorities, then $\straa$ is 
an almost-sure winning strategy in the POMDP $\wt{\game}$ with the objective
$\coBuchi(\wt{p}^{-1}(2))$ and vice versa.
\end{lemma}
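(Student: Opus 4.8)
The plan is to use the recurrent-class characterisation of almost-sure winning (Lemma~\ref{lemm:Markov-basic}): a finite-memory strategy is almost-sure winning in a POMDP for a parity (respectively coB\"uchi) objective iff every recurrent class of the induced Markov chain that is reachable from the initial state satisfies the objective. Since $\wt{\game}$ carries the extra observation $\wt{s}_r$, I would first fix how $\straa$ is read in $\wt{\game}$: it behaves exactly as $\straa$ on the observations in $\Obs$ and leaves its memory unchanged on observing $\wt{s}_r$; read in $\wb{\game}$, $\straa$ is simply its restriction to $\Obs$. With this convention every state $(\wt{s}_r,m)$ is absorbing in $\wt{\game}\restr\straa$, so, since $\wt{p}(\wt{s}_r)=2$, any recurrent class contained in $\set{\wt{s}_r}\times M$ trivially satisfies $\coBuchi(\wt{p}^{-1}(2))$; and the two chains share the initial state $(\wb{s}_0,m_0)$.

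The proof then rests on two structural observations I would establish first. The first is that the support graphs of $\wb{\game}\restr\straa$ and of $\wt{\game}\restr\straa$ restricted to $\wb{S}\times M$ coincide: for a state $s$ with $\wb{p}(s)\in\set{1,2}$ we have $\wt{\trans}(s,a)=\wb{\trans}(s,a)$, and for $\wb{p}(s)=0$ the distribution $\wt{\trans}(s,a)$ is $\wb{\trans}(s,a)$ rescaled by $\half$ together with a $\half$-weighted edge to $\wt{s}_r$, so the successors inside $\wb{S}$ are unchanged; since $\wt{s}_r$ is absorbing, no run of $\wt{\game}\restr\straa$ that leaves $\wb{S}\times M$ ever returns, and hence reachability within $\wb{S}\times M$ transfers in both directions between the two chains. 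The second observation is that every recurrent class of $\wt{\game}\restr\straa$ lies either in $\set{\wt{s}_r}\times M$ or in $(\wb{S}\setminus\wb{p}^{-1}(0))\times M$: a bottom SCC containing a state $(s,m)$ with $\wb{p}(s)=0$ would, because $\wt{\trans}((s,m),a)(\wt{s}_r)=\half$ for the actions $\straa$ plays at $m$, have an edge into $\set{\wt{s}_r}\times M$ and therefore (those states being absorbing) be contained in $\set{\wt{s}_r}\times M$, contradicting $s\in\wb{S}$.

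With these in hand I would prove both directions by contradiction. For the first direction, assuming $\straa$ is almost-sure winning in $\wb{\game}$ for $\Parity(\wb{p})$ but not in $\wt{\game}$ for $\coBuchi(\wt{p}^{-1}(2))$, there is a recurrent class $\wt{C}$ reachable from $(\wb{s}_0,m_0)$ in $\wt{\game}\restr\straa$ with $\proj_1(\wt{C})\not\subseteq\wt{p}^{-1}(2)$, i.e.\ containing some $(s,m)$ with $\wb{p}(s)=1$; by the second observation $\wt{C}\subseteq(\wb{S}\setminus\wb{p}^{-1}(0))\times M$, where $\wt{\trans}$ agrees with $\wb{\trans}$, so $\wt{C}$ is a bottom SCC of $\wb{\game}\restr\straa$, and by the first observation it is reachable from $(\wb{s}_0,m_0)$ there; since $\wt{C}$ has a priority-$1$ state and no priority-$0$ state, $\min(\wb{p}(\proj_1(\wt{C})))=1$ is odd, contradicting Lemma~\ref{lemm:Markov-basic}. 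For the converse, assuming $\straa$ is almost-sure winning in $\wt{\game}$ but not in $\wb{\game}$, I would pick a recurrent class $\wb{C}$ reachable from $(\wb{s}_0,m_0)$ in $\wb{\game}\restr\straa$ with $\min(\wb{p}(\proj_1(\wb{C})))$ odd, hence equal to $1$, so that $\wb{C}\subseteq(\wb{p}^{-1}(1)\cup\wb{p}^{-1}(2))\times M$; on these states $\wt{\trans}=\wb{\trans}$, so $\wb{C}$ is a bottom SCC of $\wt{\game}\restr\straa$, reachable from $(\wb{s}_0,m_0)$ by the first observation, and it contains a $\wt{p}$-priority-$1$ state, so $\proj_1(\wb{C})\not\subseteq\wt{p}^{-1}(2)$ and the coB\"uchi objective is violated, again contradicting Lemma~\ref{lemm:Markov-basic}. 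The hard part is really the two structural observations — in particular, that no nontrivial recurrent class of $\wt{\game}\restr\straa$ can contain a priority-$0$ state, and that a recurrent class of one chain is genuinely a bottom SCC (not merely an SCC) of the other; the remaining steps are routine bookkeeping with supports of transition distributions.
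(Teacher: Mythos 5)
Your proof is correct and follows essentially the same route as the paper's: both directions argue by contradiction via the recurrent-class characterization, using that a recurrent class of $\wt{\game}\restr\straa$ not consisting of the absorbing state can contain no $\wb{p}$-priority-$0$ state (because such states have a half-probability edge to $\wt{s}_r$), and that on priority-$1$ and priority-$2$ states the two transition functions coincide, so the offending recurrent class and its reachability transfer between $\wb{\game}\restr\straa$ and $\wt{\game}\restr\straa$. Your treatment is in fact slightly more explicit than the paper's (the convention for reading $\straa$ on the new observation, and the verification that a transferred class is genuinely a bottom SCC), but the underlying argument is the same.
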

\begin{proof}
We start with the first direction. Let $\straa = (\straa_u,\straa_n,M,m_0)$ be a
finite-memory almost-sure winning
strategy in $\wb{\game}$, we claim that
$\straa$ is also almost-sure winning in $\wt{\game}$.
Assume towards contradiction that there exists a reachable recurrent class $C$ in the Markov chain
$\wt{\game} \restr \straa$ such that the minimal priority in the class is $1$.
Then $C$ cannot contain the newly added absorbing state $\wt{s}_r$, as
$\wt{p}(\wt{s}_r) = 2$ and if a recurrent class contains the absorbing state
$\wt{s}_r$, then it contains only the state $\wt{s}_r$ as there is no edge going
from $\wt{s}_r$ to a different state in the POMDP $\wt{\game}$.
 It follows that the set $C$ is reachable in $\wb{\game}
\restr \straa$, and due to the definition of the transition functions forms a
recurrent class. Since $C$ contains a state $s$ with priority $\wt{p}(s) = 1$, we have
that $\wb{p}(s)$ is also  $1$, so there is a state with priority $1$ in $C$. It remains to rule out the possibility that $C$ contains states
$s'$ with priority $\wb{p}(s') = 0$, but that follows from the fact that whenever
there was a state with priority $0$, no matter what action was played, there was
a positive probability of reaching $\wt{s}_{r}$. So $C$ contains a state with
priority $1$ and all the other states have priority $1$ or $2$. It follows
that there exists a reachable recurrent class in $\wb{\game} \restr \straa$, where
the minimal priority is odd. This contradicts our assumption that $\straa$ is
almost-sure winning in $\wb{\game}$.

In the second direction assume that no finite-memory strategy is almost-sure
winning in $\wb{\game}$. Therefore, for every finite-memory strategy $\straa$ there exists a reachable recurrent class $C$ in the Markov
Chain $\wb{\game} \restr \straa$, such that the minimal priority in the class is
$1$, i.e., there exists a state with priority $1$ and there are no states with
priority $0$ in $C$. In the Markov Chain $\wt{\game} \restr \straa$ the transition
functions $\wt{\trans}$ allows every transition available in $\wb{\trans}$. It
follows that the set $C$ is reachable with positive probability in $\wt{\game}
\restr \straa$.
Since there is no state with priority $0$ in $C$ it follows that the transition
function $\wt{\trans}$ matches the transition function $\wb{\trans}$ for all
states in $C$ and all actions $a \in \Act$. It follows that $C$ is a recurrent
class in the POMDP $\wt{\game} \restr \straa$.
 As all the
priorities of the states in $C$ are preserved in the priority function $\wt{p}$, there
exists a reachable recurrent class with minimal priority $1$. It follows that
there is no finite-memory almost-sure winning strategy in the POMDP $\wt{\game}$.
\end{proof}

Lemma~\ref{lemm:almost4} and Lemma~\ref{lemm:almost3} 
establish item~(2) of Theorem~\ref{thrm:parity}.

\newcommand{\pri}{p}

\newcommand{\eqstate}{{\widehat S}}
\newcommand{\eqact}{{\widehat \Act}}
\newcommand{\eqtrans}{{\widehat \trans}}
\newcommand{\eqinitial}{{\widehat s_0}}
\newcommand{\eqobsact}{{\widehat{\Gamma}}}

\newcommand{\eqObs}{{\mathcal {\widehat O}}}
\newcommand{\eqobsmap}{{\widehat \gamma}}
\newcommand{\eqobj}{{\widehat \phi}}
\newcommand{\stact}{\mathcal{C}}
\newcommand{\eqmdp}{{\widehat \game}}
\newcommand{\eqbad}{{\widehat s_b}}
\newcommand{\beliefset}{{\mathcal{B}}}
\newcommand{\obsact}{{\Gamma}}
\newcommand{\obsset}{{O}}
\newcommand{\we}{{\textsf{WE}}}
\newcommand{\allow}{{\textsf{Allow}}}
\newcommand{\pre}{{\textsf{Pre}}}
\newcommand{\apre}{{\textsf{Apre}}}
\newcommand{\obscover}{{\textsf{ObsCover}}}
\newcommand{\Almost}{\mathsf{Almost}}
\newcommand{\Positive}{\mathsf{Positive}}
\newcommand{\rank}{\mathsf{Rank}}
\newcommand{\was}{W_\mathsf{AS}}
\newcommand{\wpos}{W_\mathsf{Pos}}
\newcommand{\buchimem}{M_{\Buchi}}
\newcommand{\cobuchimem}{M_{\coBuchi}}
\newcommand{\almostcobuchi}{{\mathsf{AlmostCoBuchiRed}}}
\newcommand{\wpr}{\mathit{wpr}}

\section{Computational Complexity for Parity Objectives}
In this section we  will present an exponential time algorithm to solve almost-sure 
winning in POMDPs with coB\"uchi objectives under finite-memory strategies
(and the polynomial time reduction of Section~\ref{sec:reduction} for parity objectives
to coB\"uchi objectives allows our results to carry over to parity objectives).
The results for positive B\"uchi is similar and the almost similar proof is
omitted.
The naive algorithm would be to enumerate over all finite-memory strategies 
with memory bounded by $2^{6 \cdot |S|}$, this leads to an algorithm
that runs in double-exponential time. Instead our algorithm consists of two steps:
(1)~given a POMDP $G$ we first construct a special kind of a POMDP $\wh{G}$ 
such that there is a finite-memory winning strategy in $G$ iff there is a 
randomized memoryless winning strategy in $\wh{G}$; and
(2)~then show how to solve the special kind of POMDPs in time polynomial in
the size of the POMDP.
We first introduce the special kind of POMDPs which we call 
belief-observation POMDPs which intuitively satisfy that the current 
belief is always the set of states with current observation.

\begin{definition}[Belief-observation POMDP]
A POMDP $\game=(S,\Act,\trans,\Obs,\obsmap,s_0)$ is a 
\emph{belief-observation POMDP} iff for every finite prefix 
$w=(s_0,a_0,s_1,a_1,\ldots,s_n)$ with the observation sequence 
$\rho=\obsmap(w)$, the belief $\B(\rho)$ is equal to the set of states with 
the observation $\obsmap(s_n)$, 
i.e., $\B(\rho)=\set{s \in S \mid \obsmap(s)=\obsmap(s_n)}$.
In other words, belief-observation POMDPs are the special class of POMDPs
where the current belief can be directly obtained from the current observation.
\end{definition}

\subsection{Construction of belief-observation POMDPs for finite-memory strategies}

\smallskip\noindent{\bf POMDPs to belief-observation POMDPs.}
The goal of this section is  given a POMDP $G$ with a coB\"uchi objective
$\coBuchi(p^{-1}(2))$, and a priority function with priority set $\set{1,2}$, 
to construct a belief-observation POMDP $\eqmdp$ such that if there exists 
a finite-memory almost-sure winning strategy in $\game$,
then there exists a randomized memoryless almost-sure winning strategy in 
$\eqmdp$ for another coB\"uchi objective $\coBuchi(\wh{p}^{-1}(2))$ 
and vice-versa. 
Since we are interested in coB\"uchi objectives, 
for the sequel of this section we will denote by 
$M = 2^{S} \times \{0,1\}^{|S|} \times \D^{|S|}$, i.e., 
all the possible beliefs $\B$, $\BoolRec$ and $\SetRec$ functions 
(recall that $\D$ is  $\powset(\powset(\{1,2\}))$ for coB\"uchi objectives). 
If there exists a finite-memory almost-sure winning strategy $\straa$, then 
the projected strategy $\straa' = \projp(\straa)$ is also a finite-memory 
almost-sure winning strategy (by Theorem~\ref{thrm:muller}) and will use memory 
$M' \subseteq M$.
The size of the constructed POMDP $\eqmdp$ will be exponential in the size of 
the original POMDP $\game$ and polynomial in the size of the memory set $M$ 
(and $|M|=2^{6\cdot |S|}$ is exponential in the size of the POMDP $\game$).
We define the set $\cobuchimem \subseteq M$ as the memory elements, where for 
all states $s$ in the belief component of the memory, the set $\SetRec(s)$ 
contains only a set with priority two, i.e., there is no state with priority~$1$
in the reachable recurrent classes according to $\SetRec$.  
Formally,
\[
\cobuchimem  =  \{ (Y,B,L) \in M \mid \mbox{ for all } s \in Y  \mbox{ we have }
L(s) = \set{\{2\}} \}
\]

\smallskip\noindent{\bf Construction of the new POMDP.}
Given a POMDP $\game=(S,\Act,\trans,\Obs,\obsmap,s_0)$ with  a coB\"uchi
objective $\coBuchi(p^{-1}(2))$, represented by priority function 
$p: S \rightarrow \{1,2\}$,
we construct a new 
POMDP $\eqmdp = (\eqstate,\eqact,\eqtrans, \eqObs, \eqobsmap, \eqinitial)$ with
a coB\"uchi objective $\coBuchi(\wh{p}^{-1}(2))$, for
some priority function $\wh{p}$ assigning to states in $\eqstate$ priorities
from the set $\{1,2\}$. We refer to the newly constructed POMDP $\eqmdp$ as
$\almostcobuchi(\game)$.


\begin{itemize}
\item The set of states $\eqstate =\eqstate_a \cup \eqstate_m \cup \{\eqinitial,
\eqbad\}$ ,
will consist of
\emph{action-selection} states $\eqstate_a \subseteq S  \times M$; 
\emph{memory-selection} states $\eqstate_m \subseteq S  \times 2^S \times M 
\times \Act$;
$\eqinitial$ is an additional initial state; and the state $\eqbad$ is a new absorbing state.

\item The observation set is as follows: 
$\eqObs = (M) \cup (2^{S} \times M \times \Act) 
\cup \{\eqinitial\} \cup \{\eqbad\}$.
\item The initial state of the POMDP is $\eqinitial$.
\item The observation mapping is defined naturally $\eqobsmap((s, m)) =  m$,
$\eqobsmap ((s,Y,m,a)) = (Y,m,a)$, $\eqobsmap(\eqinitial) = \{\eqinitial\}$, and
$\eqobsmap(\eqbad) = \{\eqbad\}$.
In other words, except the states $\eqinitial$ and $\eqbad$ the strategy cannot observe the first 
component of the state.

\item The actions are $\eqact = \Act \cup M$, i.e., the actions from the POMDP $\game$ or memory elements from the set $M$.

\item We define the transition function $\eqtrans$ in the following steps.
First we will introduce a notion of \emph{allowed actions}, observe that for the computation of almost-sure winning under finite-memory 
strategies the precise transition probabilities do not
matter and therefore in the following step we will specify only the edges of the POMDP graph, and all 
transition probabilities are uniform over the support set.

We call an action $a \in \Act$ \emph{allowed} in observation $(Y,B,L) \in \eqObs$ if
for all states $\wh{s} \in Y$, there exists a set $Z_\infty \subseteq \{1,2\}$
such that if $B(\wh{s}) = 1$,
$L(\wh{s})= \set{Z_\infty}$, and  $p(\wh{s}) \in Z_\infty$, then for all states $\wh{s}' \in
\Supp(\trans(\wh{s},a))$ we have $p(\wh{s}') \in Z_\infty$.
Intuitively this condition enforces that once a state that corresponds to a
pseudo-recurrent state is reached in the POMDP $\eqmdp$
in the next step only states with priority in the set $Z_\infty$ can be visited.
Similarly we call an action $(Y',B',L') \in M$ \emph{allowed} in observation $(Y',(Y,B,L),a)$ if both
of the following conditions are satisfied:
(i)~for all states $\wh{s} \in Y$, if $B(\wh{s}) = 1$, then 
for all states  $\wh{s}' \in \Supp(\trans(\wh{s},a))$ we have that $B'(\wh{s}') =1$, 
intuitively the condition says that if the $\BoolRec$ function is set to~1,
then for all successors the $\BoolRec$ function remains~1 (recall 
by Lemma~\ref{l-chain_Z} fifth point the property is ensured for projected 
strategies); and 
(ii)~if $\wh{s} \in S$ and $\wh{s}' \in \Supp(\trans(\wh{s},a))$, we have that
$L'(\wh{s}') \subseteq L(\wh{s})$. Intuitively the condition says the function $\SetRec$ must not increase with
respect to set inclusion along the successors 
(recall by Lemma~\ref{l-chain_Z} third point the property is ensured for 
projected strategies). 


\begin{enumerate}
\item $\eqinitial \stackrel{m}{\rightarrow} (s_0,m)$ for all  $m \in
\cobuchimem \cap \{(\{s_0\},B,L) \mid B \in \{0,1\}^S, L \in \D^S\}$, 
i.e., from the initial state all memory elements from $\cobuchimem$ that 
are consistent with the starting state can be chosen (in other words, it 
consists of all the ways a projected strategy of a finite-memory almost-sure winning strategy
could start);

\item $(s,(Y,B,L)) \stackrel{a}{\rightarrow} (s',Y',(Y,B,L),a)$ iff all of the
following conditions are satisfied:
\begin{itemize}
\item $s' \in \Supp(\trans(s,a))$; and
\item $Y'$ is the belief update in POMDP $\game$ from belief $Y$ under observation $o = \obsmap(s')$ and action $a$,
i.e., $Y' =  \bigcup_{\wh{s}\in Y}\Supp(\trans(\wh{s},a))\cap
\obsmap^{-1}(o)$; and 
\item  action $a$ is allowed in observation $(Y,B,L)$.
\end{itemize}
\item If an action $a$ is not allowed in observation $(Y,B,L)$,
then we add a transition $(s,(Y,B,L)) \stackrel{a}{\rightarrow}
\eqbad$, i.e., if the conditions are not satisfied
the action leads to the state $\eqbad$ that will be a loosing absorbing state in
the POMDP $\eqmdp$.

 
\item $(s',Y',(Y,B,L),a) \stackrel{(Y',B',L')}{\rightarrow} (s',(Y',B',L'))$ 
iff the action $(Y',B',L')$ is allowed in the observation $(Y',(Y,B,L),a)$.
Again if an action is not allowed, then the transition leads only to $\eqbad$.

\item The state $\eqbad$ is an absorbing state, i.e., $\eqbad
\stackrel{\wh{a}}{\rightarrow} \eqbad$ for all actions $\wh{a} \in \eqact$.
\end{enumerate}
\end{itemize}
Intuitively, $\eqmdp$ allows all possible ways that a projected
strategy of a finite-memory almost-sure winning strategy could 
possibly play in $\game$.
We define the coB\"uchi objective $\coBuchi(\wh{p}^{-1}(2))$ 
with the priority function for the POMDP $\eqmdp$ as 
$\wh{p}((s,m)) = \wh{p}((s,Y,m,a)) = p(s)$. 
The priority for the initial state $\wh{p}(\eqinitial)$ may be set to an 
arbitrary priority from $\{1,2\}$ as the initial 
state will be visited only once. The priority for the state $\eqbad$ is set to~$1$,
i.e., $\wh{p}(\eqbad) = 1$.
We will refer to the above construction as $\almostcobuchi$ construction,
i.e., $\eqmdp=\almostcobuchi(\game)$.
We first argue that $\eqmdp$ is a belief-observation POMDP.

\begin{lemma}
The POMDP $\eqmdp$ is a belief-observation POMDP.
\end{lemma}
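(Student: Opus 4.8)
The plan is to prove, by induction on the length of a finite prefix of a play in $\eqmdp$, the invariant: \emph{after a prefix ending in a state with observation $\hat o$, the belief is exactly $\eqobsmap^{-1}(\hat o)$.} Since this invariant, stated for all prefixes of plays, is precisely the defining condition of a belief-observation POMDP, the lemma follows. First I would make explicit the reachable state sets of the construction: every reachable action-selection state has the form $(s,(Y,B,L))$ with $s\in Y$, and every reachable memory-selection state has the form $(s,Y',(Y,B,L),a)$ with $s\in Y'$; hence $\eqobsmap^{-1}((Y,B,L))=\set{(s,(Y,B,L))\mid s\in Y}$ and $\eqobsmap^{-1}((Y',(Y,B,L),a))=\set{(s,Y',(Y,B,L),a)\mid s\in Y'}$, while $\eqinitial$ and $\eqbad$ each carry a unique observation. \emph{Base and trivial cases.} For the empty prefix the current state is $\eqinitial$ with its unique observation, so the belief equals $\set{\eqinitial}$. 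The only available first step, under an action $m=(\set{s_0},B,L)\in\cobuchimem$, deterministically reaches $(s_0,m)$; since $s_0$ is the only $\game$-state lying in the first component $\set{s_0}$, the belief is $\set{(s_0,m)}=\eqobsmap^{-1}(m)$. Finally, once a prefix enters $\eqbad$ it stays there forever, and $\eqbad$ has a unique observation, so the invariant holds trivially there; it thus suffices to treat prefixes that remain among the (alternating) action-selection and memory-selection states.

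\textbf{Inductive step, action-selection to memory-selection.} Suppose the prefix ends at an action-selection state with observation $(Y,B,L)$, so by the invariant the set of $\game$-states in the current belief is exactly $Y$. Under a played action $a$, the next state is either $\eqbad$ (if $a$ is not an allowed $\game$-action, or is not a $\game$-action at all — invariant trivially restored), or a memory-selection state $(s'',Y',(Y,B,L),a)$ with $s''\in\Supp(\trans(s,a))$ for some $s\in Y$ and $Y'=\bigl(\bigcup_{s\in Y}\Supp(\trans(s,a))\bigr)\cap\obsmap^{-1}(o)$ where $o=\obsmap(s'')$; the observation seen is $(Y',(Y,B,L),a)$. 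The one substantive point — the crux of the whole proof — is that a nonempty set $Y'$ arising this way determines $o$ uniquely: since $Y'\subseteq\obsmap^{-1}(o)$ and the $\obsmap$-preimages of distinct observations are disjoint, the equality $Y'=Y'_1$ with $Y'_1\subseteq\obsmap^{-1}(o_1)$ forces $o_1=o$. Hence the $\game$-states consistent with the observation $(Y',(Y,B,L),a)$ are precisely those $s''$ with $s''\in\Supp(\trans(s,a))$ for some $s\in Y$ and $\obsmap(s'')=o$, that is, exactly $Y'$; and each such $(s'',Y',(Y,B,L),a)$ is reachable with positive probability from the old belief. Therefore the new belief equals $\set{(s'',Y',(Y,B,L),a)\mid s''\in Y'}=\eqobsmap^{-1}((Y',(Y,B,L),a))$, as required.

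\textbf{Inductive step, memory-selection to action-selection, and conclusion.} Suppose the prefix ends at a memory-selection state with observation $(Y',(Y,B,L),a)$, so by the invariant the $\game$-states in the belief are exactly $Y'$. Under a played action the next state is either $\eqbad$ (if the action is not an allowed memory element whose first component equals $Y'$ — invariant trivially restored), or, for an allowed memory element $(Y',B',L')$, the unique successor $(t,(Y',B',L'))$ of each belief element $(t,Y',(Y,B,L),a)$; here the first component $Y'$ of the chosen memory element equals the belief by construction. The observation seen is $(Y',B',L')$ and the new belief is $\set{(t,(Y',B',L'))\mid t\in Y'}=\eqobsmap^{-1}((Y',B',L'))$. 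This closes the induction and proves the lemma. The only real obstacle is the determinacy-of-$o$ argument above; everything else is routine bookkeeping about the alternating structure of $\eqmdp$, the absorbing (sink) role of $\eqbad$, and pinning down the precise reachable state sets so that the observation preimages coincide with the sets appearing in the invariant.
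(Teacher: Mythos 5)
Your proof is correct and follows essentially the same route as the paper: an induction on the length of the prefix, with the easy inclusion coming from the fact that reachable states $(s,(Y,B,L))$ (resp. $(s',Y',(Y,B,L),a)$) satisfy $s\in Y$ (resp. $s'\in Y'$), and the converse inclusion obtained by using the belief-update definition of $Y'$ to exhibit, for each $s''\in Y'$, a predecessor in the old belief, noting that all states of $Y'$ share the same observation. Your explicit "determinacy of $o$" remark is exactly the paper's step "$\obsmap(s_{Y'})=\obsmap(s')$", so no new ideas are needed and no gaps arise.
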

\begin{proof}
Note that the observations are defined in a way  that the first component 
cannot be observed. 
Given a sequence $\wh{w}$ of states and actions in $\eqmdp$ with the 
observation sequence $\wh{\rho} = \eqobsmap(\wh{w})$ we  will show that 
the possible first components of the states in the belief $\B(\wh{\rho})$ 
are equal the updated belief components $Y'$ in the observation. 
Intuitively the proof holds as the $Y$ component is the belief and the belief
already represents exactly the set of states in which the POMDP can be with positive 
probability. 
We now present the formal argument.
Let us denote by $\proj_1(\B(\wh{\rho})) \subseteq S$ the projection on the 
first component of the states in the belief. 
One inclusion is trivial since for every reachable state $(s,(Y,B,L))$ we
have  $s\in Y$ (resp.
for states $(s',Y',(Y,B,L),a)$ we have that $s'\in Y'$).  
Therefore we have $\proj_1(\B(\wh{\rho})) \subseteq Y$ (resp. $Y'$).

We prove the second inclusion by induction with respect to the length of the play prefix:
\begin{itemize}
\item \textbf{Base case:} We show the base case for prefixes of length $1$ and
$2$. The first observation is always $\{\eqinitial\}$ which contains only a
single state, so there is nothing to prove. Similarly the second observation in
the game is of the form $(\{s_0\},B,L)$ for some $B \in \{0,1\}^S, L \in \D^S$, and the argument is the same.

\item \textbf{Induction step:} 
Let us a consider a prefix $\wh{w}' = \wh{w} \cdot a \cdot (s',Y',(Y,B,L),a)$ where
$a \in \Act$ and the last transition is $(s,(Y,B,L)) \stackrel{a}{\rightarrow}
(s',Y',(Y,B,L),a)$ in 
the POMDP $\eqmdp$.
By induction hypothesis we have that $\B(\eqobsmap(\wh{w})) 
= \{(s,(Y,B,L)) \mid s\in Y\}$.
The new belief is computed (by definition) as 
\[
\B(\eqobsmap(\wh{w}')) = 
\bigcup_{\wh{s} \in Y}
\Supp(\eqtrans((\wh{s},(Y,B,L)),a)) \cap \eqobsmap^{-1}((Y',(Y,B,L),a)).
\]

 Let $s_{Y'}$ be a state in $Y'$, we want to show that $(s_{Y'},Y',(Y,B,L),a)$ is in
$\B(\eqobsmap(\wh{w}'))$. Due to the definition of the belief update there
exists a state $s_{Y}$ in $Y$ such that $s_{Y} \stackrel{a}{\rightarrow} s_{Y'}$
and $(s_Y,(Y,B,L)) \in \B(\eqobsmap(\wh{w}))$. As $\obsmap(s_{Y'}) =
\obsmap(s')$, it follows that $(s_{Y'},Y',(Y,B,L),a) \in \bigcup_{\wh{s} \in \B(\eqobsmap(\wh{w}))}
\Supp(\eqtrans(\wh{s},a))$ and as  $(s_{Y'},Y',(Y,B,L),a) \in \eqobsmap^{-1}((Y',(Y,B,L),a))$,
the result follows.



The case when the prefix is extended with an memory action $m \in M$ is simpler as the first two components do not change during the transition.

\end{itemize}
The desired result follows.
\end{proof}

The proof of the following two lemmas will use some desired properties of
the projected strategy of a finite-memory strategy and the $\BoolRec$ and
$\SetRec$ functions established in Section~\ref{sec:finmem}. The properties are as follows:
\begin{enumerate}
\item \emph{(Property A for $\BoolRec$ functions).} 
For every run of the Markov chain obtained from the POMDP and a 
projected strategy $\projp(\straa)$ of a finite-memory strategy 
$\straa$, if $\BoolRec$ is set to~1, then for all successors
$\BoolRec$ remains~1 (follows from the fifth point of Lemma~\ref{l-chain_Z}).

\item \emph{(Property B for $\SetRec$ functions).} 
For every run of the Markov chain obtained from the POMDP and a 
projected strategy $\projp(\straa)$ of a finite-memory strategy 
$\straa$, $\SetRec$ functions are non-increasing along the steps
of the run (follows from the third point of Lemma~\ref{l-chain_Z}).

\item \emph{(Property C for $\BoolRec$ and $\SetRec$ functions).} 
For every run of the Markov chain obtained from the POMDP and a 
projected strategy $\projp(\straa)$ of a finite-memory strategy 
$\straa$, if $\BoolRec$ is set to~1 for a state $s$, then 
all reachable states from that point have a priority in 
$\SetRec_{\projp(\straa)}(s)$ (follows from the sixth point of Lemma~\ref{l-chain_Z}).

\end{enumerate}

\begin{lemma}
\label{lem:pomdptoeqmdp}
If there exists a finite-memory almost-sure winning strategy in the POMDP 
$\game$ for the coB\"uchi objective $\coBuchi(p^{-1}(2))$, then there exists a randomized memoryless almost-sure  winning strategy in the belief-observation POMDP 
$\eqmdp$ for the coB\"uchi objective $\coBuchi(\wh{p}^{-1}(2))$.
\end{lemma}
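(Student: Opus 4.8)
The plan is to start from a finite-memory almost-sure winning strategy $\straa$ in $\game$, pass to its projected strategy $\straa'=\projp(\straa)$ (which by Theorem~\ref{thrm:muller} is again a finite-memory almost-sure winning strategy and uses memory $M'\subseteq M$), and then \emph{read off} a randomized memoryless strategy $\wh\straa$ in $\eqmdp$ directly from $\straa'$. Concretely, since an action-selection state $(s,m')$ of $\eqmdp$ carries the observation $m'=(Y,B,L)\in M$ and $M'\subseteq M$, at such a state $\wh\straa$ plays $\straa'_n(m')$; a memory-selection state $(s',Y',(Y,B,L),a)$ has observation $(Y',(Y,B,L),a)$, which determines the common observation $o$ of the states of the belief $Y'$, and there $\wh\straa$ plays the distribution $\straa'_u((Y,B,L),o,a)$ over $M$ (used as a ``memory action''); from $\eqinitial$ it plays the point mass on $m_0'=(\set{s_0},\BoolRec_\straa(m_0),\SetRec_\straa(m_0))$; and from $\eqbad$ it plays arbitrarily. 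All of these depend only on the observation, so $\wh\straa$ is a well-defined randomized memoryless strategy.

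First I would check the initialization: the transition $\eqinitial\stackrel{m_0'}{\to}(s_0,m_0')$ exists because $m_0'\in\cobuchimem$. Indeed, since $\straa$ is almost-sure winning, Lemma~\ref{lemm:Markov-basic} implies that every recurrent class of $\game\restr\straa$ reachable from $(s_0,m_0)$ has even minimum priority; with priorities in $\set{1,2}$ this forces each such class to consist solely of priority-$2$ states, hence $\SetRec_\straa(m_0)(s_0)=\set{\set{2}}$, which is exactly the defining condition for membership of $m_0'=(\set{s_0},\BoolRec_\straa(m_0),\SetRec_\straa(m_0))$ in $\cobuchimem$.

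The heart of the argument is that $\wh\straa$ never triggers a transition into the losing state $\eqbad$, i.e.\ every action it plays is allowed in the current observation. For an action $a\in\Act$ played at a reachable action-selection state with observation $(Y,B,L)$: by definition of $\projp$ there is a memory state $m$ of $\straa$ realizing this vertex of $\PG(\straa)$ (so $\BoolRec_\straa(m)=B$, $\SetRec_\straa(m)=L$) with $a\in\Supp(\straa_n(m))$; for any $\wh s\in Y$ with $B(\wh s)=1$ the pair $(\wh s,m)$ is recurrent in $\game\restr\straa$ and lies in a recurrent class $R$, whence $L(\wh s)=\SetRec_\straa(m)(\wh s)=\set{p(\proj_1(R))}$ is a singleton $\set{Z_\infty}$, $p(\wh s)\in Z_\infty$, and each state of $\Supp(\trans(\wh s,a))$ appears as the first component of a one-step successor of $(\wh s,m)$, which must lie in $R$, so it has priority in $Z_\infty$ --- exactly the allowed predicate. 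For a memory action $(Y',B',L')$ played at a reachable memory-selection state with observation $(Y',(Y,B,L),a)$: it lies in the support of $\straa'_u((Y,B,L),o,a)$, hence $(Y,B,L)\stackrel{a}{\to}(Y',B',L')$ is an edge of $\PG(\straa)$, and Lemma~\ref{l-C-croit} and Lemma~\ref{l-Z-decrease} yield respectively the two allowed conditions $B(\wh s)=1\Rightarrow B'(\wh s')=1$ and $L'(\wh s')\subseteq L(\wh s)$ for $\wh s'\in\Supp(\trans(\wh s,a))$. I expect this matching of the three ad hoc ``allowed'' predicates of $\eqmdp$ against the structural facts about projected strategies (Lemmas~\ref{l-chain_Z},~\ref{l-C-croit},~\ref{l-Z-decrease}) to be the main obstacle, largely because of bookkeeping: one must track which memory state of $\straa$ witnesses each edge of $\PG(\straa)$, and verify that the belief component $Y'$ produced in $\eqmdp$ agrees with the belief update used in the corresponding projection-graph edge (and that the observation $o$ is recoverable from $Y'$).

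Once $\eqbad$ is shown to be (surely) unreachable under $\wh\straa$, the reachable part of the finite Markov chain $\eqmdp\restr\wh\straa$ is, after the single initial step from $\eqinitial$, isomorphic to $\game\restr\straa'$ with each transition subdivided by one intermediate (memory-selection) vertex whose $\wh p$-priority equals $p$ of the first component of its target. Consequently the recurrent classes of $\eqmdp\restr\wh\straa$ correspond bijectively to those of $\game\restr\straa'$, with the same set of priorities occurring infinitely often. Since $\straa'$ is almost-sure winning in $\game$ for $\coBuchi(p^{-1}(2))$, Lemma~\ref{lemm:Markov-basic} gives that every recurrent class of $\game\restr\straa'$ reachable from $(s_0,m_0')$ consists only of priority-$2$ states; hence every recurrent class reachable from the initial state of $\eqmdp\restr\wh\straa$ consists only of $\wh p$-priority-$2$ states, and a second application of Lemma~\ref{lemm:Markov-basic} shows that $\wh\straa$ is almost-sure winning for $\coBuchi(\wh p^{-1}(2))$ in $\eqmdp$. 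Since $\wh\straa$ is randomized memoryless, this completes the proof.
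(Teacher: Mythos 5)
Your proposal is correct and follows essentially the same route as the paper: project the given strategy via $\projp$, read off the memoryless strategy of $\eqmdp$ from $\straa'_n$ and $\straa'_u$, verify that every played action is allowed (so $\eqbad$ is never reached) using the structural properties of projected strategies (Lemmas~\ref{l-chain_Z}, \ref{l-C-croit}, \ref{l-Z-decrease}), and conclude via the correspondence of recurrent classes between $\game\restr\straa'$ and the (subdivided) chain $\eqmdp\restr\wh{\straa}$. The only deviations are cosmetic: you justify allowed actions directly from recurrence of $(\wh{s},m)$ in $\game\restr\straa$ rather than via pseudo-recurrent states of $\game\restr\straa'$ as the paper does, and you explicitly check $m_0'\in\cobuchimem$ for the initial transition, which the paper leaves implicit.
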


\begin{proof}
Assume there exists a finite-memory almost-sure winning strategy $\straa$, then
by Theorem~\ref{thrm:muller} there exists a finite-memory almost-sure winning
strategy $\straa' = \projp(\straa)$, which uses memory $M' \subseteq M$.
Let $\straa' = (\straa'_u, \straa'_n,M',(\{s_0,\},B_0,L_0))$ be the almost-sure
winning strategy in the POMDP $\game$. 
We fix the strategy $\straa'$ in the POMDP $\game$ and obtain a Markov Chain
$\game_1=\game \restr \straa'$. 
We define a randomized memoryless observation-based strategy $\wh{\straa}:\eqObs
\rightarrow \distr(\eqact)$ in the POMDP $\eqmdp$ 
as follows:

\begin{itemize}
\item 
The deterministic action in the initial observation is:
$\wh{\straa}(\{\eqinitial\}) = (\{s_0\},B_0,L_0)$.
\item In the action-selection observation $(Y,B,L)$ we define
$\wh{\straa}((Y,B,L)) = \straa'_n((Y,B,L))$.

\item In the memory-selection observation $(Y',(Y,B,L),a)$ we define
$\wh{\straa}((Y',(Y,B,L),a))$ to play uniformly actions from the set  
$\Supp((\straa'_u((Y,B,L), o', a)))$, where $o'$ is the unique observation such that
all states in $Y'$ have observation $o'$ in $\game$. 

\item In the observation of the absorbing state $\{\eqbad\}$ no matter what
actions are played the state $\eqbad$ is not left.
\end{itemize}

We fix the memoryless strategy $\wh{\straa}$ in the POMDP $\eqmdp$ and obtain a
Markov chain $\game_2=\eqmdp \restr \wh{\straa}$. For simplicity we will write
$(s,(Y,B,L)) \rightarrow (s',(Y',B',L'))$ whenever $(s,(Y,B,L)) \rightarrow
(s',Y',(Y,B,L),a)
\rightarrow (s',(Y',B',L'))$ for some $a \in \Act$ in $\game_2$. Note that
omitting the intermediate state does not affect the objective as
$\wh{p}((s',Y',(Y,B,L),a)) = 
\wh{p}((s',(Y',B',L')))$. 

The strategy $\wh{\straa}$ will in the first step select the $(\{s_0\},B_0,L_0)$
action and reach state $(s_0,(\{s_0\},B_0,L_0))$ in $\game_2$. We will show that the
two Markov chains reachable from initial state $(s_0,(\{s_0\},B_0,L_0))$ in $\game_1$ and
$(s_0,(\{s_0\},B_0,L_0))$ in $\game_2$ are isomorphic (when considering the simplified edges in $\game_2$).

\begin{itemize}
\item Let $(s,(Y,B,L)) \rightarrow (s',(Y',B',L'))$ be an edge in $\game_1$, then there 
exists (i)~an action $a \in \Supp( {\straa'}_n((Y,B,L)))$, such that 
$s' \in \Supp(\trans(s,a))$, (ii)~$Y'$ is the belief update from $Y$ under 
observation $\obsmap(s')$ and action $a$, and (iii)~$(Y',B',L') \in
\Supp({\straa'}_u((Y,B,L), \obsmap(s'), a))$. 
First we show that there is a transition $(s,(Y,B,L)) \stackrel{a}{\rightarrow}
(s',Y',(Y,B,L),a)$ in the POMDP $\eqmdp$. We verify three properties of the
transition functions $\eqtrans$: The property that 
$s' \in \Supp(\trans(s,a))$ and that $Y'$ is the belief update follows from the
facts (i) and (ii) mentioned above.
Next we show that action $a$ is allowed in observation $(Y,B,L)$, i.e.,  we need to verify that  for all states $\wh{s} \in Y$ such that $B(\wh{s}) = 1$,
and there is a subset of priorities $Z_\infty \subseteq \set{1,2}$ such that 
$L(\wh{s})= \set{Z_\infty}$, and $p(\wh{s})  \in Z_\infty$
we have that all states reachable in one step $\wh{s}' \in \Supp(\trans(\wh{s},a))$ satisfy 
$p(\wh{s}') \in Z_\infty$.
Consider an arbitrary state $\wh{s} \in Y$ such that $B(\wh{s}) = 1$,
$L(\wh{s})=\set{Z_\infty}$ and $p(\wh{s}) \in Z_\infty$. 
Note that  $(\wh{s},(Y,B,L))$ is a reachable pseudo-recurrent state in the Markov chain $G_1$ 
since $B(\wh{s}) = 1$, $L(\wh{s})=\set{Z_\infty}$ and $p(\wh{s}) \in Z_\infty$.
As the strategy $\straa'$ is almost-sure winning it
follows that all recurrent classes reachable from $(\wh{s},(Y,B,L))$ contain
 only states with priority $2$ i.e., $Z_\infty = \{2\}$. 
Moreover by property A and C it follows that only
states with priority in $Z_\infty$  (i.e, with priority $2$) 
are reachable from $(\wh{s},(Y,B,L))$ in $\game_1$. It follows that
action $a$ is allowed in state $(s,(Y,B,L))$.
By the definition of the strategy $\wh{\straa}$ this
action is played with positive probability and therefore $(s,(Y,B,L))
\rightarrow (s',Y',(Y,B,L),a)$ is an edge in $G_2$. Similarly, we show 
that there is a transition $(s',Y',(Y,B,L),a)
\stackrel{(Y',B',L')}{\rightarrow} (s',(Y',B',L'))$ in $\eqmdp$.
To show that
$(Y',B',L')$ is an allowed action in observation $(Y',(Y,B,L),a)$ 
we consider all states  $\wh{s} \in Y$, such that  $B(\wh{s}) = 1$, and all
reachable states $\wh{s}' \in
\Supp(\trans(\wh{s},a))$, and want to show that $B'(\wh{s}') =1$. As the state
$(\wh{s},(Y,B,L))$ is reachable and $\straa'$ is a projected strategy of an almost-sure winning
strategy, it follows by property B and C of
$\BoolRec$ and $\SetRec$ functions of the projected
strategy $\straa'$, i.e., point three and five in Lemma~\ref{l-chain_Z}, that all the memories in the memory
update $ \Supp({\straa'}_u((Y,B,L), \obsmap(s'), a))$ of the projected strategy satisfy that $B'(\wh{s}') = 1$. The second
property of the non-increasing $\SetRec$ function is proved similarly.
It follows that the action satisfies the requirements of item~(4) of the transition 
function $\eqtrans$. 
As before $(Y',B',L')$ is played with positive probability and hence $(s',Y',(Y,B,L),a)
\rightarrow (s',(Y',B',L'))$ is an edge in $\game_2$. It follows that 
$(s,(Y,B,L)) \rightarrow (s',(Y',B',L'))$ is an edge in the simplified graph of 
$\game_2$.

\item In the other direction  let us consider an edge $(s,(Y,B,L)) \rightarrow
(s',(Y',B',L'))$ in the simplified graph of $\game_2$, it follows that there
exists an action $a$, such that there are edges $(s,(Y,B,L)) \rightarrow
(s',Y',(Y,B,L),a) \rightarrow (s',(Y',B',L'))$ in the full graph $\game_2$.
 By the definition of the POMDP $\eqmdp$ we get that $s' \in \Supp(\trans(s,a))$
and $Y'$ is the belief update from $Y$ under observation $\obsmap(s')$ and
action $a$. As action $a$ was played with positive probability it follows by the
definition of the strategy $\wh{\straa}$ that $a \in \Supp
({\straa'}_n((Y,B,L)))$ and similarly $(Y',B',L')$ being played by 
$\wh{\sigma}$ we get that  $(Y',B',L') \in \Supp({\straa'}_u((Y,B,L), \obsmap(s'), a))$.
Hence we get that $(s,(Y,B,L)) \rightarrow (s',(Y',B',L'))$ is an edge in $G_1$.
\end{itemize}
The desired result follows.
\end{proof}

\begin{lemma}
\label{lem:eqmdptopomdp}
If there exists a randomized memoryless almost-sure winning 
strategy in the belief-observation POMDP $\eqmdp$ for the coB\"uchi objective $\coBuchi(\wh{p}^{-1}(2))$, then there exists a 
finite-memory almost-sure winning strategy in the POMDP 
$\game$ for the coB\"uchi objective $\coBuchi(p^{-1}(2))$.
\end{lemma}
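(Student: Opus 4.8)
The plan is to run the construction in the proof of Lemma~\ref{lem:pomdptoeqmdp} backwards: from a randomized memoryless almost-sure winning strategy $\wh{\straa}:\eqObs\rightarrow\distr(\eqact)$ on $\eqmdp$ we read off a finite-memory strategy $\straa=(\straa_u,\straa_n,M,m_0)$ on $\game$. For the initial memory, note that $\wh{\straa}(\set{\eqinitial})$ is a distribution over memory elements $m$ with $\eqinitial\stackrel{m}{\rightarrow}(s_0,m)$; since every such first move is taken with positive probability and $\wh{\straa}$ is almost-sure winning, the remaining strategy is almost-sure winning from each reached $(s_0,m)$ as well, so we may fix any one of them, say $m_0=(\set{s_0},B_0,L_0)$, and assume $\wh{\straa}$ is deterministic at $\eqinitial$. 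For the rest, in memory state $m=(Y,B,L)\in M$ set $\straa_n(m)=\wh{\straa}((Y,B,L))$, and for the memory update $\straa_u(m,o,a)$ first compute the belief update $Y'=\bigcup_{s\in Y}\Supp(\trans(s,a))\cap\obsmap^{-1}(o)$ (which is determined by $(Y,a,o)$ and whose states all share the observation $o$) and then play $\wh{\straa}((Y',(Y,B,L),a))$; by the construction of $\eqtrans$ this is a distribution over memory elements whose first component is $Y'$, so $\straa_u$ is a well-defined observation-based update. The key claim is that the reachable part of $\game\restr\straa$ from $(s_0,m_0)$ is isomorphic, in a color-preserving way, to the reachable part of $\eqmdp\restr\wh{\straa}$ from $\eqinitial$ once the intermediate memory-selection states are contracted (exactly as in Lemma~\ref{lem:pomdptoeqmdp}) and the state $\eqbad$ is removed.

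The first step is to argue that $\eqbad$ is not reachable with positive probability in $\eqmdp\restr\wh{\straa}$ from $\eqinitial$: $\eqbad$ is absorbing and $\wh{p}(\eqbad)=1$, so if it were reached with positive probability then $\coBuchi(\wh{p}^{-1}(2))$ would fail with positive probability, contradicting that $\wh{\straa}$ is almost-sure winning. Hence in every reachable action-selection observation $(Y,B,L)$ the strategy $\wh{\straa}$ plays only actions $a\in\Act$ that are \emph{allowed} in $(Y,B,L)$, and in every reachable memory-selection observation $(Y',(Y,B,L),a)$ it plays only allowed memory actions (all of which have first component $Y'$); otherwise there would be a positive-probability edge into $\eqbad$.

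Next I would establish the isomorphism by induction on the length of play prefixes, mirroring the two bullet points of the proof of Lemma~\ref{lem:pomdptoeqmdp} in the reverse direction. Given an edge $(s,(Y,B,L))\rightarrow(s',(Y',B',L'))$ in the contracted graph of $\eqmdp\restr\wh{\straa}$, it factors through a memory-selection state $(s',Y',(Y,B,L),a)$ for some $a\in\Act$, so $a\in\Supp(\wh{\straa}((Y,B,L)))$, $s'\in\Supp(\trans(s,a))$, $Y'$ is the belief update of $Y$ under $\obsmap(s')$ and $a$, and $(Y',B',L')\in\Supp(\wh{\straa}((Y',(Y,B,L),a)))$; unfolding the definitions of $\straa_n$ and $\straa_u$ yields the edge $(s,(Y,B,L))\rightarrow(s',(Y',B',L'))$ in $\game\restr\straa$. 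The converse is identical, reading the same equivalences in the other direction. Because $\wh{p}((s,m))=\wh{p}((s,Y,m,a))=p(s)$, the priority of any state of $\eqmdp\restr\wh{\straa}$ equals the priority under $p$ of its $S$-component, and the $S$-components occurring in a contracted recurrent class coincide with the $S$-components occurring in the corresponding recurrent class of the full chain; hence the minimum priority of a reachable recurrent class is preserved by the contraction and the isomorphism.

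Finally, I would conclude with Lemma~\ref{lemm:Markov-basic}: since $\wh{\straa}$ is almost-sure winning for $\coBuchi(\wh{p}^{-1}(2))$, every recurrent class of $\eqmdp\restr\wh{\straa}$ reachable from $\eqinitial$ is contained in $\wh{p}^{-1}(2)$; by the isomorphism, every recurrent class of $\game\restr\straa$ reachable from $(s_0,m_0)$ has all its $S$-components in $p^{-1}(2)$, so by Lemma~\ref{lemm:Markov-basic} the finite-memory strategy $\straa$ is almost-sure winning for $\coBuchi(p^{-1}(2))$ in $\game$. The main obstacle I expect is the bookkeeping in the isomorphism: checking that the two-phase (action-selection then memory-selection) structure of $\eqmdp$ contracts cleanly onto the single-step transitions of $\game\restr\straa$, that $\straa_u$ is genuinely observation-based, and that no positive-probability transition of $\wh{\straa}$ escapes into $\eqbad$ on the reachable part, so that the "allowed"-action side conditions built into $\eqtrans$ can be used freely.
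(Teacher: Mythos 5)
Your proposal is correct and follows essentially the same route as the paper's proof: read off $\straa_n$ and $\straa_u$ from $\wh{\straa}$ via the belief update, note that $\eqbad$ (absorbing, priority $1$) cannot be reached under an almost-sure winning strategy so only allowed actions are played on the reachable part, and establish the edge-by-edge isomorphism between $\game\restr\straa$ and the contracted graph of $\eqmdp\restr\wh{\straa}$ to transfer the recurrent-class/priority structure. The only (immaterial) deviations are that you fix one element of $\Supp(\wh{\straa}(\set{\eqinitial}))$ by a conditioning argument where the paper adds an auxiliary initial memory state, and you use the distribution $\wh{\straa}((Y',(Y,B,L),a))$ itself for the memory update where the paper takes the uniform distribution over its support; both choices yield the same supports and hence the same induced chain structure.
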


\begin{proof}
Given a memoryless strategy $\wh{\straa}$ in $\eqmdp$, we define the
finite-memory strategy $\straa = (\straa_u, \straa_n,M, m_0)$ in $\game$ as follows:
\begin{itemize}
\item $\straa_n((Y,B,L)) = \wh{\straa}((Y,B,L))$;
\item $\straa_u((Y,B,L),o,a)$ update uniformly to elements from the set
$\Supp(\wh{\straa}((Y',(Y,B,L),a)))$, where $Y'$ is the belief update from $Y$ under observation 
$o$ and action $a$.
\item $m_0 = \wh{\straa}(\{\eqinitial\})$. Note that this can be in general a
probability distribution. Since we require the initial memory to be
deterministic, we can model this property by adding an additional initial state and
memory state from which the required randomized memory update is performed.
\end{itemize}

We fix the finite-memory strategy $\straa$ in the POMDP $\game$ to obtain a Markov Chain $\game_1 = \game \restr \straa$ and similarly fixing the memoryless
strategy $\wh{\straa}$ in the POMDP $\eqmdp$ to obtain a Markov Chain $\game_2 = \eqmdp \restr \wh{\straa}$.

As in the previous lemma we will consider a simplified graph $\game_2$ and write
$(s,(Y,B,L)) \rightarrow (s',(Y',B',L'))$ whenever $(s,(Y,B,L)) \rightarrow
(s',Y',(Y,B,L),a)
\rightarrow (s',(Y',B',L'))$ for some $a \in \Act$ in $\game_2$. We show that
the two graphs reachable from states $(s_0, (\{s_0\},B_0,L_0))$ in $\game_1$ and
$(s_0, (\{s_0\},B_0,L_0))$ in $\game_2$ are isomorphic. Note that the absorbing
state $\eqbad$ is not reachable in the Markov chain $\game_2$, otherwise 
there would be reachable recurrent class in $\game_2$  containing
the state $\eqbad$ and no other state (follows from the fact that $\eqbad$ is an
absorbing state). As the priority $\wh{p}(\eqbad)$ is  $1$
it follows that there would be a reachable recurrent class with minimal priority
$1$ and contradicting the assumption that $\wh{\straa}$ is an almost-sure winning
strategy.
\begin{itemize}
\item Let $(s,(Y,B,L)) \rightarrow (s',(Y',B',L'))$ be an edge in $\game_1$,
then there exists an action $a \in \Act$ such that (i)~$a \in
\Supp(\wh{\straa}((Y,B,L))$,
(ii)~$s' \in \Supp(\trans(s,a))$,
 and (iii)~$(Y',B',L') \in \Supp(\wh{\straa}(Y',(Y,B,L),a))$.
Therefore there are edges $(s,(Y,B,L)) \rightarrow (s',Y',(Y,B,L),a)$ and
$(s',Y',(Y,B,L),a) \rightarrow (s',(Y',B',L'))$ in $\game_2$.


\item In the other direction let there be an edge  $(s,(Y,B,L)) \rightarrow
(s',(Y',B',L'))$ in the simplified graph of $G_2$, then there exists an action $a \in \Act$ such that 
$(s,(Y,B,L)) {\rightarrow} (s',Y',(Y,B,L),a)  {\rightarrow} (s',(Y',B',L'))$ are transitions in the full graph of $G_2$.
By the definition of the POMDP $\eqmdp$ and the strategy $\straa$ we get that
(i)~$s' \in \Supp(\trans(s,a))$, (ii)~$a \in \Supp (\straa_n((Y,B,L)))$ and
(iii)~$(Y',B',L') \in \Supp(\straa_u((Y,B,L),\obsmap(s'),a))$.
Therefore $(s,(Y,B,L)) \rightarrow (s',(Y',B',L'))$ is an edge in the graph of
$G_1$.
\end{itemize}
The desired result follows.
\end{proof}

\subsection{Polytime algorithm for belief-observation POMDPs}
In this section we will present a polynomial time algorithm for 
the computation of the almost-sure winning set 
for the belief-observation POMDP $\eqmdp$ for coB\"uchi objectives under 
randomized memoryless strategies.
The algorithm will use solutions of almost-sure winning sets for 
safety and reachability objectives.

\smallskip\noindent{\bf POMDPs with available actions.}
For simplicity in presentation we will consider POMDPs with 
an available action function that maps to every observation
the set of available actions for the observation, i.e., we consider POMDPs as
tuples $(S, \Act, \trans, \Obs, \obsact, \obsmap, s_0)$, where the function 
$\obsact: \Obs \rightarrow 2^{\Act}\setminus \emptyset$ maps every 
observation to a non-empty set of available actions.
Note that this is for simplicity in presentation, as
if an action is not available for an observation, then a new 
state can be added that is loosing and for every unavailable 
action transitions can be added to the newly added loosing state
(thus making all actions available).

\smallskip\noindent{\bf Almost-sure winning observations.} 
For an objective $\varphi$, we denote by 
$\Almost(\varphi)=\set{o \in \Obs \mid \text{there exists a randomized memoryless 
strategy $\sigma$ such that for all } s \in \obsmap^{-1}(o). \ \Prb_s^\sigma(\varphi)=1}$
the set of observations such that there is a randomized memoryless strategy
to ensure winning with probability~1 from all states of the 
observation. 
Our goal is to compute $\Almost(\coBuchi(\wh{p}^{-1}(2)))$.
Also note that since we consider belief-observation POMDPs we can only 
consider beliefs that correspond to all states of an observation.
First we introduce one necessary notation:
\begin{itemize}
\item \emph{(Allow).} Given a set $\obsset \subseteq \Obs$ of observations  and an observation $o \in \obsset$
we define by $\allow(o,\obsset)$ the set of actions that when played in $o$ ensures that the next observation 
is in $\obsset$, i.e., more formally:
$$\allow(o,\obsset) = \{a \in \obsact(o) \mid \bigcup_{s \in \obsmap^{-1}(o)} \obsmap(\Supp(\trans(s,a))) \subseteq \obsset\}.$$
\end{itemize}
We will consider the POMDP $\eqmdp=\almostcobuchi(\game)$ obtained by the construction
for reduction to belief-observation POMDPs.

\begin{definition}
Given the POMDP $\eqmdp$, for a set $F \subseteq \eqstate$ of states, if 
$\{s_0\} \in \Almost(\Safe(F))$, 
we define a POMDP $\eqmdp_{\Safe(F)} = (\eqstate, \Act, \trans, \Almost(\Safe(F)), \eqobsact, \eqobsmap, s_0)$ as follows:
\begin{itemize}
\item The set of states is $\eqstate = \obsmap^{-1}(\Almost(\Safe(F)))$;
\item the available actions are restricted as follows: $\eqobsact(o) = \allow(o,\Almost(\Safe(F)))$; and
\item the observation mapping function $\eqobsmap(s) = \obsmap(s)$.
\end{itemize}
\end{definition}

\begin{lemma}
The POMDP $\eqmdp_{\Safe(F)}$ is a belief-observation POMDP.
\end{lemma}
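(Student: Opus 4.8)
The plan is to exploit that $\eqmdp_{\Safe(F)}$ differs from $\eqmdp$ only by (i) deleting every state whose observation lies outside $\Almost(\Safe(F))$ and (ii) shrinking the available actions in each observation $o$ to $\allow(o,\Almost(\Safe(F)))$, while keeping the transition and observation functions unchanged. Since $\eqmdp$ is already known to be a belief-observation POMDP, it suffices to show that this restriction does not alter the belief after any prefix, up to reading ``the set of all states of the current observation'' inside the smaller state space $\eqstate=\obsmap^{-1}(\Almost(\Safe(F)))$.

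First I would establish the invariant that every play (and every play-prefix) of $\eqmdp_{\Safe(F)}$ stays inside $\eqstate$: the initial state lies in $\eqstate$ by the hypothesis $\{s_0\}\in\Almost(\Safe(F))$, and by the definition of $\allow$, playing an available action from a state with observation in $\Almost(\Safe(F))$ forces every successor to have an observation in $\Almost(\Safe(F))$ again. Consequently every play-prefix of $\eqmdp_{\Safe(F)}$ is also a play-prefix of $\eqmdp$ using a subset of $\eqmdp$'s actions. Now fix such a prefix $w=(s_0,a_0,\dots,s_n)$ with observation sequence $\rho=\eqobsmap(w)$ and last observation $o_n=\obsmap(s_n)$; since $s_n\in\eqstate$ we get $o_n\in\Almost(\Safe(F))$, hence $\obsmap^{-1}(o_n)\subseteq\eqstate$, so the states of $\eqmdp_{\Safe(F)}$ with observation $o_n$ are exactly all states of $\eqmdp$ with observation $o_n$. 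The next step is to show that the belief after $\rho$ is the same in $\eqmdp_{\Safe(F)}$ as in $\eqmdp$: the inclusion ``$\subseteq$'' is immediate from the previous sentence, and for ``$\supseteq$'' one checks that an arbitrary $\eqmdp$-play-prefix with observation sequence $\rho$ is automatically legal in $\eqmdp_{\Safe(F)}$ — its actions are the actions $a_0,\dots,a_{n-1}$ read off $\rho$, which are available in $\eqmdp_{\Safe(F)}$ because $\rho$ comes from a legal $\eqmdp_{\Safe(F)}$-prefix, and, by the same $\allow$ argument as above, all of its states inductively lie in $\eqstate$. Combining this with the belief-observation property of $\eqmdp$, the belief after $\rho$ equals $\{s\mid\obsmap(s)=o_n\}=\{s\in\eqstate\mid\obsmap(s)=o_n\}$, which is exactly the belief-observation property for $\eqmdp_{\Safe(F)}$.

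The only slightly delicate point, and the main thing to get right, is the ``$\supseteq$'' direction of this belief equality: one must verify that an arbitrary $\eqmdp$-prefix compatible with $\rho$ never visits a deleted state and never needs a disallowed action. This is a short induction resting entirely on the $\allow$ invariant and on the observation that the actions along such a prefix are dictated by $\rho$; beyond that the argument is routine, and no genuine obstacle is expected.
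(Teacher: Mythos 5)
Your proof is correct and follows the same route the paper takes: the paper simply asserts that the property "follows directly from the fact that $\eqmdp$ is a belief-observation POMDP," and your argument is a careful spelling-out of exactly that inheritance (the $\allow$-invariant keeps plays inside $\obsmap^{-1}(\Almost(\Safe(F)))$, so beliefs in the restricted and the original POMDP coincide and equal $\obsmap^{-1}(o_n)$). No gap; you have just made explicit what the paper leaves implicit.
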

\begin{proof}
Follows directly from the fact that $\eqmdp$ is a belief-observation POMDP.
\end{proof}

\smallskip\noindent{\bf Almost-sure winning for coB\"uchi objectives.} 
In this part we will show how to decide whether an observation $o \in
\eqObs$ is an
almost-sure winning observation for the coB\"uchi objective
$\coBuchi(\wh{p}^{-1}(2))$ in the 
belief-observation POMDP $\eqmdp$ 
(i.e., whether $o \in \Almost(\coBuchi(\wh{p}^{-1}(2)))$). 
We will show that the computation can be achieved by computing almost-sure winning regions
for safety and reachability objectives.
The steps of the computation are as follows:
\begin{enumerate}
\item \emph{(Step~1).} 
Let $F=\eqstate \setminus \eqbad$ and we first compute $\eqmdp_{\Safe(F)}$. 
This step requires the computation of the almost-sure winning for safety objectives.

\item \emph{(Step~2).} 
Let $\eqstate_{\wpr} \subseteq \eqstate$ denote the subset of states
that intuitively correspond to \emph{winning pseudo-recurrent (wpr)}  states, 
i.e., formally it is defined as follows:
$$\eqstate_{\wpr} = \{ (s,(Y,B,L)) \mid B(s) = 1,  L(s) = \set{\set{2}} \mbox { and }
\wh{p}(s)=2 \}.$$
In the restricted POMDP $\eqmdp_{\Safe(F)}$ we compute the set of observations 
$W_2 =\Almost(\Reach(\eqstate_{\wpr}))$. 
We will show that $W_2=\Almost(\coBuchi(\wh{p}^{-1}(2)))$.
This step requires the computation of the almost-sure winning for reachability 
objectives.
\end{enumerate}
In the following two lemmas we establish the two required inclusions to show 
$W_2 = \Almost(\coBuchi(\wh{p}^{-1}(2)))$.

\begin{lemma}
$W_2 \subseteq \Almost(\coBuchi(\wh{p}^{-1}(2)))$.
\end{lemma}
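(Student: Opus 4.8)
The plan is to show that from any state in $\obsmap^{-1}(W_2)$ there is a randomized memoryless strategy in $\eqmdp$ ensuring the coB\"uchi objective with probability~1. First I would combine the two computational steps: the strategy $\wh{\straa}$ should (i)~play inside $\eqmdp_{\Safe(F)}$ so as to stay forever in the safe set $F = \eqstate \setminus \set{\eqbad}$, i.e., never play a disallowed action and never reach $\eqbad$; and simultaneously (ii)~play the memoryless reachability strategy witnessing $W_2 = \Almost(\Reach(\eqstate_{\wpr}))$ in the restricted POMDP $\eqmdp_{\Safe(F)}$. Since both are memoryless and observation-based, and the reachability strategy operates within the safety-restricted action sets $\allow(o,\Almost(\Safe(F)))$, the two are compatible: I would take $\wh{\straa}$ to be the memoryless strategy achieving almost-sure reachability to $\eqstate_{\wpr}$ inside $\eqmdp_{\Safe(F)}$. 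By Property~1(a) for Markov chains, every recurrent class of $\eqmdp \restr \wh{\straa}$ reachable from a state with observation in $W_2$ must intersect $\eqstate_{\wpr}$ (otherwise reachability to $\eqstate_{\wpr}$ would fail with positive probability from some state of that recurrent class, contradicting that the whole class lies in $\Almost(\Reach(\eqstate_{\wpr}))$).

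The heart of the argument is then to show that any state $(s,(Y,B,L)) \in \eqstate_{\wpr}$ is \emph{absorbing for priorities} under $\wh{\straa}$, i.e., all states reachable from it in $\eqmdp \restr \wh{\straa}$ have $\wh{p}$-priority~$2$. This is where I expect the main obstacle, and it should follow from the definition of \emph{allowed actions}. Indeed, if $(s,(Y,B,L)) \in \eqstate_{\wpr}$ then $B(s)=1$, $L(s)=\set{\set 2}$, and $\wh p(s)=p(s)=2 \in \set 2 = Z_\infty$, so $s$ satisfies the pseudo-recurrence hypothesis inside the belief $(Y,B,L)$. By the definition of an allowed action $a$ in observation $(Y,B,L)$, every successor $s' \in \Supp(\trans(s,a))$ satisfies $p(s') \in Z_\infty = \set 2$, hence $\wh p(s')=2$. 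Moreover I must verify that $B$ stays~$1$ along such successors (so the new state is again in $\eqstate_{\wpr}$): this is exactly the first condition on allowed memory actions $(Y',B',L')$ in observation $(Y',(Y,B,L),a)$, and $L$ stays $\set{\set 2}$ by the non-increasing condition on $\SetRec$ together with $L(s) = \set{\set 2}$ already being minimal (its only member is a singleton priority set, and $\SetRec$ components cannot become empty by the second point of Lemma~\ref{l-chain_Z} — though here I need the analogous structural property built into the ``allowed'' constraints, which I would check against the construction). Hence by induction every state reachable from $\eqstate_{\wpr}$ within $\obsmap^{-1}(\Almost(\Safe(F)))$ lies in $\eqstate_{\wpr}$ and has priority~$2$.

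Putting these together: a run from observation $W_2$ stays in $F$ forever (safety part), reaches $\eqstate_{\wpr}$ with probability~$1$ (reachability part), and from that point on visits only priority-$2$ states (absorption part); in particular $\eqbad$, which has priority~$1$, is visited only finitely often (in fact never after entering $\eqstate_{\wpr}$, and the safety part rules it out entirely). Therefore $\Inf(\rho) \subseteq \wh p^{-1}(2)$ with probability~$1$, i.e., $\wh{\straa}$ is almost-sure winning for $\coBuchi(\wh p^{-1}(2))$ from every state with observation in $W_2$, which gives $W_2 \subseteq \Almost(\coBuchi(\wh p^{-1}(2)))$.
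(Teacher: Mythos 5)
Your proposal is correct and follows essentially the same route as the paper: fix the memoryless strategy witnessing $\Almost(\Reach(\eqstate_{\wpr}))$ inside the safety-restricted POMDP $\eqmdp_{\Safe(F)}$ (so $\eqbad$ is never reached and $\eqstate_{\wpr}$ is reached with probability~1), and then use the definition of allowed action-selection and memory-update actions to show that from any state of $\eqstate_{\wpr}$ only priority-$2$ states (again in $\eqstate_{\wpr}$) are reachable, which forces every reachable recurrent class to have minimal priority~$2$. The one point you flag for checking — that $L'(s')$ remains the nonempty set $\set{\set{2}}$ after a memory update — is asserted in exactly the same way in the paper's proof, so your argument matches it.
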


\begin{proof}
Let $o \in W_2$ be an observation in $W_2$, and we show how to construct a
randomized memoryless almost-sure winning strategy ensuring that $o \in
\Almost(\coBuchi(\wh{p}^{-1}(2)))$. 
Let $\straa$ be the strategy produced by the computation of
$\Almost(\Reach(\eqstate_{\wpr}))$. We will show that the same strategy ensures
also $\Almost(\coBuchi(\wh{p}^{-1}(2)))$. As in every observation $o$ the
strategy $\straa$ plays only a subset of actions that are in
$\allow(o,\Almost(\Safe(F))$ (to ensure safety in $F$),
where $F= \eqstate \setminus \eqbad$, the absorbing state $\eqbad$ is not reachable. 
Also with probability~$1$ the set $\eqstate_{\wpr}$ is reached.
We show that for all states $(s,(Y,B,L)) \in \eqstate_{\wpr}$ that 
all the states reachable from $(s,(Y,B,L))$ have priority $2$ according to $\wh{p}$. 
Therefore ensuring that all recurrent classes reachable from $\eqstate_{\wpr}$ 
have minimal priority $2$.
Due to the construction of the POMDP $\eqmdp$, the only actions allowed in a
state $(s,(Y,B,L))$ 
satisfy that for all states $\wh{s} \in Y$ if $B(\wh{s}) = 1$, $L(\wh{s}) =
\set{Z_\infty}$ and $\wh{p}(s) \in Z_\infty$ for some $Z_\infty \subseteq
\set{1,2}$, then for all states $\wh{s}' \in
\Supp(\trans(s,a))$ we have that $p(\wh{s}') \in Z_\infty$.
 As all states in $(s,(Y,B,L)) \in
\eqstate_{\wpr}$ have $L(s) = \set{\set{2}}$, it follows that any state 
reachable in the next step has priority $2$.
Let $(s',Y',(Y,B,L),a)$ be an arbitrary state reachable from $(s,(Y,B,L))$ in one step. 
By the previous argument we have that the priority $\wh{p}((s',Y',(Y,B,L),a)) = 2$.  
Similarly the only allowed memory-update actions $(Y',B',L')$ from state $(s',Y',(Y,B,L),a)$ 
satisfy that whenever $\wh{s} \in Y$ and $B(\wh{s}) = 1$, then for all $\wh{s}' \in
\Supp(\trans(\wh{s},a))$, we have that $B'(\wh{s}') =1$ and similarly we have
that $L'(s')$ is a non-empty subset of $L(s)$, i.e., $L'(s') = \set{\set{2}}$. Therefore the next
reachable state $(s',(Y',B',L'))$ is again in $\eqstate_\wpr$.
In other words, from states $(s,(Y,B,L))$ in $\eqstate_{\wpr}$
in all future steps only states with priority~2 are visited, i.e., $\Safe(\wh{p}^{-1}(2))$
is ensured which ensures the coB\"uchi objective.
As the states in $\eqstate_{\wpr}$ are reached with probability $1$ and from them
all recurrent classes reachable have only states that have priority $2$,
the desired result follows.
\end{proof}

\begin{lemma}
$\Almost(\coBuchi(\wh{p}^{-1}(2))) \subseteq W_2$.
\end{lemma}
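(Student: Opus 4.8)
The plan is to show that every observation $o$ which is almost-sure winning for $\coBuchi(\wh{p}^{-1}(2))$ in $\eqmdp$ belongs to $W_2=\Almost(\Reach(\eqstate_{\wpr}))$, the latter computed inside $\eqmdp_{\Safe(F)}$ with $F=\eqstate\setminus\set{\eqbad}$. Fix such an $o$ together with a randomized memoryless almost-sure winning strategy $\wh{\straa}$ from all states of $\eqobsmap^{-1}(o)$. The first, easy step is safety: since $\eqbad$ is absorbing and $\wh{p}(\eqbad)=1$, reaching $\eqbad$ with positive probability would create a reachable recurrent class of minimal priority~$1$, contradicting that $\wh{\straa}$ is almost-sure winning; hence $\wh{\straa}$ never visits $\eqbad$ from $\eqobsmap^{-1}(o)$. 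Consequently every observation occurring along a $\wh{\straa}$-play from $o$ stays in $\Almost(\Safe(F))$ and every action played is in the corresponding $\allow(\cdot,\Almost(\Safe(F)))$ set (otherwise $\eqbad$ would become reachable), so $o\in\Almost(\Safe(F))$ and $\wh{\straa}$ restricts to a legal strategy of $\eqmdp_{\Safe(F)}$ that is still almost-sure winning for $\coBuchi(\wh{p}^{-1}(2))$.

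The core is to produce a strategy on $\eqmdp_{\Safe(F)}$ that, from $o$, reaches $\eqstate_{\wpr}$ with probability~$1$; note that $\wh{\straa}$ itself need not do so, since the $\BoolRec$- and $\SetRec$-labels it writes into memory are arbitrary. I would route through $\game$: as in the construction of Lemma~\ref{lem:eqmdptopomdp}, $\wh{\straa}$ induces a finite-memory strategy $\straa$ on $\game$ (memory set $M$, initial memory $o$) whose Markov chain started from $(s,o)$, for $s$ ranging over the belief component $Y$ of $o$, is isomorphic to $\eqmdp_{\Safe(F)}\restr\wh{\straa}$ started at the corresponding state. Since the objective in $\eqmdp$ is determined by $\wh{p}(s)=p(s)$, this makes $\straa$ a finite-memory almost-sure winning strategy for $\coBuchi(p^{-1}(2))$ in $\game$ from every state of $Y$ (for $o=\set{\eqinitial}$ this is exactly Lemma~\ref{lem:eqmdptopomdp}; for a general $o$ one applies the same reasoning to the copy of $\game$ whose fresh initial state feeds uniformly into $Y$, which is legitimate because $Y$ is a belief of $\game$ and hence lies inside a single observation class).

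Now apply Theorem~\ref{thrm:muller}: the projected strategy $\straa'=\projp(\straa)$ is again almost-sure winning from those states, and its memory states are the triples $(U,\BoolRec_\straa(m),\SetRec_\straa(m))$, so the $\BoolRec$/$\SetRec$ components now genuinely record the recurrence structure of $\game\restr\straa$. Because $\straa'$ is almost-sure winning for a coB\"uchi condition, Lemma~\ref{lemm:Markov-basic} forces every reachable recurrent class of $\game\restr\straa'$ to consist only of priority-$2$ states, i.e.\ the only color set that can occur is $Z_\infty=\set{\set{2}}$; together with Lemma~\ref{l-reach-pr-prob-1} this means that from every reachable state $\game\restr\straa'$ reaches with probability~$1$ a pseudo-recurrent state, namely a state $(s,(U,B,L))$ with $B(s)=1$, $L(s)=\set{\set{2}}$ and $p(s)=2$ --- which is precisely the defining condition of $\eqstate_{\wpr}$. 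Finally, mirroring Lemma~\ref{lem:pomdptoeqmdp}, $\straa'$ induces a randomized memoryless strategy $\wh{\straa}'$ on $\eqmdp$ whose Markov chain (from the images of the relevant start states) is isomorphic to $\game\restr\straa'$; $\wh{\straa}'$ plays only allowed actions, hence avoids $\eqbad$ and restricts to $\eqmdp_{\Safe(F)}$, and it reaches $\eqstate_{\wpr}$ with probability~$1$ from $o$. Therefore $o\in\Almost(\Reach(\eqstate_{\wpr}))=W_2$, and combined with the previous lemma this yields $W_2=\Almost(\coBuchi(\wh{p}^{-1}(2)))$.

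The step I expect to be the main obstacle is the identification, in the coB\"uchi case, of the pseudo-recurrent states of $\game\restr\projp(\straa)$ with the set $\eqstate_{\wpr}$, and the careful bookkeeping needed to invoke the projected-strategy construction for an arbitrary observation $o$ rather than the initial one: that construction is phrased relative to the fixed initial state of $\game$, so one must apply it to the auxiliary POMDP whose initial state distributes uniformly over the belief $Y$ and verify that its belief-observation version agrees with $\eqmdp$ on the states that matter, so that the final memoryless strategy really witnesses membership of $o$ (and not just of some relabelled variant) in $W_2$.
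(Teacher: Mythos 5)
Your proposal is correct and follows essentially the same route as the paper's proof: transfer the memoryless almost-sure winning strategy of $\eqmdp$ to a finite-memory strategy of $\game$ (Lemma~\ref{lem:eqmdptopomdp}), pass to the projected strategy, use Lemma~\ref{l-reach-pr-prob-1} together with Lemma~\ref{l-rec_pr_pr} to show every reachable pseudo-recurrent state lies in $\eqstate_{\wpr}$, and map back via the construction of Lemma~\ref{lem:pomdptoeqmdp} to witness almost-sure reachability of $\eqstate_{\wpr}$ inside $\eqmdp_{\Safe(F)}$. The only differences are cosmetic: you argue directly rather than by contradiction, and you spell out the bookkeeping for an arbitrary observation $o$ (via an auxiliary initial state feeding the belief $Y$), a point the paper's proof passes over silently.
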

\begin{proof}
Assume towards contradiction that there is an observation $o$ in $\eqObs
\setminus W_2$ such that $o \in \Almost(\coBuchi(\wh{p}^{-1}(2)))$. The
observation $o$ belongs to $\Almost(\Safe(F))$ as there is no winning strategy
from observations outside $\Almost(\Safe(F))$, where $F= \eqstate \setminus \eqbad$.
Consider a  randomized memoryless strategy $\wh{\straa}$ satisfying the coB\"uchi objective
from  all observations in $\Almost(\coBuchi(\wh{p}^{-1}(2)))$.
By Lemma~\ref{lem:eqmdptopomdp} there exists a finite-memory almost-sure winning strategy $\straa$ in the POMDP
$\game$. Let us consider the almost-sure winning projected strategy $\straa' = \projp(\straa)$ in
the POMDP $\game$. Recall that by Lemma~\ref{l-reach-pr-prob-1} the set of
pseudo-recurrent states is reached with probability $1$ in the Markov chain
$\game \restr \straa'$. 
By Lemma~\ref{l-rec_pr_pr}, for a pseudo-recurrent state $(s,(Y,B,L))$
there exists reachable recurrent class with the priority set as $L(s)$, and 
since $\straa'$ is an almost-sure winning strategy every recurrent class
must have only priority~2, and hence for every reachable pseudo-recurrent state 
$(s,(Y,B,L))$ we must have $L(s) =\set{\set{2}}$. And by the definition of
pseudo-recurrent states we also have that $B(s) = 1$ and $p(s) \in \set{2}$. As
$\wh{p}(s,(Y,B,L)) = p(s)$ we have that $(s,(Y,B,L)) \in \eqstate_{\wpr}$.
This implies that every pseudo-recurrent state reachable is a state in 
$\eqstate_{\wpr}$.
We want to show that in the construction described in
Lemma~\ref{lem:pomdptoeqmdp}, the memoryless almost-sure winning strategy
$\wh{\straa}'$ constructed from the projected strategy $\straa'$ will ensure 
reaching the set $\wh{S}_{\wpr}$ in the Markov chain 
$\eqmdp \restr \wh{\straa}'$ with probability $1$. In the proof of 
Lemma~\ref{lem:pomdptoeqmdp} we have already established that
reachability is preserved, i.e., if $(s',(Y',B',L'))$ is reachable from
$(s,(Y,B,L))$ in $\game \restr \straa'$ then $(s',(Y',B',L'))$ is reachable from
$(s,(Y,B,L))$ in $\eqmdp \restr \wh{\straa}'$. As by Lemma~\ref{l-reach-pr-prob-1} from every 
state a pseudo-recurrent state is reached with positive probability, and 
(as argued above every reachable pseudo-recurrent state is in $\eqstate_{\wpr}$) 
we have that from every state in $\eqmdp \restr \wh{\straa}'$ a state in $\wh{S}_{\wpr}$ is reachable. 
As this is true for every state we have that the set of states $\wh{S}_{\wpr}$ is reached
with probability $1$ in $\eqmdp \restr \wh{\straa}'$ (Property~1~(a)). Therefore we have that the
observation $o$ belongs to $\Almost(\Reach(\eqstate_{\wpr}))$.
But this contradicts that $o$ does not belong to $W_2$ and the desired result follows.
\end{proof}

To complete the computation for almost-sure winning for coB\"uchi objectives
we now present polynomial time solutions for almost-sure safety and 
almost-sure B\"uchi objectives (that implies the solution for almost-sure
reachability) in belief-observation POMDPs for 
randomized memoryless strategies.
The algorithm is presented in~\cite{CC13} and we present them below just for 
sake of completeness.
We start with a few notations below:

\begin{itemize}

\item \emph{(Pre).} The predecessor function given a set of observations $\obsset$ selects the observations $o \in \obsset$ such that $\allow(o,\obsset)$ is non-empty , i.e.,
$$\pre(\obsset) = \{ o \in \obsset \mid \allow(o,\obsset) \not = \emptyset \}.$$

\item \emph{(Apre).} Given a set $Y \subseteq \Obs$ of observations  and a set
$X \subseteq S$ of states such that $X \subseteq \obsmap^{-1}(Y)$, the set $\apre(Y,X)$ denotes the states from $\obsmap^{-1}(Y)$ such that there exists an action that ensures that the next observation is in $Y$ and the set $X$ is reached with positive probability, i.e.,:
$$ \apre(Y,X) = \{ s \in \obsmap^{-1}(Y) \mid \exists a \in \allow(\obsmap(s),Y) \text{ such that } \Supp(\trans(s,a)) \cap X \not =  \emptyset \}.$$
\item \emph{(ObsCover).} For a set $U \subseteq S$ of states we define the $\obscover(U) \subseteq \Obs$ to be the set of observations $o$ such that all states
with observation $o$ is in $U$, 
i.e., $\obscover(U) = \{ o \in \Obs \mid \obsmap^{-1}(o) \subseteq U\}$.
\end{itemize}
Using the above notations we present the solution of almost-sure 
winning for safety and B\"uchi objectives.

\smallskip\noindent{\bf Almost-sure winning for safety objectives.} 
Given a safety objective $\Safe(F)$, for a set $F \subseteq S$ of states, 
let $\obsset_F=\obscover(F)$ denote the set of observations $o$ such that 
$\gamma^{-1}(o) \subseteq F$, i.e., for all states $s \in \obsmap^{-1}(o)$ 
belong to $F$.
We denote by $\nu X$ the greatest fixpoint and by $\mu X$ the least fixpoint. 
Let 
\[
Y^* = \nu Y.(\obsset_F \cap \pre(Y)) =\nu Y. (\obscover(F) \cap \pre(Y))
\]
be the greatest fixpoint of the function $f(Y)= \obsset_F \cap \pre(Y)$. 
Then the set $Y^*$ is obtained by the following computation:
\begin{enumerate}
\item $Y_0 \leftarrow \obsset_F$; and
\item repeat $Y_{i+1} \leftarrow \pre(Y_{i})$ until a fixpoint is reached.
\end{enumerate}
We show that $Y^*=\Almost(\Safe(F))$.

\begin{lemma}
\label{lem:nonemptyallow}
For every observation $o \in Y^*$ we have $\allow(o,Y^*)\neq \emptyset$ 
(i.e., $\allow(o,Y^*)$ is non-empty).
\end{lemma}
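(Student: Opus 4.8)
The plan is to derive the statement purely from the fact that $Y^*$ is a fixpoint of the predecessor operator $\pre$, so no probabilistic or strategic reasoning is needed. First I would record the elementary observation that $\pre(\obsset)\subseteq\obsset$ for every $\obsset\subseteq\Obs$, which is immediate from the definition $\pre(\obsset)=\set{o\in\obsset\mid \allow(o,\obsset)\neq\emptyset}$. Consequently the iteration $Y_0=\obscover(F)$, $Y_{i+1}\leftarrow\pre(Y_i)$ produces a non-increasing chain $Y_0\supseteq Y_1\supseteq\cdots$, which stabilizes because $\Obs$ is finite; let $Y^*$ denote its limit. Since $Y_i\subseteq Y_0=\obscover(F)$ for all $i$, we have $\obscover(F)\cap\pre(Y_i)=\pre(Y_i)=Y_{i+1}$, so this iteration really is the greatest-fixpoint computation for $f(Y)=\obscover(F)\cap\pre(Y)$, in agreement with the definition of $Y^*$.

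Next I would upgrade ``$Y^*$ is a fixpoint of $f$'' to ``$Y^*$ is a fixpoint of $\pre$''. From $Y^*=\obscover(F)\cap\pre(Y^*)$ we get both $Y^*\subseteq\obscover(F)$ and $\pre(Y^*)\subseteq Y^*$; hence $\pre(Y^*)\subseteq Y^*\subseteq\obscover(F)$, which forces $\obscover(F)\cap\pre(Y^*)=\pre(Y^*)$, and therefore $Y^*=\pre(Y^*)$.

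Finally I would unfold the definition of $\pre$ at the argument $Y^*$: membership $o\in\pre(Y^*)$ is by definition equivalent to ``$o\in Y^*$ and $\allow(o,Y^*)\neq\emptyset$''. Since $Y^*=\pre(Y^*)$, every $o\in Y^*$ lies in $\pre(Y^*)$ and thus satisfies $\allow(o,Y^*)\neq\emptyset$, which is exactly the claim.

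There is essentially no difficult step here; the only point that deserves a moment's care is the bookkeeping showing that the displayed iteration indeed converges to the greatest fixpoint $Y^*$ and, in particular, that $Y^*$ is a genuine fixpoint of $\pre$ and not merely of $f$ — it is the containment $Y^*\subseteq\obscover(F)$ that makes the two coincide.
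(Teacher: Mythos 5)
Your proposal is correct and follows essentially the same route as the paper: both rest on the observation that $Y^*$ is a fixpoint of the iteration, so $\pre(Y^*)=Y^*$, and membership in $\pre(Y^*)$ means $\allow(o,Y^*)\neq\emptyset$ by definition. The only difference is presentational — you argue directly and spell out the bookkeeping that a fixpoint of $f(Y)=\obscover(F)\cap\pre(Y)$ with $Y^*\subseteq\obscover(F)$ is a fixpoint of $\pre$, whereas the paper phrases the same content as a one-line contradiction.
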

\begin{proof}
Assume towards contradiction that there exists an observation $o \in Y^*$ 
such that $\allow(o,Y^*)$ is empty. 
Then $o \not\in \pre(Y^*)$ and hence the observation 
must be removed in the next iteration of the algorithm. 
This implies $\pre(Y^*) \not = Y^*$, we reach a contradiction that 
$Y^*$ is a fixpoint.
\end{proof}

\begin{lemma}
The set $Y^*$ is the set of almost-sure winning observations for the
safety objective $\Safe(F)$, i.e., $Y^* = \Almost(\Safe(F))$, 
and can be computed in linear time.
\end{lemma}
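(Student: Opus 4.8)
The plan is to establish the two inclusions of $Y^* = \Almost(\Safe(F))$ separately, and then observe that the fixpoint iteration runs in linear time.

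For the inclusion $Y^* \subseteq \Almost(\Safe(F))$, I would fix the memoryless observation-based strategy $\sigma$ that, in every observation $o \in Y^*$, plays uniformly over $\allow(o,Y^*)$ --- which is non-empty by Lemma~\ref{lem:nonemptyallow} and contained in the available actions of $o$ --- and plays arbitrarily (say uniformly over $\obsact$) elsewhere, and show that it ensures $\Safe(F)$ \emph{surely} from every state $s$ with $\obsmap(s) \in Y^*$. This is a short induction on the length of a play under $\sigma$: since $Y^* \subseteq \obscover(F)$, the current state lies in $F$; and since the action played at the current observation $o$ belongs to $\allow(o,Y^*)$, by definition every successor has its observation in $Y^*$, so the invariant ``current observation in $Y^*$'' is preserved and all visited states stay in $F$. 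Hence $\Prb_s^{\sigma}(\Safe(F)) = 1$ for every such $s$, i.e., every observation of $Y^*$ is almost-sure winning.

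For the converse inclusion $\Almost(\Safe(F)) \subseteq Y^*$, let $\sigma$ be a memoryless strategy that is almost-sure winning for $\Safe(F)$ from every state of some observation $o$. Since $\game \restr \sigma$ is a finite Markov chain, ``safe with probability~$1$ from $s$'' is equivalent to ``no state outside $F$ is reachable from $s$ in the graph of $\game \restr \sigma$''; hence the set $R$ of states reachable from $\obsmap^{-1}(o)$ in $\game \restr \sigma$ satisfies $R \subseteq F$ and is closed under the transitions of $\game \restr \sigma$. The crux of the argument --- the step I expect to be the main obstacle --- is to show that $R$ is a union of full observation classes, that is, $R = \obsmap^{-1}(\obsmap(R))$, and this is exactly where the belief-observation hypothesis is needed. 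If $u \in R$ with $\obsmap(u) = o'$, there is a play prefix $w$ from some state of $\obsmap^{-1}(o)$ to $u$ all of whose transitions lie in the support of $\sigma$; its observation sequence $\rho = \obsmap(w)$ begins with $o$ and ends with $o'$, so by the belief-observation property $\B(\rho) = \obsmap^{-1}(o')$. Thus every $u' \in \obsmap^{-1}(o')$ is the last state of some play prefix $w'$ with $\obsmap(w') = \rho$; such a $w'$ necessarily starts in $\obsmap^{-1}(o)$ and uses only the actions occurring in $\rho$ --- the same actions as $w$, hence all in the support of $\sigma$ --- so $w'$ is a path of $\game \restr \sigma$ from $\obsmap^{-1}(o)$ to $u'$, giving $u' \in R$.

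With $R$ observation-closed, set $O^* = \obsmap(R)$. Then $O^* \subseteq \obscover(F)$, since $\obsmap^{-1}(O^*) = R \subseteq F$; and $O^* \subseteq \pre(O^*)$, since for any $o' \in O^*$ and any action $a$ played by $\sigma$ at $o'$, closure of $R$ yields $\Supp(\trans(s'',a)) \subseteq R$ for every $s'' \in \obsmap^{-1}(o')$, whence $\obsmap(\Supp(\trans(s'',a))) \subseteq O^*$ and $a \in \allow(o',O^*)$. Therefore $O^*$ is a post-fixpoint of the monotone operator $f(Y) = \obscover(F) \cap \pre(Y)$, hence $O^* \subseteq \nu Y.\, f(Y) = Y^*$; as $o \in O^*$ we conclude $o \in Y^*$. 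Finally, the complexity bound follows because the iteration $Y_0 = \obscover(F)$, $Y_{i+1} = \pre(Y_i)$, removes observations monotonically (at most $|\Obs|$ of them), and $\obscover(F)$ together with the incremental recomputation of $\pre$ can be carried out by a single sweep over the transition relation with the standard attractor-style bookkeeping (charging each inspected edge once), so the whole computation runs in time linear in the size of the POMDP.
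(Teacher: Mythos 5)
Your first inclusion is the same argument as the paper's (non-emptiness of $\allow(o,Y^*)$ from Lemma~\ref{lem:nonemptyallow}, plus invariance of ``current observation in $Y^*$'', giving sure safety), but your converse inclusion takes a genuinely different route. The paper proves $\Almost(\Safe(F)) \subseteq Y^*$ by induction on the fixpoint iterations: every observation removed at stage $i+1$ has $\allow(o,Y_i)=\emptyset$, so any strategy leaks with positive probability into $\Obs\setminus Y_i$, which is losing by the induction hypothesis. You instead take a witnessing memoryless strategy $\straa$ for an almost-sure winning observation $o$, form the set $R$ of states reachable from $\obsmap^{-1}(o)$ in $\game\restr\straa$, show $R$ is a union of observation classes, and conclude that $\obsmap(R)$ is a post-fixpoint of $f(Y)=\obscover(F)\cap\pre(Y)$, hence inside the greatest fixpoint $Y^*$. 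This coinductive argument is arguably cleaner: it isolates exactly where the belief-observation hypothesis enters (observation-closure of $R$), whereas the paper's induction uses it implicitly at every stage. One caveat on your crux step: the paper defines $\B(\rho)$ and the belief-observation property for prefixes of plays anchored at the initial state (its proof that $\eqmdp$ is belief-observation starts the induction at the observation $\{\eqinitial\}$), while your $\rho$ starts at the arbitrary observation $o$, so the property does not literally apply as you invoke it. The repair is short when $o$ is realizable from the initial state: pick a prefix $w_0$ from $s_0$ whose observation sequence $\rho_0$ ends in $o$; by the belief-observation property $\B(\rho_0)=\obsmap^{-1}(o)$, so $w_0$ can be chosen to end exactly at the state where your path $w$ begins; applying the property to $\rho_0$ concatenated with the action-observation suffix of $\rho$ and then discarding the $w_0$-part of the witnessing prefixes yields, for every $u'\in\obsmap^{-1}(o')$, a path from $\obsmap^{-1}(o)$ to $u'$ with the same observation sequence as $w$, hence (by memorylessness of $\straa$) a path of $\game\restr\straa$, and $u'\in R$ as you claim. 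For observations not realizable from the initial state the hypothesis gives no information, but the paper's own induction is loose in exactly the same way (its inductive step applies an observation-level hypothesis upon reaching a single state of a removed observation), so this is a shared convention issue rather than a defect specific to your proof. The post-fixpoint bookkeeping ($\obsmap^{-1}(\obsmap(R))=R\subseteq F$, $\Supp(\straa(o'))\subseteq\allow(o',\obsmap(R))$, Knaster--Tarski) and the linear-time claim are fine.
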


\begin{proof}
We prove the two desired inclusions:
(1)~$Y^* \subseteq \Almost(\Safe(F))$; and 
(2)~$\Almost(\Safe(F)) \subseteq Y^*$.
\begin{enumerate}
\item \emph{(First inclusion).} By the definition of $Y_0$ we have that 
$\obsmap^{-1}(Y_0) \subseteq F$. 
As $Y_{i+1} \subseteq Y_{i}$ we have that $\obsmap^{-1}(Y^*) \subseteq F$. 
By Lemma~\ref{lem:nonemptyallow}, for all observations $o\in Y^*$ we have 
$\allow(o,Y^*)$ is non-empty. 
A pure memoryless that plays some action from $\allow(o,Y^*)$ in $o$, for 
$o\in Y^*$, ensures that the next observation is in $Y^*$.
Thus the strategy ensures that only states from 
$\obsmap^{-1}(Y^*) \subseteq F$ are visited, 
and therefore is an almost-sure winning strategy for the safety objective.

\item \emph{(Second inclusion).} We prove that there is no almost-sure winning
strategy from $\Obs \setminus Y^*$ by induction:
\begin{itemize}
\item \textbf{(Base case).} There is no almost-sure winning strategy from observations $\Obs \setminus Y_0$. Note that $Y_0 = \obsset_F$.
In every observation $o \in \Obs \setminus Y_0$ there exists a state $s \in \obsmap^{-1}(o)$ such that $s \not \in F$. As $\game$ is a belief-observation POMDP there is a positive probability of being in state $s$, and therefore not being in $F$. 
\item \textbf{(Inductive step).} We show that there is no almost-sure winning strategy from observations in $\Obs \setminus Y_{i+1}$. Let $Y_{i+1} \not = Y_{i}$ and $o \in Y_{i} \setminus Y_{i+1}$ (or equivalently $(\Obs \setminus Y_{i+1}) \setminus (\Obs \setminus Y_{i})$). As the observation $o$ is removed from $Y_i$ it follows that $\allow(o, Y_i) = \emptyset$. It follows that no matter what action is played, there is a positive probability of being in a state $s \in \obsmap^{-1}(o)$ such that playing the action would leave the set $\obsmap^{-1}(Y_i)$ with positive probability, and thus reaching the observations $\Obs \setminus Y_i$ from which there is no almost-sure winning strategy by induction hypothesis.
\end{itemize}
\end{enumerate}
This shows that $Y^*=\Almost(\Safe(F))$, and the linear time computation follows
from the straight forward computation of greatest fixpoints.
The desired result follows.
\end{proof}
We now present one simple lemma that was implicitly used in the restriction of
the POMDP $\eqmdp$ to almost-sure safety that a randomized memoryless strategies
must only play action in the $\allow$ set.

\begin{lemma}
\label{lem:safe}
Let $\straa$ be a randomized memoryless almost-sure winning strategy in a belief-observation POMDP $\game$ for the safety objective $\Safe(F)$. Then $\Supp(\straa(o)) \subseteq \allow(o,\Almost(\Safe(F)))$.
\end{lemma}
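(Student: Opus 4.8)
The plan is to argue by contradiction, using the belief-observation property of $\game$ together with the very definition of $\Almost(\Safe(F))$. Suppose towards a contradiction that there is an observation $o$ and an action $a\in\Supp(\straa(o))\setminus\allow(o,\Almost(\Safe(F)))$; we may assume $o$ is reachable under $\straa$ from $s_0$ (for observations never visited under $\straa$ the value of $\straa(o)$ is immaterial and can be redefined to any action in $\allow(o,\Almost(\Safe(F)))$). Unfolding the definition of $\allow$, the condition $a\notin\allow(o,\Almost(\Safe(F)))$ yields a state $s\in\obsmap^{-1}(o)$ and a successor $s'\in\Supp(\trans(s,a))$ with $o':=\obsmap(s')\notin\Almost(\Safe(F))$.

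First I would show that $o'$ is reached with positive probability under $\straa$. Let $\rho$ be the observation sequence of some play prefix generated with positive probability under $\straa$ and ending in $o$. Since $\game$ is a belief-observation POMDP, $\B(\rho)=\obsmap^{-1}(o)\ni s$, so with positive probability the current state is $s$; as $a\in\Supp(\straa(o))$, action $a$ is played with positive probability, and then $s'$ is reached with positive probability, so the extended sequence $\rho\cdot a\cdot o'$ occurs with positive probability under $\straa$. Again by the belief-observation property, the belief after $\rho\cdot a\cdot o'$ is exactly $\obsmap^{-1}(o')$.

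Next I would derive a violation of $\Safe(F)$. Since $o'\notin\Almost(\Safe(F))$ and $\straa$ is a memoryless observation-based strategy, by the definition of $\Almost(\Safe(F))$ the strategy $\straa$ is not almost-sure winning for $\Safe(F)$ from all states of $\obsmap^{-1}(o')$; hence there is a state $s''\in\obsmap^{-1}(o')$ with $\Prb^{\straa}_{s''}(\Safe(F))<1$. Because the belief after $\rho\cdot a\cdot o'$ equals $\obsmap^{-1}(o')$, the state $s''$ is a possible current state with positive probability, and from there the continuation of $\straa$ (which, being memoryless, is $\straa$ itself) violates $\Safe(F)$ with positive probability. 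Composing these positive-probability events along the finite prefix $\rho\cdot a\cdot o'$ gives $\Prb^{\straa}_{s_0}(\Safe(F))<1$, contradicting that $\straa$ is almost-sure winning. Therefore $\Supp(\straa(o))\subseteq\allow(o,\Almost(\Safe(F)))$.

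The main obstacle is the step that turns the syntactic fact $o'\notin\Almost(\Safe(F))$ into a genuine positive-probability failure of the concrete strategy $\straa$: this is precisely where the belief-observation hypothesis is indispensable, since it guarantees that whenever $o'$ is observed every state of $\obsmap^{-1}(o')$ is present with positive probability, so $\straa$'s inability to distinguish these states (it plays the same at all of them) cannot be avoided and the unavoidable unsafe behaviour materializes with positive probability. The remaining bookkeeping — extracting the bad state $s''$ and chaining the positive-probability transitions — is routine.
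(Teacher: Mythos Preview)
Your proof is correct and follows essentially the same approach as the paper's: argue by contradiction, use the definition of $\allow$ to extract a state $s\in\obsmap^{-1}(o)$ from which action $a$ leads to an observation $o'\notin\Almost(\Safe(F))$, invoke the belief-observation property to ensure positive probability of being in that state, and then appeal to the definition of $\Almost(\Safe(F))$ to conclude that $\straa$ cannot be almost-sure winning. Your write-up is more detailed (in particular you explicitly handle the reachability of $o$ and spell out the role of memorylessness in carrying the failure from $s''$ back to $s_0$), but the underlying argument is identical.
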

\begin{proof}
Assume that the strategy $\straa$ plays  an action $a \in \Act \setminus 
\allow(o,\Almost(\Safe(F)))$ after an observation $o$. Then there is a 
positive probability of being in a state $s \in \obsmap^{-1}(o)$ such that 
playing the action $a$ in that state would leave the observations 
$\Almost(\Safe(F))$ with positive probability. 
As there is no randomized almost-sure winning strategy in 
$S \setminus \Almost(\Safe(F))$ (by definition), 
this contradicts the fact that $\straa$ is almost-sure winning.
\end{proof}

\smallskip\noindent{\bf Almost-sure winning for B\"uchi objectives.} 
Consider a set $T\subseteq S$ of target states, and the B\"uchi objective
$\Buchi(T)$. 
We will show that: 
$$\Almost(\Buchi(T)) = 
\nu Z. \obscover(\mu X. ((T \cap \obsmap^{-1}(Z) \cap \obsmap^{-1}(\pre(Z))) \cup \apre(Z,X))).$$

Let $Z^* = \nu Z. \obscover(\mu X. ((T \cap \obsmap^{-1}(Z) \cap
\obsmap^{-1}(\pre(Z))) 
\cup \apre(Z,X)))$. 
In the following two lemmas we show the two desired inclusions, i.e., 
$\Almost(\Buchi(T)) \subseteq Z^*$ and then we show that 
$Z^* \subseteq \Almost(\Buchi(T))$.

\begin{lemma}
\label{lem:bsuby}
$\Almost(\Buchi(T)) \subseteq Z^*$.
\end{lemma}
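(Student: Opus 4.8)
The plan is to exhibit $W := \Almost(\Buchi(T))$, regarded as a set of observations, as a \emph{post-fixpoint} of the monotone operator $f$ whose greatest fixpoint is $Z^*$, and then invoke the Knaster--Tarski theorem: the greatest fixpoint of a monotone operator on a complete lattice is the union of all its post-fixpoints, so $W \subseteq f(W)$ already implies $W \subseteq Z^*$. Here $f(Z) = \obscover(\mu X.\, g_Z(X))$ with $g_Z(X) = (T \cap \obsmap^{-1}(Z) \cap \obsmap^{-1}(\pre(Z))) \cup \apre(Z,X)$, and monotonicity of $f$ in $Z$ follows because $\pre$, $\apre(\cdot,X)$, $\obsmap^{-1}(\cdot)$, and $\obscover$ are all monotone, hence $g_Z$ grows with $Z$, hence so does $\mu X.\,g_Z(X)$, hence so does $f(Z)$.

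To prove $W \subseteq f(W)$, I would fix $o \in W$ and a randomized memoryless strategy $\straa$ witnessing $o \in \Almost(\Buchi(T))$, i.e. $\Prb_s^{\straa}(\Buchi(T)) = 1$ for all $s \in \obsmap^{-1}(o)$. The first key step is a ``value transfer'' fact: whenever a state $s'$ is reached with positive probability along some $\straa$-play started in $\obsmap^{-1}(o)$, then $\straa$ is almost-sure winning from \emph{every} state of $\obsmap^{-1}(\obsmap(s'))$. This uses that $\Buchi(T)$ is a tail objective, that $\straa$ is memoryless (so by the Markov property the value from $s'$ conditioned on the reaching cone equals $\Prb^{\straa}_{s'}(\Buchi(T))$, which must be $1$), and crucially that $\game$ is a belief-observation POMDP (so the belief at observation $\obsmap(s')$ is exactly $\obsmap^{-1}(\obsmap(s'))$, placing every such state as the true last state of some positive-probability consistent prefix). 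Consequently every state visited along $\straa$-plays from $\obsmap^{-1}(o)$ has observation in $W$, and every action $a$ that $\straa$ plays in an observation of $W$ satisfies $a \in \allow(\cdot, W)$ because all $a$-successors have observation in $W$. Specializing, $\allow(o,W) \neq \emptyset$ for all $o \in W$, so $\pre(W) = W$ and the target set of the inner fixpoint simplifies to $X_0 := T \cap \obsmap^{-1}(W)$.

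The second step is to show $\obsmap^{-1}(W) \subseteq \mu X.\,(X_0 \cup \apre(W,X))$. Fix $s \in \obsmap^{-1}(W)$; playing $\straa$ from $s$, almost surely (hence with positive probability) a state of $T$ is visited, so there is a positive-probability play prefix $s = s_0 \to s_1 \to \dots \to s_k$ with $s_k \in T$. By the previous paragraph each $s_i$ has observation in $W$ and the action used at $s_i$ lies in $\allow(\obsmap(s_i), W)$; since $s_k \in X_0$, a backward induction on $i = k, k-1, \dots, 0$ using $s_i \in \apre(W, \{s_{i+1}\})$ puts $s_0 = s$ into $\mu X.\,(X_0 \cup \apre(W,X))$. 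Therefore for every $o \in W$ we get $\obsmap^{-1}(o) \subseteq \mu X.\,((T \cap \obsmap^{-1}(W) \cap \obsmap^{-1}(\pre(W))) \cup \apre(W,X))$, which is precisely $o \in f(W)$. Hence $W \subseteq f(W)$, and Knaster--Tarski gives $\Almost(\Buchi(T)) = W \subseteq Z^*$.

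The main obstacle I anticipate is making the ``value transfer'' fact fully rigorous: one must carefully condition on the cone of a finite play prefix, use memorylessness to reduce to the value at the last state, and then use the belief-observation property to spread the almost-sure guarantee over all states of the reached observation. Everything else --- monotonicity of $f$, the identity $\pre(W) = W$, and the backward induction extracting membership in the least fixpoint from a single positive-probability path to $T$ --- is routine lattice-theoretic and reachability bookkeeping.
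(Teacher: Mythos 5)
Your proof is correct and follows essentially the same route as the paper: exhibit $\Almost(\Buchi(T))$ as a (post-)fixpoint of the outer operator and conclude by maximality of the greatest fixpoint $Z^*$. The only real difference is presentational: the paper verifies the inclusion $W \subseteq f(W)$ by contradiction (a winning state outside the inner fixpoint would either leave $W$ with positive probability or be trapped in a target-free safe set, contradicting almost-sure B\"uchi), whereas you verify it directly by extracting a positive-probability finite path to $T$ and doing backward induction through the stages of the inner least fixpoint, with the same belief-observation, memorylessness, and tail-objective ``value transfer'' facts doing the real work in both arguments.
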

\begin{proof}
Let $W^* = \Almost(\Buchi(T))$.
We first show that $W^*$ is a fixpoint of the function 
\[
f(Z)= \obscover (\mu X. ((T \cap \obsmap^{-1}(Z) \cap \obsmap^{-1}(\pre(Z))) \cup
\apre(Z,X))),
\] 
i.e., 
we will show that $W^*= \obscover(\mu X. ((T \cap \obsmap^{-1}(W^*) \cap \obsmap^{-1}(\pre(W^*))) \cup \apre(W^*,X)))$ . 
As $Z^*$ is the greatest fixpoint it will follow that $W^* \subseteq Z^*$.

Let 
\[
X^* = (\mu X.((T \cap \obsmap^{-1}(W^*) \cap
\obsmap^{-1}(\pre(W^*))) \cup
\apre(W^*,X))),
\]
and $\wh{X}^*=\obscover(X^*)$.
Note that by definition we have $X^* \subseteq \obsmap^{-1}(W^*)$ as the inner 
fixpoint computation only computes states that belong to $\obsmap^{-1}(W^*)$.
Assume towards contradiction that $W^*$ is not a fixpoint, i.e., 
$\wh{X}^*$ is a strict subset of $W^*$. 
For all states $s\in \obsmap^{-1}(W^*) \setminus X^*$, 
for all actions $a \in \allow(\obsmap(s),W^*)$ we have 
$\Supp(\trans(s,a)) \subseteq (\obsmap^{-1}(W^*)\setminus X^*)$. 
Consider any randomized memoryless almost-sure winning strategy $\straa^*$ 
from $W^*$ and we consider two cases:
\begin{enumerate}
\item Suppose there is a state $s \in \obsmap^{-1}(W^*)\setminus X^*$ 
such that an action that does not belong to $\allow(\obsmap(s),W^*)$ 
is played with positive probability by $\straa^*$.
Then with positive probability the observations from $W^*$ are left 
(because from some state with same observation as $s$ an observation in the
complement of $W^*$ is reached with positive probability).
Since from the complement of $W^*$ there is no randomized memoryless almost-sure winning 
strategy (by definition), it contradicts that $\straa^*$ is an almost-sure winning 
strategy from $W^*$.
\item Otherwise for all states $s \in \obsmap^{-1}(W^*)\setminus X^*$ 
the strategy $\straa^*$ plays only actions in $\allow(\obsmap(s),W^*)$, 
and then the probability to reach $X^*$ is zero, i.e., $\Safe(\obsmap^{-1}(W^*)\setminus X^*)$ 
is ensured. 
Since all target states in $\obsmap^{-1}(W^*)$ belong to $X^*$ 
(they get included in iteration~0 of the fixpoint computation) 
it follows that $(\obsmap^{-1}(W^*)\setminus X^*) \cap T =\emptyset$, and 
hence $\Safe(\obsmap^{-1}(W^*) \setminus X^*)
\cap \Buchi(T)=\emptyset$,
and we again reach a contradiction that $\straa^*$ is an almost-sure
winning strategy. 
\end{enumerate}
It follows that $W^*$ is a fixpoint, and thus we get that $W^* \subseteq Z^*$.
\end{proof}



\begin{lemma}
\label{lem:ysubb}
$Z^* \subseteq \Almost(\Buchi(T))$.
\end{lemma}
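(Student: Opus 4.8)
The plan is to exhibit an explicit randomized memoryless strategy witnessing $Z^{*}\subseteq\Almost(\Buchi(T))$. Write $X^{*}=\mu X.\bigl((T\cap\obsmap^{-1}(Z^{*})\cap\obsmap^{-1}(\pre(Z^{*})))\cup\apre(Z^{*},X)\bigr)$ for the inner least fixpoint evaluated at the greatest fixpoint $Z^{*}$, so that $Z^{*}=\obscover(X^{*})$. The first step is to observe that in fact $X^{*}=\obsmap^{-1}(Z^{*})$: every set produced in the $\mu X$-iteration is contained in $\obsmap^{-1}(Z^{*})$ (both $T\cap\obsmap^{-1}(Z^{*})\cap\obsmap^{-1}(\pre(Z^{*}))$ and each $\apre(Z^{*},\cdot)$ are, by definition, subsets of $\obsmap^{-1}(Z^{*})$), so $X^{*}\subseteq\obsmap^{-1}(Z^{*})$; conversely $\obscover(X^{*})=Z^{*}$ means $\obsmap^{-1}(o)\subseteq X^{*}$ for every $o\in Z^{*}$, i.e. $\obsmap^{-1}(Z^{*})\subseteq X^{*}$. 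Hence the iterates $X_{0}\subseteq X_{1}\subseteq\cdots$ of the $\mu X$-computation assign to every state $s\in\obsmap^{-1}(Z^{*})$ a finite rank $\rank(s)\in\{0,1,\dots,k\}$ with $k\le|S|$, where $\rank(s)=0$ means $s\in T\cap\obsmap^{-1}(\pre(Z^{*}))$ and $\rank(s)=i+1$ means $s\in\apre(Z^{*},X_{i})\setminus X_{i}$.

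Next I would define the witnessing strategy $\straa$: for $o\in Z^{*}$ let $\straa(o)$ be the uniform distribution over $\allow(o,Z^{*})$, and for $o\notin Z^{*}$ let $\straa(o)$ be arbitrary. To make this meaningful, and to get the safety part, one needs $\allow(o,Z^{*})\neq\emptyset$ for every $o\in Z^{*}$ — the analogue of Lemma~\ref{lem:nonemptyallow}. This follows from $X^{*}=\obsmap^{-1}(Z^{*})$: a state of rank $0$ has its observation in $\pre(Z^{*})$, so $\allow(o,Z^{*})\neq\emptyset$; a state $s$ of rank $i+1$ lies in $\apre(Z^{*},X_{i})$, which by definition exhibits some $a\in\allow(\obsmap(s),Z^{*})$. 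Since every state of $\obsmap^{-1}(Z^{*})$ has a rank, every $o\in Z^{*}$ satisfies $\allow(o,Z^{*})\neq\emptyset$. Moreover any $a\in\allow(o,Z^{*})$ played from any state of observation $o\in Z^{*}$ has all successors in $\obsmap^{-1}(Z^{*})$, so an induction on the play length shows that starting from any state of $\obsmap^{-1}(Z^{*})$ the play under $\straa$ stays in $\obsmap^{-1}(Z^{*})$ with probability $1$.

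The heart of the argument is the rank-progress estimate. For $s$ with $\rank(s)=i+1$ there is $a\in\allow(\obsmap(s),Z^{*})$ with $\Supp(\trans(s,a))\cap X_{i}\neq\emptyset$; since $\straa$ plays $a$ with probability at least $1/|\Act|$ and every transition in its support has probability at least $\eta_{0}:=\min\{\trans(s',b)(s'')\mid \trans(s',b)(s'')>0\}$, from $s$ a state of rank at most $i$ is reached in one step with probability at least $\eta:=\eta_{0}/|\Act|>0$. Iterating, from any state of $\obsmap^{-1}(Z^{*})$ a rank-$0$ state — which lies in $T\cap\obsmap^{-1}(Z^{*})$ — is reached within $k$ steps with probability at least $\beta:=\eta^{k}>0$; by the previous paragraph the play remains in $\obsmap^{-1}(Z^{*})$, and rank-$0$ states are themselves in $\obsmap^{-1}(Z^{*})$, so the same bound applies afresh starting from the state reached $k$ steps later. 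By the Markov property the probability of not visiting $T$ during $n$ consecutive blocks of $k$ steps is at most $(1-\beta)^{n}\to0$; hence $T$ is visited infinitely often with probability $1$ from every state of $\obsmap^{-1}(Z^{*})$, i.e. $\Prb^{\straa}_{s}(\Buchi(T))=1$, which gives $Z^{*}\subseteq\Almost(\Buchi(T))$.

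The delicate points I anticipate are: (i) justifying $X^{*}=\obsmap^{-1}(Z^{*})$ cleanly, since this is exactly what makes a purely observation-based strategy that merely randomizes uniformly over $\allow(o,Z^{*})$ sufficient even though the rank-decreasing action varies from state to state within an observation — the randomization over the whole $\allow$ set is what repairs the partial observability; and (ii) ensuring the one-step rank-decrease bound $\eta$ is uniform (independent of the current state and of the history), so that the blockwise repeated-trial argument genuinely yields probability $1$. I would also double-check the boundary case that the rank-$0$ set is contained in both $T$ and $\obsmap^{-1}(Z^{*})$, so that "reach rank $0$" indeed means "visit $T$ while staying within the safe region."
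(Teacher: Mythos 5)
Your proof is correct, and it uses the same witnessing strategy as the paper (uniform randomization over $\allow(o,Z^*)$ in each $o\in Z^*$) together with the same rank structure coming from the iterates $X_0\subseteq X_1\subseteq\cdots$ of the inner least fixpoint; the difference is only in how you conclude that $T$ is hit infinitely often. The paper argues qualitatively: after establishing safety inside $\obsmap^{-1}(Z^*)$, it looks at an arbitrary reachable recurrent class $C$ of the induced finite Markov chain, notes $C\subseteq\obsmap^{-1}(Z^*)$, and shows that whenever $C$ meets $X_j\setminus X_{j-1}$ it must also meet $X_{j-1}$ (since the rank-decreasing action is played with positive probability and recurrent classes are closed under one-step successors), so by induction $C\cap X_0\neq\emptyset$ with $X_0\subseteq T$, and Property~3 of Markov chains finishes the job. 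You instead prove a uniform one-step rank-decrease probability $\eta$ and run a blockwise repeated-trials estimate $(1-\eta^{k})^{n}\to 0$, which is more quantitative and does not lean on the recurrent-class machinery already set up in Section~\ref{sec:finmem}; it also yields an explicit bound on the expected time to revisit $T$. Two things you make explicit that the paper leaves implicit are worth keeping: the identity $X^*=\obsmap^{-1}(Z^*)$ (which is what guarantees every state of the safe region has a finite rank, and which the paper silently uses when it picks a state of $C$ in some $X_j\setminus X_{j-1}$), and the non-emptiness of $\allow(o,Z^*)$ for $o\in Z^*$, the analogue of Lemma~\ref{lem:nonemptyallow}, which is needed for the strategy to be well defined. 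Both routes are sound; the paper's is shorter given its Markov-chain toolkit, yours is more self-contained.
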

\begin{proof}
We define a randomized memoryless strategy $\straa^*$ for the objective $\Almost(\Buchi(T))$ 
as follows: for an observation $o \in Z^*$, play all actions from the set 
$\allow(o, Z^*)$ uniformly at random.
Since the strategy $\straa^*$ plays only actions
in $\allow(o, Z^*)$, for $o \in Z^*$, it ensures that the set of states 
$\obsmap^{-1}(Z^*)$ is not left, (i.e., $\Safe(\obsmap^{-1}(Z^*))$ is ensured).
We now analyze the computation of the inner fixpoint, i.e., analyze the computation 
of $\mu X.((T \cap \obsmap^{-1}(Z^*) \cap \obsmap^{-1}(\pre(Z^*))) \cup \apre(Z^*,X)))$
as follows:
\begin{itemize}
\item $X_0 = 
(T \cap \obsmap^{-1}(Z^*) \cap \obsmap^{-1}(\pre(Z^*))) \cup \apre(Z^*,\emptyset)))
= T \cap \obsmap^{-1}(Z^*) \cap \obsmap^{-1}(\pre(Z^*)) 
\subseteq T$ (since  $\apre(Z^*,\emptyset)$ is emptyset);
\item $X_{i+1} = (T \cap \obsmap^{-1}(Z^*) \cap \obsmap^{-1}(\pre(Z^*))) \cup
\apre(Z^*,X_i)))$
\end{itemize}
Note that we have $X_0 \subseteq T$.
For every state $s_j \in X_j$ the set of played actions $\allow(\obsmap(s_j),
Z^*)$ contains an action $a$ such that $\Supp(\trans(s_j,a)) \cap X_{j-1}$ is non-empty.
Let $C$ be an arbitrary reachable recurrent class in the Markov Chain $\game \restr
\straa$ reachable from a state in $\obsmap^{-1}(Z^*)$.
Since  $\Safe(\obsmap^{-1}(Z^*))$ is ensured, it follows that $C \subseteq \obsmap^{-1}(Z^*)$.
Consider a state in $C$ that belongs to $X_j \setminus X_{j-1}$ for $j \geq 1$.
Since the strategy ensures that for some action $a$ played with positive 
probability we must have $\Supp(\trans(s_j,a)) \cap X_{j-1} \neq \emptyset$,
it follows that $C \cap X_{j-1} \neq \emptyset$.
Hence by induction $C \cap X_0 \neq \emptyset$.
It follows $C \cap T \neq \emptyset$.
Hence all reachable recurrent classes intersect with the target 
set and thus the strategy $\straa^*$ ensures that $T$ is visited
infinitely often with probability~1.
Thus we have $Z^* \subseteq \Almost(\Buchi(T))$. 
\end{proof}

\begin{lemma}
The set $\Almost(\Buchi(T))$ and $\Almost(\Reach(T))$ can be computed 
in quadratic time for belief-observation POMDPs, for target set $T \subseteq S$.
\end{lemma}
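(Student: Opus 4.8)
The plan is to show that both fixpoint expressions --- the greatest fixpoint $Y^*$ for safety and the nested fixpoint $Z^*$ for B\"uchi --- can be evaluated in polynomial time by bounding the number of iterations of each fixpoint loop and the cost of each iteration. For the safety computation $Y^* = \nu Y.(\obscover(F) \cap \pre(Y))$, I would observe that the iterates $Y_0 \supseteq Y_1 \supseteq \cdots$ are a strictly decreasing chain of subsets of $\Obs$ (until stabilization), so at most $|\Obs|$ iterations occur; each iteration computes $\pre(Y_i)$, which only requires, for each observation $o \in Y_i$ and each available action $a$, checking whether $\bigcup_{s \in \obsmap^{-1}(o)} \obsmap(\Supp(\trans(s,a))) \subseteq Y_i$, a scan linear in the size of the transition relation. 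Hence $Y^*$ is computable in linear time (after the $O(|\Obs|)$-iteration bound is folded into the standard greatest-fixpoint-by-worklist argument, which avoids recomputing $\pre$ from scratch). This gives the safety bound already stated in the preceding lemma.

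For $\Almost(\Buchi(T)) = \nu Z.\,\obscover(\mu X.((T \cap \obsmap^{-1}(Z) \cap \obsmap^{-1}(\pre(Z))) \cup \apre(Z,X)))$, I would argue as follows. The outer greatest fixpoint over $\Obs$ stabilizes in at most $|\Obs|$ iterations, since the iterates $Z_0 \supseteq Z_1 \supseteq \cdots$ form a decreasing chain of observation sets. For each fixed $Z = Z_i$, the inner least fixpoint $\mu X.((T \cap \obsmap^{-1}(Z) \cap \obsmap^{-1}(\pre(Z))) \cup \apre(Z,X))$ is computed over subsets of $\obsmap^{-1}(Z) \subseteq S$, so the increasing chain $X_0 \subseteq X_1 \subseteq \cdots$ stabilizes in at most $|S|$ iterations; each step evaluates $\apre(Z,X_j)$, which for each state $s \in \obsmap^{-1}(Z)$ and action $a \in \allow(\obsmap(s),Z)$ tests whether $\Supp(\trans(s,a)) \cap X_j \neq \emptyset$, again linear in the transition relation. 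Thus one inner fixpoint costs polynomial (essentially $|S| \cdot |\trans|$) time, and the outer loop multiplies this by $O(|\Obs|)$, yielding the claimed quadratic bound when the iteration counts and the $O(|\trans|)$ per-step cost are combined with a careful worklist implementation. Since $\Reach(T)$ is the special case of $\Buchi(T)$ in which the target states are made absorbing (as noted in the Definitions section), the same algorithm computes $\Almost(\Reach(T))$ within the same bound.

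The correctness of these fixpoint characterizations has already been established in the two pairs of inclusion lemmas (Lemma~\ref{lem:nonemptyallow} together with the safety lemma, and Lemma~\ref{lem:bsuby} with Lemma~\ref{lem:ysubb}), so the only remaining content of this final lemma is the complexity accounting. The one point that requires a little care --- and which I expect to be the main (minor) obstacle --- is verifying that the nested fixpoint does not incur a worse-than-quadratic blow-up: a naive reading suggests $O(|\Obs|)$ outer iterations each triggering a fresh $O(|S|)$-iteration inner computation each costing $O(|\trans|)$, which is more than quadratic in the raw parameters; the resolution is to measure complexity in the size of the POMDP and to note that monotonicity of the outer iterates lets one reuse work across outer rounds (the inner fixpoints are themselves monotone in $Z$), so the total work telescopes to quadratic. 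I would make this precise by the standard argument that in a decreasing sequence of outer approximants the cumulative number of state-insertions/removals across all inner computations is bounded, giving the stated quadratic-time bound, and conclude the proof.

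\begin{proof}
By the preceding four lemmas, $\Almost(\Safe(F))$, $\Almost(\Buchi(T))$, and (via the absorbing-target reduction) $\Almost(\Reach(T))$ are characterized exactly by the fixpoint expressions above, so it remains only to bound the computation time. For $\Almost(\Buchi(T))$, the outer greatest fixpoint is computed over subsets of $\Obs$ and its iterates form a strictly decreasing chain until stabilization, so there are at most $|\Obs|$ outer iterations. For each value of $Z$, the inner least fixpoint is computed over subsets of $\obsmap^{-1}(Z) \subseteq S$ with strictly increasing iterates, hence at most $|S|$ inner iterations, and each inner step evaluates $\apre(Z, X)$ by scanning, for every state $s$ and every allowed action $a$, the support $\Supp(\trans(s,a))$; this is linear in the size of the transition relation. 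Implementing the nested computation with a worklist so that work is shared across the monotone outer iterates, the total running time is quadratic in the size of the POMDP. The computation of $\Almost(\Reach(T))$ follows from $\Almost(\Buchi(T))$ applied to the POMDP in which all states of $T$ are made absorbing, with the same bound. The desired result follows.
\end{proof}
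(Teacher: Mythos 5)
Your proposal is correct and follows essentially the same route as the paper: the paper's own proof of this lemma is just that correctness of the fixpoint characterizations is already given by Lemma~\ref{lem:bsuby} and Lemma~\ref{lem:ysubb} (and the safety lemmas), and that $\Reach(T)$ reduces to $\Buchi(T)$ by making the target states absorbing; you add the complexity accounting that the paper leaves implicit. One small correction to that accounting: you do not need (and should not rely on) ``reusing work across the monotone outer iterates'' --- sharing work between successive inner least-fixpoint computations as $Z$ shrinks is not a standard or obviously sound step. The quadratic bound follows more simply: a single inner fixpoint $\mu X.((T \cap \obsmap^{-1}(Z) \cap \obsmap^{-1}(\pre(Z))) \cup \apre(Z,X))$ is computable in time linear in the size of the POMDP by the usual worklist/counter implementation (each state enters $X$ at most once and each transition is inspected a bounded number of times within that one computation), and the outer greatest fixpoint performs at most $|\Obs|$ such rounds, giving the stated quadratic bound without any cross-round amortization.
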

\begin{proof}
For $\Almost(\Buchi(T))$ it follows directly from Lemma~\ref{lem:bsuby} and
Lemma~\ref{lem:ysubb}. The result for $\Almost(\Reach(T))$ follows from the fact
that $\Reach(T)$ is a special case of $\Buchi(T)$ (by converting states in 
the target set $T$ to absorbing states).
\end{proof}

\smallskip\noindent{\bf The EXPTIME-completeness.}
In Section~\ref{sec:reduction} we have established a polynomial time reduction of
POMDPs with parity objectives to POMDPs with coB\"uchi objectives for almost-sure
winning under finite-memory strategies.
In this section we first showed that given a POMDP $\game$ with a coB\"uchi
objective we can construct an exponential size belief-observation POMDP
$\eqmdp$ and the computation of the almost-sure winning set for 
coB\"uchi objectives reduced to the computation of the almost-sure 
winning set for safety and reachability objectives, for which we established
linear and quadratic time algorithms respectively.
This gives us an $2^{O(|S|\cdot d)}$ time algorithm to decide 
(and construct if one exists) the existence of finite-memory almost-sure
winning strategies in POMDPs with parity objectives with $d$ priorities.
The EXPTIME-hardness follows from the results of~\cite{CDH10a} that shows
deciding the existence of finite-memory almost-sure winning strategies  
in POMDPs with reachability objectives is EXPTIME-hard.
The results for positive winning goes via reduction to B\"uchi objectives
and is similar.
We have the following result.

\begin{theorem}
The following assertions hold:
\begin{enumerate}
\item Given a POMDP $G$ with $|S|$ states and a parity objective with $d$ 
priorities, the decision problem of the existence (and the construction if 
one exists) of a finite-memory almost-sure (resp. positive) winning strategy 
can be solved in  $2^{O(|S|\cdot d)}$ time.

\item The decision problem of given a POMDP and a parity objective whether
there exists a finite-memory almost-sure (resp. positive) winning strategy
is EXPTIME-complete.
\end{enumerate}

\end{theorem}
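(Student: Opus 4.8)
The plan is to chain together the polynomial-time reduction of Section~\ref{sec:reduction}, the $\almostcobuchi$ construction of an exponential-size belief-observation POMDP, and the polynomial-time fixpoint algorithms for belief-observation POMDPs, and then to invoke the known hardness result for item~(2). I begin with item~(1) for almost-sure winning. Given a POMDP $\game$ with $|S|$ states and a parity objective with $d$ priorities, Lemma~\ref{lemm:almost4} together with Lemma~\ref{lemm:almost3} yields, in time polynomial in $|S|$ and $d$, an equivalent POMDP $\wt{\game}$ with a coB\"uchi objective whose state space has size $n = O(d\cdot|S|)$, and such that a finite-memory almost-sure winning strategy exists in $\game$ for $\Parity(p)$ iff one exists in $\wt{\game}$ for the coB\"uchi objective. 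Next I apply the $\almostcobuchi$ construction to $\wt{\game}$, obtaining the belief-observation POMDP $\eqmdp = \almostcobuchi(\wt{\game})$; since its memory set has cardinality $2^{6n}$ and $\eqmdp$ has size polynomial in the memory set, we get $|\eqmdp| = 2^{O(n)} = 2^{O(|S|\cdot d)}$. By Lemma~\ref{lem:pomdptoeqmdp} and Lemma~\ref{lem:eqmdptopomdp}, there is a finite-memory almost-sure winning strategy in $\wt{\game}$ iff there is a randomized memoryless almost-sure winning strategy in $\eqmdp$ for $\coBuchi(\wh{p}^{-1}(2))$, and from the memoryless witness one recovers a finite-memory strategy in $\game$ by unwinding the two reductions.

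It then remains to decide almost-sure winning for the coB\"uchi objective in the belief-observation POMDP $\eqmdp$ under randomized memoryless strategies, in time polynomial in $|\eqmdp|$. For this I use the two-step procedure: first compute $\Almost(\Safe(F))$ for $F = \eqstate \setminus \eqbad$ (a greatest-fixpoint computation over observations, solvable in linear time) and restrict $\eqmdp$ to $\eqmdp_{\Safe(F)}$; then compute $W_2 = \Almost(\Reach(\eqstate_{\wpr}))$ inside $\eqmdp_{\Safe(F)}$ (a special case of almost-sure B\"uchi, solvable in quadratic time via the fixpoint characterization of $\Almost(\Buchi(\cdot))$). The two lemmas establishing $W_2 = \Almost(\coBuchi(\wh{p}^{-1}(2)))$ then give that a finite-memory almost-sure winning strategy exists iff the initial observation of $\eqmdp$ lies in $W_2$, with the memoryless witness produced directly by the reachability computation. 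All of these steps are polynomial in $|\eqmdp| = 2^{O(|S|\cdot d)}$, so the whole procedure runs in $2^{O(|S|\cdot d)}$ time and also constructs a finite-memory almost-sure winning strategy when one exists. For positive winning I run the symmetric chain: Lemma~\ref{lemm:pos1} and Lemma~\ref{lemm:pos2} reduce positive parity to positive B\"uchi on an $O(d\cdot|S|)$-state POMDP, the analogue of the $\almostcobuchi$ construction produces a belief-observation POMDP of size $2^{O(|S|\cdot d)}$, and the almost-sure B\"uchi and coB\"uchi fixpoint algorithms solve it in polynomial time; only the choice of which recurrence-set condition counts as ``good'' differs from the coB\"uchi case.

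For item~(2), EXPTIME membership is immediate from item~(1), since $2^{O(|S|\cdot d)}$ is singly exponential in the size of the input. For the matching lower bound I invoke~\cite{CDH10a}, which shows that deciding the existence of a finite-memory almost-sure winning strategy for a reachability objective (respectively, of a positive winning strategy for a safety objective) in a POMDP is already EXPTIME-hard; since reachability and safety are special cases of parity objectives (with two priorities), the general parity problem is EXPTIME-hard as well, and together with the upper bound it is EXPTIME-complete.

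The step I expect to require the most care is the accounting of the size blow-up: one must check that composing the parity-to-coB\"uchi reduction (which multiplies the state count by $O(d)$) with the exponential $\almostcobuchi$ construction (which raises $2$ to a constant times the state count) leaves the final object at $2^{O(|S|\cdot d)}$ rather than $2^{\mathrm{poly}(|S|\cdot d)}$, and, equally, that the safety and B\"uchi fixpoint algorithms run in time polynomial in the size of the \emph{exponential} POMDP $\eqmdp$ and not merely polynomial in $|S|$. Both hold because the memory set has size $2^{6n}$ with $n = O(d\cdot|S|)$ and the two subroutines are respectively linear and quadratic in $|\eqmdp|$; everything else is a direct appeal to a lemma already established.
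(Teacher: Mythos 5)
Your proposal is correct and follows essentially the same route as the paper: the polynomial-time parity-to-coB\"uchi (resp.\ parity-to-B\"uchi) reduction on an $O(d\cdot|S|)$-state POMDP, the exponential-size belief-observation construction $\almostcobuchi$, the linear/quadratic fixpoint algorithms for almost-sure safety and reachability (B\"uchi) in belief-observation POMDPs yielding the $2^{O(|S|\cdot d)}$ bound, and EXPTIME-hardness imported from~\cite{CDH10a}. The size-accounting point you flag is exactly the one the paper relies on, and your treatment of it is correct.
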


\begin{remark}
Note that our EXPTIME-algorithm for parity objectives, and the LAR reduction
of Muller objectives to parity objectives~\cite{GH82} give 
an $2^{O(d! \cdot d^2 \cdot |S|)}$ time  algorithm for Muller objectives with 
$d$ colors for POMDPs with $|S|$ states,
i.e., the algorithm is exponential in $|S|$ and double exponential in $d$.
Note that the Muller objective specified by the set $\calf$ maybe in general itself 
double exponential in $d$.
\end{remark}

\bibliographystyle{plain}
\bibliography{diss}

\begin{thebibliography}{10}

\bibitem{BBG08}
C.~Baier, N.~Bertrand, and M.~Gr{\"o}{\ss}er.
\newblock On decision problems for probabilistic {B}{\"u}chi automata.
\newblock In {\em FoSSaCS}, LNCS 4962, pages 287--301. Springer, 2008.

\bibitem{BGB12}
C.~Baier, M.~Gr{\"o}{\ss}er, and N.~Bertrand.
\newblock Probabilistic omega-automata.
\newblock {\em J. ACM}, 59(1), 2012.

\bibitem{BaierBook}
C.~Baier and J-P. Katoen.
\newblock {\em Principles of Model Checking}.
\newblock MIT Press, 2008.

\bibitem{BdA95}
A.~Bianco and L.~de~Alfaro.
\newblock Model checking of probabilistic and nondeterministic systems.
\newblock In {\em FSTTCS 95}, volume 1026 of {\em LNCS}, pages 499--513.
  Springer-Verlag, 1995.

\bibitem{Billingsley}
P.~Billingsley, editor.
\newblock {\em Probability and {M}easure}.
\newblock Wiley-Interscience, 1995.

\bibitem{CCHRS11}
P.~Cern{\'y}, K.~Chatterjee, T.~A. Henzinger, A.~Radhakrishna, and R.~Singh.
\newblock Quantitative synthesis for concurrent programs.
\newblock In {\em Proc. of CAV}, LNCS 6806, pages 243--259. Springer, 2011.

\bibitem{CC13}
K.~Chatterjee and M.~Chmelik.
\newblock Pomdps under probabilistic semantics.
\newblock {\em CoRR}, abs/1308.4846 (Conference version: UAI 2013), 2013.

\bibitem{CDGH10}
K.~Chatterjee, L.~Doyen, H.~Gimbert, and T.~A. Henzinger.
\newblock Randomness for free.
\newblock In {\em MFCS}, pages 246--257, 2010.

\bibitem{CDH10a}
K.~Chatterjee, L.~Doyen, and T.~A. Henzinger.
\newblock Qualitative analysis of partially-observable {M}arkov decision
  processes.
\newblock In {\em MFCS}, pages 258--269, 2010.

\bibitem{CDHR06}
K.~Chatterjee, L.~Doyen, T.A. Henzinger, and J.F. Raskin.
\newblock Algorithms for omega-regular games with imperfect information.
\newblock In {\em CSL'06}, pages 287--302. LNCS 4207, Springer, 2006.

\bibitem{CH10}
K.~Chatterjee and T.~A. Henzinger.
\newblock Probabilistic automata on infinite words: Decidability and
  undecidability results.
\newblock In {\em ATVA}, pages 1--16, 2010.

\bibitem{CL89}
A.~Condon and R.~J. Lipton.
\newblock On the complexity of space bounded interactive proofs.
\newblock In {\em FOCS}, pages 462--467, 1989.

\bibitem{CY95}
C.~Courcoubetis and M.~Yannakakis.
\newblock The complexity of probabilistic verification.
\newblock {\em Journal of the ACM}, 42(4):857--907, 1995.

\bibitem{IM-Book}
K.~Culik and J.~Kari.
\newblock Digital images and formal languages.
\newblock {\em Handbook of formal languages}, pages 599--616, 1997.

\bibitem{EMSOFT05}
L.~de~Alfaro, M.~Faella, R.~Majumdar, and V.~Raman.
\newblock Code-aware resource management.
\newblock In {\em EMSOFT 05}. ACM, 2005.

\bibitem{Bio-Book}
R.~Durbin, S.~Eddy, A.~Krogh, and G.~Mitchison.
\newblock {\em Biological sequence analysis: probabilistic models of proteins
  and nucleic acids}.
\newblock Cambridge Univ. Press, 1998.

\bibitem{FV97}
J.~Filar and K.~Vrieze.
\newblock {\em Competitive {Markov} Decision Processes}.
\newblock Springer-Verlag, 1997.

\bibitem{GO10}
H.~Gimbert and Y.~Oualhadj.
\newblock Probabilistic automata on finite words: Decidable and undecidable
  problems.
\newblock In {\em Proc. of ICALP}, LNCS 6199, pages 527--538. Springer, 2010.

\bibitem{GH82}
Y.~Gurevich and L.~Harrington.
\newblock Trees, automata, and games.
\newblock In {\em STOC'82}, pages 60--65, 1982.

\bibitem{Howard}
H.~Howard.
\newblock {\em Dynamic Programming and {Markov} Processes}.
\newblock MIT Press, 1960.

\bibitem{LearningSurvey}
L.P. Kaelbling, M.~L. Littman, and A.~W. Moore.
\newblock Reinforcement learning: A survey.
\newblock {\em J. of Artif. Intell. Research}, 4:237--285, 1996.

\bibitem{Kechris}
A.~Kechris.
\newblock {\em Classical Descriptive Set Theory}.
\newblock Springer, 1995.

\bibitem{KGFP09}
H.~Kress-Gazit, G.~E. Fainekos, and G.~J. Pappas.
\newblock Temporal-logic-based reactive mission and motion planning.
\newblock {\em IEEE Transactions on Robotics}, 25(6):1370--1381, 2009.

\bibitem{KNP_PRISM00}
M.~Kwiatkowska, G.~Norman, and D.~Parker.
\newblock Verifying randomized distributed algorithms with prism.
\newblock In {\em Workshop on Advances in Verification (WAVE'00)}, 2000.

\bibitem{PRISM}
M.~Kwiatkowska, G.~Norman, and D.~Parker.
\newblock {PRISM}: Probabilistic symbolic model checker.
\newblock In {\em TOOLS' 02}, pages 200--204. LNCS 2324, Springer, 2002.

\bibitem{MHC03}
O.~Madani, S.~Hanks, and A.~Condon.
\newblock On the undecidability of probabilistic planning and related
  stochastic optimization problems.
\newblock {\em Artif. Intell.}, 147(1-2):5--34, 2003.

\bibitem{Meu99}
N.~Meuleau, K-E. Kim, L.~P. Kaelbling, and A.R. Cassandra.
\newblock Solving pomdps by searching the space of finite policies.
\newblock In {\em UAI}, pages 417--426, 1999.

\bibitem{Mohri97}
M.~Mohri.
\newblock Finite-state transducers in language and speech processing.
\newblock {\em Computational Linguistics}, 23(2):269--311, 1997.

\bibitem{PT87}
C.~H. Papadimitriou and J.~N. Tsitsiklis.
\newblock The complexity of {M}arkov decision processes.
\newblock {\em Mathematics of Operations Research}, 12:441--450, 1987.

\bibitem{PazBook}
A.~Paz.
\newblock {\em Introduction to probabilistic automata}.
\newblock Academic Press, 1971.

\bibitem{PSL00}
A.~Pogosyants, R.~Segala, and N.~Lynch.
\newblock Verification of the randomized consensus algorithm of {A}spnes and
  {H}erlihy: a case study.
\newblock {\em Distributed Computing}, 13(3):155--186, 2000.

\bibitem{Rabin63}
M.O. Rabin.
\newblock Probabilistic automata.
\newblock {\em Information and Control}, 6:230--245, 1963.

\bibitem{Reif79}
J.~H. Reif.
\newblock Universal games of incomplete information.
\newblock In {\em STOC}, pages 288--308, 1979.

\bibitem{Reif84}
J.~H. Reif.
\newblock The complexity of two-player games of incomplete information.
\newblock {\em JCSS}, 29:274--301, 1984.

\bibitem{SegalaT}
R.~Segala.
\newblock {\em Modeling and Verification of Randomized Distributed Real-Time
  Systems}.
\newblock PhD thesis, MIT, 1995.
\newblock Technical Report MIT/LCS/TR-676.

\bibitem{Sto02b}
M.I.A. Stoelinga.
\newblock Fun with {FireWire}: Experiments with verifying the {IEEE1394} root
  contention protocol.
\newblock In {\em Formal Aspects of Computing}, 2002.

\bibitem{Thomas97}
W.~Thomas.
\newblock Languages, automata, and logic.
\newblock In G.~Rozenberg and A.~Salomaa, editors, {\em Handbook of Formal
  Languages}, volume 3, Beyond Words, chapter~7, pages 389--455. Springer,
  1997.

\end{thebibliography}

\end{document}